\documentclass[letterpaper, 11pt]{article}

\usepackage{amsmath, amssymb, amsfonts, amsthm}
\usepackage{bm}
\usepackage{mathtools}
\usepackage{mathrsfs}
\usepackage{geometry}
\usepackage{setspace}
\usepackage[utf8x]{inputenc}
\usepackage{algpseudocode,algorithmicx,algorithm}
\usepackage[T1]{fontenc}
\usepackage{lmodern}
\usepackage{graphicx}
\usepackage{enumerate}
\usepackage[natural, dvipsnames, pdftex]{xcolor}
\usepackage{tikz}
\usepackage{verbatim}
\usepackage{hyperref}
\usepackage{cancel}
\usepackage[all]{xy}
\usepackage{mdframed}
\usepackage[capitalize]{cleveref}
\usepackage{todonotes}
\usepackage{algpseudocode}

\theoremstyle{plain}
\newtheorem{thm}{Theorem}[section]
\newtheorem{lemma}[thm]{Lemma}

\newtheorem{claim}[thm]{Claim}

\theoremstyle{definition}
\newtheorem{defn}[thm]{Definition}

\newtheorem{col}[thm]{Corollary}

\theoremstyle{remark}
\newtheorem{rmk}[thm]{Remark}

\newcommand{\Z}{\mathbb{Z}}

\newcommand{\F}{\mathbb{F}}

\newcommand{\bv}{\mathbf{v}}
\newcommand{\bx}{\mathbf{x}}
\newcommand{\by}{\mathbf{y}}

\newcommand{\bc}{\mathbf{c}}

\newcommand{\cC}{\mathcal{C}}

\begin{document}

\title{List Decoding of Reed--Solomon Codes and Folded Reed--Solomon Codes Over Galois Ring
\thanks{
C. Yuan is with School of Computer Science, Shanghai Jiao Tong University. (Email: \href{chen_yuan@sjtu.edu.cn}{chen\_yuan@sjtu.edu.cn}) R. Zhu is with School of Computer Science, Shanghai Jiao Tong University. (Email: \href{sjtuzrq7777@sjtu.edu.cn}{sjtuzrq7777@sjtu.edu.cn})
}}

\author{Chen Yuan, Ruiqi Zhu}
\date{}
\maketitle

\begin{abstract}
List decoding of codes can be seen as the generalization of unique decoding of codes. While list decoding over finite fields has been extensively studied, extending these results to more general algebraic structures such as Galois rings remains an important challenge. Due to recent progress in zero knowledge systems, there is a growing demand to investigate the proximity gap of codes over Galois rings~\cite{JiaLiXingYaoYuan2025PCS_GaloisRings, GLSTW23, wei2025transparent}. The proximity gap is closely related to the decoding capability of codes. It was shown~\cite{BCIKS20} that the proximity gap for RS codes over finite field can be improved to $1-\sqrt{r}$ if one consider list decoding instead of unique decoding. However, we know very little about RS codes over Galois ring which might hinder the development of zero knowledge proof system for ring-based arithmetic circuit. In this work, we first extend the list decoding procedure of Guruswami and Sudan to Reed-Solomon codes over Galois rings, which shows that RS codes with rate $r$ can be list decoded up to radius $1-\sqrt{r}$. Then, we investigate the list decoding of folded Reed-Solomon codes over Galois rings. We show that the list decoding radius of folded Reed-Solomon codes can reach the Singlton bound as its counterpart over finite field. We also extend the deterministic pruning method of~\cite{ashvinkumar2026algorithmic} to Galois rings, showing how to prune the affine free module obtained from the linear-algebraic decoder and recover the candidate codewords. Finally, we obtain an \emph{algorithmic} list-size bound of $O(1/\varepsilon^2)$ for our folded Reed--Solomon code by extending the approach of~\cite{srivastava2025improved} to Galois rings. Moreover, at the \emph{combinatorial} level, by developing the recent work of~\cite{chen2025explicit}, we show that folded Reed--Solomon codes over Galois rings satisfy the relaxed generalized Singleton bound in the average-radius sense with optimal list size $O(1/\varepsilon)$. Specifically, this result is not obtained by a straightforward extension of~\cite{chen2025explicit} from finite fields to Galois rings; instead, we develop a more delicate tree-based argument that exploits the $p$-adic congruence structure of Galois rings.
\end{abstract}

\section{Introduction}
List decoding, first introduced in~\cite{elias1957list}, provides a way to recover codewords even when the number of errors $e$ goes beyond half of the minimum distance $d$. Specifically, if the number of errors $e$ in a received word exceeds $\lfloor (d-1)/2\rfloor$, it is possible that more than one codeword that is within \emph{(Hamming)} distance $e$ from the received word. In this case, a list decoder outputs all codewords that fall within this \emph{Hamming} ball of radius $e$. 

Reed-Solomon codes (RS codes for short), were first proposed in 1960~\cite{reed1960polynomial}. RS codes belong to a family of Since RS codes belong to the family of the maximum distance separable (MDS) codes. RS codes also have very efficient encoding and decoding algorithms~\cite{berlekamp2015algebraic} and~\cite{sugiyama1975method}. Let $\rho$ be the decoding radius and $R$ be the rate of a code. Sudan~\cite{sudan1997decoding} introduced the first explicit list decoding algorithm for RS codes that can decoded RS codes beyond unique decoding radius. Subsequently, Guruswami and Sudan~\cite{guruswami1998improved} refined that algorithm to achieve Johnson bound for any rate. Furthermore, their method can also be extended to the decoding of algebraic geometry codes which initiated an intensive line of research that produced numerous results in the field of list decoding~\cite{koetter2003algebraic,pellikaan2004list,roth2002efficient,tal2003list}. Understanding the limits of list-decoding and list-recovery of RS codes is of prime interest in coding theory and has attracted a lot of attention over the past decades. In a recent breakthrough, Shangguan and Tamo proved that~\cite{ST20}, the random RS codes can approach the generalized Singlton bound for list size $L=2,3$ which is far beyond the Johnson bound. Brakensiek, Gopi and Makam~\cite{brakensiek2023generic} further showed that such results hold for any list size. We note that If we relax the generalized Singlton bound with $\epsilon$ gap, then the field size can be optimized to $O(\frac{n}{\epsilon})$~\cite{guo2023randomly,alrabiah2024randomly}. We note that all these results about RS codes beyond Johnson bound is combinatorial which means there is no explicit algorithm to construct such codes and also lacks of no efficient encoding and decoding algorithm.   

To explicitly decodes code up to Singleton bound, we need to deviate from RS codes. Building upon the prior work of~\cite{parvaresh2005correcting}, Guruswami and Rudra~\cite{guruswami2008explicit} presented the first explicit construction of codes called folded Reed-Solomon codes (FRS codes) with list decoding radius approaching Singleton bound. Subsequent works improved the decoding algorithms and the corresponding list-size bounds~\cite{guruswami2011linear,vadhan2012pseudorandomness,guruswami2011linear,dvir2012subspace}. Kopparty, Ron-Zewi, Saraf and Wooters~\cite{kopparty2023improved} managed to bring down the list size of FRS codes to constant $(\frac{1}{\varepsilon})^{O(\frac{1}{\varepsilon})}$. Srivastava~\cite{srivastava2025improved} showed explicit folded RS codes with rate $R$ that can be list decoded up to radius $1-R-\varepsilon$ with lists of size $O(\frac{1}{\varepsilon^2})$. More recently, Chen and Zhang~\cite{chen2025explicit} proved a relaxed generalized Singleton bound with optimal list size $O(\frac{1}{\varepsilon})$ which fully resolvs a long-standing open problem proposed by Guruswami and Rudra. A recent work of Ashvinkumar, Habib and Srivastava~\cite{ashvinkumar2026algorithmic} developed deterministic pruning algorithms for folded Reed--Solomon codes over finite fields. Their pruning procedure starts from the affine subspace produced by the linear-algebraic decoder and filters it to the exact list.

Despite of so many progress made in the list decoding of codes over finite fields, there are very few works considering the list decoding over rings. One reason is due to that codes over finite field is considered to be superior to codes over rings. Moreover, there is very few applications for codes over rings. However, there is a trend to design good codes over rings due to recent progress of zero knowledge proof.
An efficient zero knowledge proof system such as SNARKs requires a codes with large decoding radius whether unique decoding or list decoding.  
There are zero knowledge proof systems~\cite{huang2025sublinear,ChiesaFioreMicali2023Rinocchio,JiaLiXingYaoYuan2025PCS_GaloisRings} defined over rings which can handle the arithmetic circuit over $\Z_{2^k}$ without expensive translations of statement from finite field. A code with large decoding radius indicates a large proximity gap which is crucial to the analysis of soundness error for the zero knowledge proof system. It was shown that the proximity gap can be improved from $\frac{1-r}{2}$ to $1-\sqrt{r}$ if we consider the list decoding instead of unique decoding for RS codes over finite fields~\cite{BCIKS20}. However, when migrated to RS codes over Galois ring, the state-of-the-art result is a proximity gap $\frac{1-r}{2}$~\cite{JiaLiXingYaoYuan2025PCS_GaloisRings}. Thus, to improve the performance of zero knowledge system over rings, it is of great interest to investigate the codes over rings.

In this paper, we generalize most of the state-of-the-art techniques about the list-decodable codes to Galois rings including the celebrated Guruswami-Sudan list decoding algorithm, the list decoding algorithm of Folded Reed-Solomon codes and its improved list size analysis. In addition, we extend the deterministic pruning approach of~\cite{ashvinkumar2026algorithmic} to Galois rings by lifting candidates through the quotient rings $GR(p^t,\ell)$ and reducing each lifting step to a finite-field pruning problem over the residue field.

\subsection{Related works}
\paragraph{Decoding over rings.}While most of these advances have been developed over finite fields, recent research has highlighted the importance of extending coding theory to more general algebraic structures such as rings. Specifically, Galois rings provide a rich algebraic framework that has found applications in networking and zero-knowledge proofs~\cite{renner2021low,wei2025transparent,lin2024more}.

Unique decoding of codes over Galois rings has been studied in several works. For example, Babu and Zimmermann~\cite{babu2001decoding} gave decoding methods for linear codes over Galois rings. In the list-decoding setting, Armand~\cite{armand2005list} showed that the Guruswami--Sudan procedure can be used to decode generalized Reed--Solomon codes over commutative rings with identity, and later improved the list decoding of generalized Reed--Solomon and alternant codes over Galois rings~\cite{armand2005improved}. The latter work gives a two-stage decoder based on Guruswami--Sudan decoding and studies the probability of successful decoding beyond the Guruswami--Sudan radius, but does not analyze the resulting list size.

Quintin~\cite{quintin2012algorithms} further studied algorithmic aspects of Guruswami--Sudan list decoding over finite rings, including Galois rings, with an emphasis on complexity and lifting-based decoding methods. Our work differs from these prior works in two aspects. First, we extend the Guruswami--Sudan list-decoding algorithm for Reed--Solomon codes over Galois rings and obtain the Johnson-bound list-size guarantee. Second, we study folded Reed--Solomon codes over Galois rings, analyze the free-module structure of the candidate space, and prove deterministic pruning and list-size bounds for decoding up to radius $1-R-\varepsilon$, approaching the list-decoding capacity.

\paragraph{Applications to Zero Knowledge Proof.} Exploring codes over Galois ring have direct implications for zero-knowledge proof (ZKP) systems, particularly succinct non-interactive arguments of knowledge (SNARKs) and scalable transparent arguments of knowledge (STARKs). ZKP systems are cryptographic protocols that enable a prover to convince a verifier of the validity of a statement without revealing any information beyond its truth. In recent years, the design of efficient ZKPs—particularly SNARKs and STARKs—has become deeply connected to coding theory~\cite{ben2016interactive,ben2017scalable,arnon2024stir,zhang2024fast,chiesa2020fractal,rothblum2013interactive}. In these systems, Reed–Solomon (RS) codes and its variants serve as the mathematical foundation for low-degree testing and proximity proofs, which are essential for ensuring soundness and succinctness. By treating polynomial evaluations as RS codewords, the problem of verifying the low-degree property of a function can be reduced to test its proximity to a codeword. This algebraic connection underpins many modern proof systems, including FRI-based STARKs and code-based polynomial commitment schemes~\cite{ben2018fast}.

Despite the rapid development of zero knownedge proof, there still remains a gap between theoretical studies and practical usage. For example, most SNARKs focus on the field arithmetic, which means that statements are modeled as arithmetic circuits over a finite field. While for real-life applications, there is a growing demands for statements represented by ring arithmetic. A direct solution is to emulate the binary operations as the field operations. However, this would introduce a significant overhead. Thus, it is necessary to design the zero knowlege proof system over rings. The Rinocchio protocol~\cite{GNS23} was the first complete SNARKs protocols designed for ring-based arithmetic circuits. et al.,~\cite{JiaLiXingYaoYuan2025PCS_GaloisRings} noted that the Rinocchio protocol follows the paradigm of linear probabilistically checkable proofs (linear PCPs) which has some downside such as large prover computation, designated-verifier and trusted setups. Thus, they proposed a polynomial commitment scheme based on RS codes over Galois ring. Combined with polynomial interactive oracle proofs, they obtained a publicly verifiable SNARKs over $\mathbf{Z}_{2^k}$. Recently, Wei, Zhang and Deng~\cite{wei2025transparent} proposed transparent SNARKS over Galois ring which extend Brakedown~\cite{GLSTW23} commitment scheme to Galois rings. We note that the RS codes over Galois ring is the key ingredient of polynomial commitment scheme in~\cite{JiaLiXingYaoYuan2025PCS_GaloisRings,GLSTW23}. Thus, it is worth exploring the performance of codes over Galois ring.

We note that extending the optimal finite-field list-size bound of~\cite{chen2025explicit} to Galois rings is more delicate than extending the linear-algebraic framework of~\cite{srivastava2025improved}. A naive attempt to construct a ring-valued geometric hypergraph leads to technical gaps. Our approach avoids this issue by organizing a bad list through a $p$-adic congruence tree, passing each effective internal node to a residual folded Reed--Solomon instance over the residue field, and then applying the finite-field contradiction lemma locally before summing the resulting inequalities globally.

\subsection{Our Contributions}

As mentioned above, we obtained list decoding algorithms for RS codes and FRS codes over Galois ring. We start with the list decoding algorithm for RS codes.

\paragraph{List Decoding Algorithm for RS codes.}

For RS codes, we first exploit the property of unique quasi-prime factorization in Galois rings to characterize the explicit form of the linear factors of polynomials. For a polynomial $f(x)\in GR(p^a,\ell)[x]$, if $p\nmid f(x)$ and its reduction $\bmod p$ can be factored in the residue field $\F_{p^\ell}[x]$ as $\overline{f(x)}=(x-a_1)^{\ell_1}(x-a_2)^{\ell_2}\ldots (x-a_s)^{\ell_s}\overline{g(x)}$, where $g(x)$ has no linear factor over $\F_{p^\ell}$. Then we lift each factor $(x-a_i)$ and $\overline{g(x)}$ to $GR(p^a,\ell)[x]$ as:
\[
f(x)=(x-c_1)^{\ell_1}(x-c_2)^{\ell_2}\ldots (x-c_s)^{\ell_s}g(x)
\]
where $c_1,c_2,\ldots c_s\in GR(p^a,\ell)$ and $g(x)$ has no linear factor in $GR(p^a,\ell)[x]$. Thus, all linear factors of $f(x)$ over $GR(p^a,\ell)$ has the form:
\[
x-\gamma,\text{ where } \gamma=c_i+h\cdot p^{\lceil \frac{a}{\ell_i}\rceil},h\in GR(p^a,\ell),1\leq i\leq s.
\]
 We then generalize the Guruswami-Sudan list decoding algorithm of Reed-Solomon codes to Galois rings. Consider a Reed-Solomon code $\cC\subseteq GR(p^a,\ell)^n$ of length $n$ and dimension $k$. Let $\{\alpha_1,\alpha_2,\ldots,\alpha_n\}$ be the evaluation set, let $e$ denote the number of error positions, and let $(y_1,y_2,\ldots,y_n)\in GR(p^a,\ell)^n$ be the received word. Our list decoding algorithm first find a non-zero polynomial $Q(X,Y)$ with $(1,k-1)$ degree at most $n-k$, such that $Q(\alpha_i,y_i)=0$ with multiplicity $r$ for every $1\leq i\leq n$. Next, we factorize $Q(X,Y)$ with respect to $Y$ into linear factors $Y-f(X)$ and list $f(X)$ as the candidate codeword. We show that the above algorithm can efficiently list-decode Reed-Solomon codes up to the Johnson bound.

\paragraph{List Decoding Algorithm for FRS codes.}
We generalize the list decoding of FRS codes over finite field to that over Galois rings. Assume that $0\leq t\leq N$, $D\geq 1$ and the received word:
\[
\mathbf{y} =
\begin{pmatrix}
y_0 & y_m & \cdots & y_{n-m} \\
\vdots & \vdots & \ddots & \vdots \\
y_{m-1} & y_{2m-1} & \cdots & y_{n-1}
\end{pmatrix}
\in GR(p^a,\ell)^{m \times N}, \quad N = \frac{n}{m}
\]
We first compute non-zero polynomial $Q(X,Y_1,\ldots,Y_s)$ as follows:
\[
Q(X,Y_1,\ldots Y_s)=A_0(X)+A_1(X)Y_1+\ldots +A_s(X)Y_s,
\]
where deg$[A_0]\leq D+k-1$ and deg$[A_i]\leq D$ for every $1\leq i\leq s$, such that for all $0\leq i\leq N$ and $0\leq j\leq m-s$,
\[
Q(\gamma^{im+j}, y_{im+j}, \ldots, y_{im+j+s-1}) = 0.
\]
We compute $\ell$ such that $X^\ell$ is the largest common power of $X$ among $A_0(X),\ldots,A_s(X)$ and for every $0\leq i\leq s$, $A_i(X)\leftarrow \frac{A_i(X)}{X^\ell}$. We write $A_i(X)$ as $A_i(X)=\sum_{j=0}^{D+k-1}a_{ij}X^j$ for every $0\leq i\leq s$ and rewrite the equation:
 \begin{align*}
0 = C(X)&= Q\left(X, f(X), f(\gamma X), \ldots, f(\gamma^{s-1}X)\right)\\
&=\sum_{j=0}^{D+k-1}a_{0,j}X^j+\sum_{i=1}^s\bigg(\sum_{j=0}^{D}a_{i,j}X^j\bigg)\bigg(\sum_{j=0}^{k-1}f_j\gamma^{(i-1)j}X^j\bigg)
\end{align*}
If $p\nmid Q(X,Y_1,\ldots, Y_s)$, let $h$ be the largest integer such that $p$ divides the common divisor of $\{a_{i,j}:0\leq i\leq s, 0\leq j<h \}$. This means $p$ is not the common divisor of $a_{0,h},\ldots,a_{s,h}$ and let $B(X)=a_{1,h}+a_{2,h}X+\ldots+a_{s,h}X^{s-1}$. To find a suitable polynomial $f(x)$, we consider the solutions of the linear system formed by the coefficients of $X^r$ for $r\geq h$:
\[
a_{0,r}+\sum_{i=1}^{s}\bigg(f_i(\sum_{j=1}^s a_{j,r-i}\gamma^{(j-1)i})\bigg)=0.
\]
Since $B(X)$ has degree at most $s-1$, it has at most $s-1$ units of the form $\gamma^i$ as its roots. By fixing at most $s-1$ $f_i$'s, we can obtain a unique solution for the coefficients of $f(X)$ satisfying the required conditions. Moreover, the number of such assignments is at most $p^{a\ell (s-1)}$, which implies that all such coefficients of polynomials $f(X)$ lie in a free module of rank at most $s-1$.

Since $p \mid A_0(X), A_1(X),\ldots,A_s(X)$, this implies that the received codeword is zero when modulo $p$. Since the received codeword module $p$ corresponds to a valid codeword over the field $\F_{p^\ell}$, all candidate codewords become uniquely determined after this reduction, namely the zero codeword. Thus, we can divide the received codeword by $p$ and focus on the case of $GR(p^{a-1},\ell)$. The same argument can be applied to the case $p^{i} \mid A_0(X), A_1(X),\ldots,A_s(X)$
and then we conclude that the linear system of coefficients has at most $p^{(a-i)\ell (s-1)}$ solutions and these solutions lie in a free module of rank $s-1$.

\paragraph{Deterministic Pruning for Folded Reed--Solomon codes.}
The interpolation and root-finding procedure above shows that all candidate codewords lie in an affine free module $A=\mathbf{c}_0+U\subseteq \cC_a$ of rank at most $s-1$. It remains to prune this module and output exactly the codewords in the Hamming ball. We extend the deterministic pruning method of~\cite{ashvinkumar2026algorithmic} to Galois rings. The main idea is to work through the quotient rings $GR(p^t,\ell)$, $1\le t\le a$, layer by layer. Given a close codeword $\mathbf{c}_t$ modulo $p^t$, all of its lifts inside $A_{t+1}$ can be written uniquely as $\widehat{\mathbf{c}}_t+p^t \mathbf{w}^{(t+1)}$ for $\mathbf{w}\in U_1$ where $U_1=U \bmod p$ is a linear subspace over the residue field $\mathbb F_q$. For each partial codeword $\mathbf{c}_t$, we construct a residual received word over $\mathbb F_q$, apply the finite-field deterministic pruning algorithm to $U_1$, and then perform an explicit distance check after lifting back to $GR(p^{t+1},\ell)$. This gives a deterministic pruning algorithm that outputs exactly $A\cap L(\mathbf{y},\rho)$, and hence the complete decoding list.

For the Folded Reed--Solomon decoding radius $\rho_s=\frac{s}{s+1}\left(1-\frac{mR}{m-s+1}\right),$ using the list-size bound for the quotient codes, the number of active candidates at every layer is at most $s$. Therefore, the pruning algorithm runs in time
\[
O(as)\cdot \left(\frac{2}{\eta}\right)^{O(2^s)}Nm\cdot \operatorname{poly}(a,s,\log q),
\]
where $\eta=\delta-\rho_s$. In particular, by taking $s=\Theta(1/\varepsilon)$ and $m=\Theta(1/\varepsilon^2)$, we obtain exact deterministic pruning up to radius $1-R-\varepsilon$ in time
\[
a\cdot\left(\frac{1}{\varepsilon}\right)^{O(2^{O(1/\varepsilon)})} Nm\cdot \operatorname{poly}(a,\log q,1/\varepsilon).
\]

\paragraph{Improved List Size for Folded Reed-Solomon Codes.}
We improve our list size by extending the recent progress in folded RS codes to Galois ring. 
Although the state-of-the-art result about the list decoding of folded RS codes is due to ~\cite{chen2025explicit}, we do not know how to generalize their results to Galois ring. Instead, we prove a tighter bound on the list size $O(\frac{1}{\epsilon^2})$ by extending the approach in $~\cite{srivastava2025improved}$. 
Let $\mathcal{H}$ be a free module of $GR(p^a,\ell)[X]^{<Rn}$ with rank $s$, i.e. there exists polynomials $h_0,h_1,\ldots,h_s$ such that
\[
\mathcal{H}=\left\{h_0+\sum_{j=1}^{s}\alpha_{j}h_{j}:\forall j\in [s],\alpha_{j}\in GR(p^a,\ell) \right\},
\]
where the set of polynomials $\{h_1,h_2,\ldots,h_s\}$ is linearly independent over $GR(p^a,\ell)$. The condition that a polynomial $h=h_0+\sum_{j=1}^{s}\alpha_jh_j$ agrees with any polynomial $y$ on position $i\in [N]$ after folding can be written as a linear system:
\[
\left[
\begin{array}{cccc}
h_1(\gamma^{(i-1)m}) & h_2(\gamma^{(i-1)m}) & \cdots & h_s(\gamma^{(i-1)m}) \\
h_1(\gamma^{(i-1)m+1}) & h_2(\gamma^{(i-1)m+1}) & \cdots & h_s(\gamma^{(i-1)m+1}) \\
\vdots & \vdots & \ddots & \vdots \\
h_1(\gamma^{(i-1)m+m-1}) & h_2(\gamma^{(i-1)m+m-1}) & \cdots & h_s(\gamma^{(i-1)m+m-1})
\end{array}
\right]
\left[
\begin{array}{c}
\alpha_1 \\
\alpha_2 \\
\vdots \\
\alpha_s
\end{array}
\right]
=
\left[
\begin{array}{c}
(y - h_0)(\gamma^{(i-1)m}) \\
(y - h_0)(\gamma^{(i-1)m+1}) \\
\vdots \\
(y - h_0)(\gamma^{(i-1)m+m-1})
\end{array}
\right]
\]
Let us call the $m\times s$ matrix appearing above as $A_i$ for $i\in [N]$, and denote $r_i=rank_M(A_i)$. Using the equivalence condition for the linear independence of the polynomials $f_1,\ldots f_s\in GR(p^a,\ell)[X]$ over $GR(p^a,\ell)$, together with the constraint on the number of roots, it follows that the rank of the matrix satisfies a certain inequality:
\[
\sum_{i=1}^{N}(s-r_i)\leq \frac{s\cdot Rn}{m-s+1}.
\]
We denote $\mathcal{H}_y = \mathcal{H} \cap \mathcal{L} \left( \vec{y}, \frac{b}{b+1} \cdot \left( 1 - \frac{m}{m-b+1} \cdot R \right) \right)$, and $S_h$ be the agreement set between $y$ and $h$ (over all of $[N]$). Utilizing the lower bound on the size of agreement sets,
\[
\left( \frac{1}{b+1} + \frac{bR}{b +1} \cdot \frac{m}{m-b+1} \right) N |\mathcal{H}_y| \leq \sum_{h \in \mathcal{H}_y} |S_h|.
\]
The above rank inequality yields the upper bound of $\sum_{h \in \mathcal{H}_y} |S_h|$:
\[
\sum_{h \in \mathcal{H}_y} |S_h|\leq |E| \cdot |\mathcal{H}_y| + N \left( 1 - e + (b - 1)s \left( \frac{m}{m - s + 1} R - e \right) \right)
\]
Thus, by simplifying these inequalities we conclude that:
\[
|\mathcal{H}_y|<(b-1)s+1.
\]

We further show that the optimal finite-field list-size phenomenon for folded Reed--Solomon codes also extends to Galois rings at the combinatorial level. More precisely, for every integer $1 \le L \le m$, we prove that the $m$-folded Reed--Solomon code over $\mathrm{GR}(p^{a},\ell)$ is
\[
\left(\frac{L}{L+1}\left(1-\frac{mR}{m-L+1}\right),L\right)
\]
average-radius list-decodable. Our proofs start from a bad list of size $L+1$. We define a global agreement weight $W$ and  first derive the lower bound $W \ge \frac{Lk}{m-L+1}$ from the failure of average-radius list-decodability. We then organize the bad list  according to its $p$-adic congruence structure by constructing a rooted congruence tree. 

For each effective internal node $C$ of this tree, we pass to a residual folded Reed--Solomon instance over the residue field $\mathbb{F}_{p^\ell}$ and associate to it a local weight $W(C)$. Using the finite-field local contradiction lemma node-by-node, we show that
\[
W(C) < \frac{(|\mathrm{ch}(C)|-1)k}{m-|\mathrm{ch}(C)|+2}.
\]
where $\mathrm{ch}(C)$ denote the chilren of node $C$. A tree decomposition identity then yields $W=\sum_{C\in \mathrm{Int}(T)} W(C)$, while a tree-threshold inequality gives
\[
\sum_{C\in \mathrm{Int}(T)}
\frac{|\mathrm{ch}(C)|-1}{m-|\mathrm{ch}(C)|+2}
\le
\frac{L}{m-L+1}.
\]
Combining these estimates leads to the global upper bound $W<\frac{Lk}{m-L+1}$, contradicting the initial lower bound. This shows that no bad list exists. In particular, our argument is not a straightforward extension of~\cite{chen2025explicit} from finite fields to Galois rings, but instead relies on an additional tree-based analysis tailored to the $p$-adic congruence structure of Galois rings.

We emphasize that the above $O(1/\varepsilon^2)$ bound is algorithmic: it is obtained from the linear-algebraic decoding framework and the free-module structure of the candidate list. In contrast, the optimal $O(1/\varepsilon)$ bound proved later is combinatorial and is established in the average-radius sense.

\subsection{Organizations}

In this paper, we first provide a brief review of the fundamental concepts of Galois rings and coding theory in Section 2, and then present in Section 3 some results concerning the solution of linear equations over Galois rings. In Section 4, by exploiting the property of unique quasi-prime factorization in Galois rings, we generalize the method in ~\cite{guruswami1998improved} to Galois rings, thereby enabling list decoding for codes of rate $r$ up to $1-\sqrt{r}$ fraction of errors. In Section~\ref{section:5}, we generalize the list decoding framework in ~\cite{guruswami2011linear} to FRS codes over Galois rings and prove a list of polynomial size. In Section~\ref{sec:galois-frs-deterministic-pruning}, we adapt the deterministic pruning method of~\cite{ashvinkumar2026algorithmic} to Galois rings by lifting candidates through the quotient rings $GR(p^t,\ell)$ and reducing each lifting step to a finite-field pruning problem over the residue field. In section~\ref{section:6}, inspired by the approach in ~\cite{srivastava2025improved}, we develop a refined analysis that yields an algorithmic list-size bound $O(\frac{1}{\varepsilon^2})$ by bounding the intersection of the code with a free module. In section~\ref{sec:optimal-combinatorial-list-size}, we further establish an optimal combinatorial average-radius list size theorem for folded Reed--Solomon codes over Galois rings using the results in~\cite{chen2025explicit}. 

\section{Preliminaries}
\subsection{Galois Ring}
Galois ring is a finite ring with identity $1$ such that the set of its zero divisors with $0$ added forms a principal ideal $(p\cdot 1)$ for some prime number $p$. 
Let $a,\ell\geq 1$, $h(x)$ be a monic basic irreducible polynomial of degree $\ell$ in $\Z_{p^a}[x]$, then the residue class of ring $GR(p^a,\ell) = \Z_{p^a}[x]/(h(x))$ is a Galois ring and $\F$ its residue field $\F_{p^\ell}$. Let $GR(p^a,\ell)[X]$ be the polynomial ring over $GR(p^a,\ell)$. $GR(p^a,\ell)[x]^{<k}$ is the collection of polynomials of degree less than $k$ in $GR(p^a,\ell)[x]$. We denote by $GR(p^a, \ell)^n$  the collection of vectors of length $n$ over $GR(p^a, \ell)$ and $GR(p^a, \ell)^{n\times m}$ the collection of $n\times m$ matrices over  $GR(p^a, \ell)$.

There exists a nonzero element $\gamma$ in Galois ring $GR(p^a,\ell)$ such that $1, \gamma,\ldots,\gamma^{p^\ell-2}$ consists of the roots of $x^{p^\ell-1}-1$ in $GR(p^a,\ell)$. 
The irreducible polynomial $h(x)$ with root $\gamma$ is called a basic primitive polynomial in $\mathbb{Z}_{p^a}[x]$. 
There are two ways to represent an element in $GR(p^a, \ell)$. 
For  $c\in GR(p^a,\ell)$, One can write $c=c_0+c_1\gamma +...+c_{\ell-1}\gamma^{\ell-1}$ where $c_0,c_1,...,c_{\ell-1}\in \mathbb{Z}_{p^a} \}.$ 
On the other hand, we can also represent an element $c\in GR(p^a,\ell)$ as $c=b_0+b_1p+...+b_{a-1}p^{a-1}$, where $b_0,b_1,...,b_{a-1}\in \{0,1,\gamma,\gamma^2,...,\gamma^{p^\ell-2} \}$. From this representation, $c$ is a unit if and only if $b_0\neq 0$. This also implies that any element in $GR(p^a,\ell)$ is either an unit or divisible by $p$.

Similar to the extension field, we can also define the extension of Galois ring. 
\begin{lemma}[Theorem~14.23 \cite{wan2011finite}]
    Let $h(x)$ be a basic irreducible polynomial of degree $\ell$ over $R=GR(p^s,m)$. Then the residue class ring $R[x]/(h(x))$ is a Galois ring of characteristic $p^s$ and cardinality $p^{sm\ell}$ and contains $R$ as a subring. Thus
\[
R[x]/(h(x)) = GR(p^s,m\ell).
\]
\label{residuering}
\end{lemma}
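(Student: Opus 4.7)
The plan is to verify directly that $S := R[x]/(h(x))$ satisfies the three defining properties of $GR(p^s,m\ell)$ — characteristic $p^s$, cardinality $p^{sm\ell}$, and the set of zero divisors (together with $0$) forming a principal ideal $(p)$ — and then invoke the uniqueness of the Galois ring with those parameters.

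First I would handle the module structure and cardinality. Since $h(x)$ is monic of degree $\ell$ in $R[x]$, every element of $S$ has a unique representative $\sum_{i=0}^{\ell-1} r_i x^i$ with $r_i\in R$. This immediately yields $|S|=|R|^\ell = p^{sm\ell}$, shows that the canonical map $R\to S$ is injective so that $R$ embeds as a subring of $S$, and shows that the characteristic of $S$ equals the characteristic of $R$, namely $p^s$.

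Next I would identify the zero divisors. Because $h$ is \emph{basic} irreducible, its reduction $\overline{h}\in \mathbb{F}_{p^m}[x]$ is irreducible of degree $\ell$, so
\[
S/(p)\ \cong\ (R/(p))[x]/(\overline{h}(x))\ \cong\ \mathbb{F}_{p^m}[x]/(\overline{h}(x))\ \cong\ \mathbb{F}_{p^{m\ell}}
\]
is a field. For any $c\in S\setminus (p)$ the image $\overline{c}\in S/(p)$ is then a unit, so one can choose $d\in S$ with $cd = 1 + pe$ for some $e\in S$; since $p^s=0$ in $S$, the element $1+pe$ is invertible via the finite geometric series $\sum_{i=0}^{s-1}(-pe)^i$, and hence $c$ itself is a unit in $S$. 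Conversely, $(p)$ is a nonzero nilpotent ideal, so every element of $(p)$ is a zero divisor. Thus the zero divisors together with $0$ form exactly the principal ideal $(p\cdot 1_S)$.

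Finally, $S$ is a finite commutative ring with identity, of characteristic $p^s$ and cardinality $p^{sm\ell}$, whose zero divisors form a principal ideal generated by $p$; these are precisely the defining data of $GR(p^s,m\ell)$, so the uniqueness theorem for Galois rings delivers the isomorphism $S \cong GR(p^s,m\ell)$. I expect the main obstacle to lie in the middle step: one has to genuinely use that $h$ is \emph{basic} irreducible (irreducible after reduction mod $p$), not merely irreducible over $R$, in order to guarantee that $S/(p)$ is a field and that the lifting-units-from-the-residue-field argument applies. Once that is in place the rest is bookkeeping.
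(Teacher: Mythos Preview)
Your proof is correct. The paper does not actually supply a proof of this lemma; it is quoted verbatim as Theorem~14.23 of \cite{wan2011finite} and used as a black box. Your argument is the standard self-contained verification: the monic division algorithm gives the free $R$-module structure (hence cardinality, characteristic, and the embedding of $R$), the basic-irreducibility of $h$ makes $S/(p)\cong\mathbb{F}_{p^{m\ell}}$ a field so that every element outside $(p)$ lifts to a unit via a geometric series in the nilpotent $pe$, and then uniqueness of Galois rings with given invariants finishes. Your remark that \emph{basic} irreducibility (not mere irreducibility over $R$) is the essential hypothesis is exactly right---without it $S/(p)$ need not be a field and the argument collapses.
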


To factorize a polynomial $f$ over Galois ring, we need to first factorize $f$ in the residue field and then apply Hensel lifting lemma to find its factor over Galois Ring. The following two lemmas state this fact.
\begin{lemma}[Lemma~14.20 \cite{wan2011finite}]\label{lm:coprime factor}
    Let $R=GR(p^a,\ell)$ and $f$ be a monic polynomial in $R[x]$ and $g_1, g_2, \ldots, g_r$ be pairwise coprime monic polynomials in $\overline{R}[x]$. Assume that $\overline{f} = g_1 g_2 \ldots g_r$ in $\overline{R}[x]$. Then there exist pairwise coprime monic polynomials $f_1, f_2, \ldots, f_r$ in $R[x]$ such that $f = f_1 f_2 \ldots f_r$ and $\overline{f_i} = g_i$ for $i = 1, 2, \ldots, r$.\label{Hensel}
\end{lemma}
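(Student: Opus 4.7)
I would prove this by a double induction: an outer induction on the number of factors $r$ to reduce to the two-factor case, and an inner induction on the exponent in the idealization $p$, lifting the factorization Hensel-style one level at a time from $\overline{R}$ up to $R = GR(p^a,\ell)$.

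For the outer reduction, since $g_1,\ldots,g_r$ are pairwise coprime monic in the PID $\overline{R}[x]$, the polynomial $g_1$ is coprime to the product $h := g_2 g_3 \cdots g_r$. Granted the two-factor case, I obtain monic coprime $f_1, F \in R[x]$ with $\overline{f_1} = g_1$, $\overline{F}=h$, and $f = f_1 F$. Applying the inductive hypothesis to $F$, whose reduction already factors as $g_2\cdots g_r$ with pairwise coprime factors, yields monic pairwise coprime $f_2,\ldots,f_r$ with $F = f_2\cdots f_r$. The pairwise coprimality of $f_1$ with each $f_i$ for $i\ge 2$ is transferred from the coprimality of $f_1$ with $F$: a Bezout identity $u f_1 + v F = 1$ in $R[x]$ gives $u f_1 + (v \prod_{j\neq i,\,j\ge 2} f_j)\, f_i = 1$.

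The heart of the argument is the two-factor lift. I build monic $f_1^{(k)}, f_2^{(k)} \in R[x]$ with $\overline{f_i^{(k)}}=g_i$ and $f \equiv f_1^{(k)} f_2^{(k)} \pmod{p^k}$ inductively for $k=1,2,\ldots,a$. For $k=1$, take any monic coefficientwise lifts of $g_i$. For the step, write $f - f_1^{(k)} f_2^{(k)} = p^k h_k$ with $\deg h_k < \deg f$ (both sides monic of the same degree). Set $f_i^{(k+1)} = f_i^{(k)} + p^k u_i$; expanding and using $2k \geq k+1$ to absorb the cross term $p^{2k} u_1 u_2$, the required congruence $f \equiv f_1^{(k+1)} f_2^{(k+1)} \pmod{p^{k+1}}$ collapses to
\[
f_2^{(k)} u_1 + f_1^{(k)} u_2 \equiv h_k \pmod{p}.
\]
Over the PID $\overline{R}[x]$, Bezout for the coprime pair $g_1,g_2$ together with the division algorithm supplies a solution with $\deg \overline{u_1} < \deg g_1$; the residual identity forces $\deg \overline{u_2} < \deg g_2$ automatically. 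Lifting these to $u_1, u_2 \in R[x]$ of the same degrees preserves the monic leading coefficient of $f_1^{(k)}$, hence $f_1^{(k+1)}$ remains monic of degree $\deg g_1$, and monicness of $f_2^{(k+1)}$ then follows from the identity $f_1^{(k+1)} f_2^{(k+1)} \equiv f \pmod{p^{k+1}}$. After $k=a$ steps, congruence modulo $p^a$ is equality in $R$, so $f_1 := f_1^{(a)}$ and $f_2 := f_2^{(a)}$ work. Coprimality of the final pair follows because $\mathrm{Res}(f_1,f_2)\in R$ reduces mod $p$ to $\mathrm{Res}(g_1,g_2)$, a unit in $\overline{R}$, hence a unit in the local ring $R$.

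The main technical obstacle is the degree bookkeeping in the inductive step: one must extract from the Bezout identity over $\overline{R}[x]$ a solution whose degrees are small enough to preserve both monicness and the degrees of $f_1^{(k)},f_2^{(k)}$, and one must verify that the cross-term $p^{2k} u_1 u_2$ is genuinely absorbed into the error modulo $p^{k+1}$ at every stage. Everything else is a mechanical transfer of classical Hensel lifting from $\Z_p$ to the Galois-ring setting, exploiting that $\overline{R}[x]$ is a PID and that units mod $p$ lift to units in $R$.
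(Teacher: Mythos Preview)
The paper does not prove this lemma; it is quoted from Wan's textbook \cite{wan2011finite} as background, so there is no in-paper proof to compare against. Your proposal is the standard Hensel-lifting argument one finds in that reference (reduce to $r=2$ by induction, then lift the two-factor factorization from $R/(p)$ to $R/(p^2)$ to $\ldots$ to $R/(p^a)=R$ using a Bezout identity in $\overline{R}[x]$), and it is correct.

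One very small wrinkle: you deduce monicness of $f_2^{(k+1)}$ from the congruence $f_1^{(k+1)} f_2^{(k+1)}\equiv f\pmod{p^{k+1}}$, but a congruence mod $p^{k+1}$ does not by itself pin down the leading coefficient in $R$. The cleaner (and equally available) argument is the one you already used for $f_1^{(k+1)}$: since $\deg u_2 < \deg g_2$, adding $p^k u_2$ does not touch the top coefficient of $f_2^{(k)}$. This is cosmetic and does not affect the validity of the proof.
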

\begin{lemma}[Hensel Lemma \cite{wan2011finite}]
    Let $f$ be a monic polynomial of degree $\geq 1$ in $R[x]$. Then
\begin{enumerate}
    \item[(i)] $f$ can be factorized into a product of some number, say $r$, of pairwise coprime monic primary polynomials $f_1, f_2, \ldots, f_r$ over $R$:
    \[
    f = f_1 f_2 \cdots f_r
    \]
    and for each $i = 1, 2, \ldots, r$ $\overline{f_i}$ is a power of a monic irreducible polynomial over $\F_{p^\ell}$.
    \item[(ii)] Let
    \[
    f = f_1 \cdots f_r = h_1 \cdots h_t
    \]
    be two factorizations of $f$ into products of pairwise coprime monic primary polynomials over $R$, then $r = t$ and after renumbering, $f_i = h_i$, $i = 1, 2, \ldots, r$.
\end{enumerate}
\label{factor}
\end{lemma}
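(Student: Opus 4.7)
The plan is to prove both existence and uniqueness by reducing modulo $p$ and lifting via Lemma \ref{lm:coprime factor}, using unique factorization in the residue polynomial ring $\F_{p^\ell}[x]$ as the backbone of the argument.

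For part (i), the first step would be to pass to the reduction $\overline{f} \in \F_{p^\ell}[x]$. Since $\F_{p^\ell}[x]$ is a UFD, I can factor
\[
\overline{f} = g_1^{e_1} g_2^{e_2} \cdots g_r^{e_r},
\]
where $g_1,\ldots,g_r$ are distinct monic irreducible polynomials over $\F_{p^\ell}$. The prime powers $G_i := g_i^{e_i}$ are pairwise coprime in $\F_{p^\ell}[x]$ because the $g_i$ are distinct irreducibles. Lemma \ref{lm:coprime factor} then lifts this coprime factorization to a factorization $f = f_1 f_2 \cdots f_r$ in $R[x]$ with $\overline{f_i} = g_i^{e_i}$ for each $i$. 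By definition, a monic polynomial in $R[x]$ whose reduction mod $p$ is a prime power is primary, so each $f_i$ is primary and the factorization has the required form.

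For part (ii), suppose $f = f_1 \cdots f_r = h_1 \cdots h_t$ are two such factorizations. Reducing modulo $p$ yields two factorizations of $\overline{f}$ into pairwise coprime prime powers over $\F_{p^\ell}$. Unique factorization in $\F_{p^\ell}[x]$ forces $r = t$ and, after renumbering, $\overline{f_i} = \overline{h_i}$ for every $i$. To then conclude that $f_i = h_i$ as elements of $R[x]$, I would invoke the uniqueness direction of the Hensel lift: given two coprime factorizations in $R[x]$ whose reductions mod $p$ coincide factor-by-factor, the factors themselves must agree. This can be proved inductively on the number of factors by combining the existence of Bezout coefficients modulo $p$ (which lift by Hensel's argument to Bezout identities modulo $p^a$ for coprime pairs) with a straightforward descent on the exponent $a$, using that $R$ has the descending chain $p^a R \subset p^{a-1} R \subset \cdots \subset R$.

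The main obstacle I anticipate is precisely this uniqueness step: Lemma \ref{lm:coprime factor} as stated only guarantees existence of the lift, so I would either need to strengthen it to include uniqueness (the standard proof of the coprime Hensel lifting goes through by showing the lift at each stage from $R/p^i$ to $R/p^{i+1}$ is unique once a Bezout identity is fixed) or argue uniqueness directly. Once that is in hand, the rest of the proof is essentially a clean translation of UFD arithmetic in $\F_{p^\ell}[x]$ back up to $R[x]$, and the primary condition is automatic from the prime-power shape of the residues.
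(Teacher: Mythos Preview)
The paper does not prove this lemma at all: it is quoted verbatim as a known result from \cite{wan2011finite} and used as a black box in the subsequent arguments (e.g.\ in the proof of \cref{thm:linear factor}). So there is no ``paper's own proof'' to compare against.

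That said, your outline is the standard route and is essentially correct. A couple of remarks. First, your characterization of ``primary'' matches the convention in \cite{wan2011finite}: a monic polynomial over $R$ is primary precisely when its reduction is a power of a monic irreducible, so part~(i) really does follow immediately from UFD factorization in $\F_{p^\ell}[x]$ together with \cref{lm:coprime factor}. Second, you are right that the only nontrivial point is the uniqueness of the lift in part~(ii); \cref{lm:coprime factor} as stated in this paper gives only existence. The standard way to close this gap is exactly what you sketch: for two coprime monic lifts $f_i,h_i$ with $\overline{f_i}=\overline{h_i}$, one shows by induction on the nilpotency level that $f_i\equiv h_i \pmod{p^j}$ for all $j\le a$, using a Bezout identity lifted from $\F_{p^\ell}[x]$. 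This is routine and is how \cite{wan2011finite} proceeds, so your plan would go through without surprises.
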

Let $f(x)\in GR(p^a,l)[x]$ and $\F_{p^\ell}$ is the residue field of the Galois ring. If $p\nmid f(x)$, then $\overline{f(x)}$ is not a zero polynomial, and its reduction $\bmod p$ can be factored in $\F_{p^\ell}[x]$ as $\overline{f(x)}=(x-a_1)^{\ell_1}(x-a_2)^{\ell_2}\ldots (x-a_s)^{\ell_s}\overline{g(x)} \bmod p$, where $g(x)$ has no linear factor over $\F_{p^\ell}$. Applying the \cref{Hensel}, we can lift each factor $(x-a_i)$ and $\overline{g(x)}$ to $GR(p^a,l)[x]$ as:
\[
f(x)=(x-c_1)^{\ell_1}(x-c_2)^{\ell_2}\ldots (x-c_s)^{\ell_s}g(x)
\]
where $c_1,c_2,\ldots,c_s\in GR(p^a,\ell)$ and $g(x)$ has no linear factor in $GR(p^a,\ell)[x]$.

\begin{thm}\label{thm:linear factor}
    All linear factors of $f(x)$ over $GR(p^a,\ell)$ take the form: 
    \[
    x-\gamma,  \text{  where  }  \gamma=c_i+h\cdot p^{\lceil\frac{a}{l_i}\rceil},  h\in GR(p^a,\ell), 1\leq i\leq s.
    \]
\end{thm}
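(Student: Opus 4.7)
The plan is to use the standard criterion that $x - \gamma$ divides $f(x)$ in $GR(p^a,\ell)[x]$ if and only if $f(\gamma) = 0$ (polynomial long division works over any commutative ring), and then to characterize all such $\gamma$ by combining the Hensel-lifted factorization with the $p$-adic structure of the Galois ring.

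First, I would localize $\gamma$ modulo $p$. Reducing $f(\gamma) = 0$ modulo $p$ yields
\[
\overline{f}(\overline{\gamma}) = (\overline{\gamma} - a_1)^{\ell_1} \cdots (\overline{\gamma} - a_s)^{\ell_s} \overline{g}(\overline{\gamma}) = 0
\]
in $\F_{p^\ell}$. Since $\overline{g(x)}$ has no linear factor over $\F_{p^\ell}$, we get $\overline{g}(\overline{\gamma}) \neq 0$, forcing $\overline{\gamma} = a_i$ for some $i \in \{1,\ldots,s\}$; equivalently, $\gamma = c_i + p\delta$ for some $\delta \in GR(p^a,\ell)$.

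Second, I would reduce the vanishing of $f(\gamma)$ to a single factor. Since the $a_j$'s are pairwise distinct, $\gamma - c_j$ is a unit for every $j \neq i$ (its residue $a_i - a_j$ is nonzero in $\F_{p^\ell}$), and $g(\gamma)$ is also a unit because $\overline{g}(a_i) \neq 0$. Hence $f(\gamma) = (\gamma - c_i)^{\ell_i} \cdot U$ for some unit $U \in GR(p^a,\ell)$, so $f(\gamma) = 0$ iff $(\gamma - c_i)^{\ell_i} = 0$. Now I would apply the valuation structure of $GR(p^a,\ell)$ recalled in the preliminaries: every nonzero element admits a unique expression $p^m u$ with $u$ a unit and $0 \leq m < a$. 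Writing $\gamma - c_i = p^m u$, one has $(\gamma - c_i)^{\ell_i} = p^{m\ell_i}\, u^{\ell_i}$, and since $u^{\ell_i}$ is a unit, this vanishes iff $p^{m\ell_i} = 0$, i.e.\ iff $m\ell_i \geq a$, i.e.\ iff $m \geq \lceil a/\ell_i \rceil$. Therefore $\gamma - c_i \in p^{\lceil a/\ell_i \rceil} \cdot GR(p^a,\ell)$, which is exactly the stated form. The converse direction is immediate: for any $h$, the element $(p^{\lceil a/\ell_i \rceil} h)^{\ell_i}$ has $p$-adic valuation at least $\lceil a/\ell_i \rceil \cdot \ell_i \geq a$, hence vanishes, so $f(c_i + p^{\lceil a/\ell_i \rceil} h) = 0$.

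The main subtlety is the final valuation calculation, where the exponent $\lceil a/\ell_i \rceil$ must be shown to be both necessary, using the fact that $u^{\ell_i}$ is a unit so that the nilpotency of $(\gamma - c_i)^{\ell_i}$ is controlled purely by the $p$-power, and sufficient, via the filtration $pGR(p^a,\ell) \supset p^2 GR(p^a,\ell) \supset \cdots \supset p^a GR(p^a,\ell) = 0$. Beyond this bookkeeping, the argument is a clean combination of modulo-$p$ reduction, the Hensel factorization given above, and the unit/non-unit dichotomy in $GR(p^a,\ell)$; no deeper obstruction appears.
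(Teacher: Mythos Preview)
Your proposal is correct and follows essentially the same route as the paper: reduce modulo $p$ to pin down $\overline{\gamma}=a_i$, observe that $g(\gamma)$ and the cross-factors $(\gamma-c_j)^{\ell_j}$ for $j\neq i$ are units, and then translate $(\gamma-c_i)^{\ell_i}=0$ into the valuation condition $m\geq\lceil a/\ell_i\rceil$. If anything, your write-up is a bit more complete than the paper's, since you spell out the $p$-adic valuation step explicitly and also verify the converse direction.
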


\begin{proof}
   Write $c_i\in GR(p^a,\ell)$ as $c_i=c_{i,0}+c_{i,1}p+\ldots c_{i,a-1}p^{a-1}$, $1\leq i\leq s$, and write $\gamma\in GR(p^a,\ell)$ as $\gamma=\gamma_0+\gamma_{1}p+\ldots+\gamma_{a-1}p^{a-1}$. By \cref{factor}, the factorization $f(x)=\prod_{i=1}^{s}(x-c_i)^{l_i}\cdot g(x)$ is the unique factorization of $f(x)$ into primary components. Moreover, by \Cref{lm:coprime factor}, the residue classes $c_{i,0}\in \F_{p^\ell}$ are pairwise distinct for $i\neq j$. Now suppose $\gamma\in GR(p^a,\ell)$ is a root of $f(x)$, i.e. $f(\gamma)=0$. We note that $g(\gamma)$ is not divisible by $p$ or otherwise $\overline{g(x)}=g(x) \bmod p$ has a root $\gamma \bmod p$. This implies that $g(x)$ has a linear factor and the contradiction happens. Since $g(\gamma)$ is an unit in $GR(p^a,\ell)$ and $f(\gamma)=0$, this implies that at least one of $(\gamma-c_i)$ is divisible by $p$. The fact that $c_{i,0}$ are all distinct leads to the conclusion that there exists a unique index $i\in [s]$ such that $\gamma\equiv c_i\ \bmod p$. That is,
    \[
    (\gamma-c_i)^{\ell_i}=0\ \textit{ and } \ (\gamma-c_j)^{\ell_j}\neq 0\ \textit{ for }\ j\neq i.
    \]
    In a Galois ring, the condition $(\gamma-c_i)^{\ell_i}=0$ implies that 
    \[
    \gamma=c_i+h\cdot p^{\lceil\frac{a}{\ell_i}\rceil}
    \]
    for some $y\in GR(p^a,\ell)$ with the standard structure theory of nilpotent roots. Thus, we complete the proof.
\end{proof}

\cref{thm:linear factor} yields a corollary on the factorization of polynomials over Galois rings into linear factors.
\begin{col}
    Let $f(x)\in GR(p^a,l)[X]$ and $f(x)=p^m(x-c_1)^{\ell_1}(x-c_2)^{\ell_2}\ldots (x-c_s)^{\ell_s}g(x)$, where $c_1,c_2,\ldots,c_s\in GR(p^a,\ell)$, $p\nmid g(x)$ and $g(x)$ has no linear factor in $GR(p^a,\ell)[x]$. Then all linear factors of $f(x)$ over $GR(p^a,l)$ take the form:
    \[
    x-\gamma,  \text{  where  }  \gamma=c_i+h\cdot p^{\lceil\frac{a-m}{l_i}\rceil},  h\in GR(p^a,\ell), 1\leq i\leq s.
    \]
\end{col}

\subsection{Codes over Galois Ring} 
Let $\Sigma$ be a finite alphabet\footnote{In our application, $\Sigma$ can be either Galois ring or finite field.} and let $\mathbf{x}$ and $\mathbf{y} \in \Sigma^n$. Then the Hamming distance between them is $d(\mathbf x,\mathbf y) := |\{i \in [n]:x_i\neq y_i\}|$. Given a vector $\bx$ and a subset $\mathcal{Y} \subseteq \Sigma^n$ we denote $d(\bx,\mathcal{Y}) := \min\{d(\bx,\by):\by \in \mathcal{Y}\}$. The Hamming distance of code $\cC$ is $d(\cC)= \min_{\bx,\by\in \cC,\bx\neq \by}d(\bx,\by)$. The relative distance of $\cC$ is $\delta=\frac{d(\cC)}{n}$ and the rate is $r=\frac{\log_{|\Sigma|}|\cC|}{n}$. 
Let $\cC$ be a code with length $n$, Hamming distance $d$ and rate $r$ over alphabet $\Sigma$. Given $\bv\in \Sigma^n$, we use $\mathcal{L}(\bv,d)$ to denote the list of codeswords in $\cC$ whose distance from $\bv$ is lest than $d$. That is, $\mathcal{L}(\bv,d)=\{\bc\in \cC: d(\bv,\bc)<d \}$. We say that a code is combinatorially list decodable up to radius $d$ if for every $\bv\in \Sigma^n$, $\mathcal{L}(\bv,d)$ is of size at most polynomial in $n$. Likewise, we say a code can be efficiently list decodable up to radius $d$ if it is combinatorially list decodable up to $d$, and the list $\mathcal{L}(\bv,d)$ can be found in polynomial time in $n$. We will also use the notion of average-radius list-decodability. Let $L\ge 1$ be an integer, we say that $\cC$ is $(\rho,L)$ average-radius list-decodable if for every set of pairwise distinct codewords $\bc_{0},\bc_{1},\ldots,\bc_{L}\in \cC$ and every word $\by\in \Sigma^{N}$, we have $\frac{1}{L+1}\sum_{i=0}^{L} d(\bc_{i},\by) > \rho n$.

\paragraph{Reed-Solomon Codes over Galois Ring.} 
Assume that $GR(p^a,\ell)$ is a Galois ring with $p^\ell-2\geq n$ and $\gamma\in GR(p^a,\ell)$ of multiplicative order $p^{\ell}-1$. We can similarly generalize the celebrated Reed-Solomon code to its counterpart over Galois ring $GR(p^a,\ell)$. Given a polynomial $f(X)$ of degree at most $k$, the encoding algorithm $Enc_{RS}$ of Reed-Solomon codes is 
\[
f(x) \rightarrow \left(
f(1), f(\gamma)\ldots, f(\gamma^{n-1})
\right) \in GR(p^a,\ell)^n.
\]
The code $\cC_{RS}$ is denoted by $\cC_{RS}=\{Enc_{RS}(f(x)):f(x)\in GR(p^a,l)[x]^{<k}\}$. One can show that this Reed-Solomon code has code length $n$, rate $\frac{k}{n}$ and minimum distance $n-k+1$.

\paragraph{Folded Reed-Solomon Codes over Galois Ring.} 
One can also generalize the folded Reed-Solomon codes to its counterpart over Galois ring $GR(p^a,\ell)$. Given a polynomial $f(X)$ of degree at most $k$, the encoding algorithm $Enc_{FRS}$ of the $m$-folded Reed-Solomon code is 
\[
f(x) \rightarrow \left[ 
\begin{pmatrix}
f(1) \\
f(\gamma) \\
\vdots \\
f(\gamma^{m-1})
\end{pmatrix}, 
\begin{pmatrix}
f(\gamma^m) \\
f(\gamma^{m+1}) \\
\vdots \\
f(\gamma^{2m-1}) \\
\end{pmatrix}, 
\ldots, 
\begin{pmatrix}
f(\gamma^{n-m}) \\
f(\gamma^{n-m+1}) \\
\vdots \\
f(\gamma^{n-1}) \\
\end{pmatrix} 
\right] \in \left(GR(p^a,\ell)^m\right)^\frac{n}{m}
\]
One can show that this Reed-Solomon code has code length $\frac{n}{m}$, rate $\frac{k}{n}$ and minimum distance $N-\lceil \frac{k}{m}\rceil+1$. The code $\cC_{FRS}$ is denoted as $\cC_{FRS}=\{Enc_{FRS}(f(x)):f(x)\in GR(p^a,\ell)[x]^{<k}\}$.


\section{Solving Linear Equations over Galois Rings}
In this section, we study the solvability of linear equations over Galois rings. Specifically, we consider the equation:
\begin{equation}\label{eq:matrix equation}
A\bx=\mathbf{b}
\end{equation}
where $A=(a_{ij})_{n\times m}$ is a matrix over $GR(p^a,\ell)$, and $\bx,\mathbf{b}$ are column vectors in $GR(p^a,\ell)$. Since the underlying ring is not a field, we leverage the notion of McCoy rank to generalize the classical concept of matrix rank.

\begin{defn}[McCoy Rank]
Let $R$ be a non-trivial commutative ring with identity, and let $A=(a_{ij})_{n\times m}$ be a matrix over $R$. If every entry $a_{ij}$ has a non-zero annihilator, then $rank_MA$ is defined to be zero. Otherwise, the rank of $A$ is the greatest positive integer $r\leq n$ such that the determinant of all $r\times r$ submatrices of $A$ does not have a common non-zero annihilator. Denote this rank by $rank_MA=r$.
\end{defn}

\begin{lemma}[Lemma~I.26 \cite{mcdonald2020linear}]
    Let $P,Q$ be invertible matrices over $R$ of appropriate dimensions, then we have: $rank_M(PAQ)=rank_M(A)$.
\end{lemma}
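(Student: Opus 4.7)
The plan is to reformulate McCoy rank in terms of the \emph{determinantal ideals}
\[
I_r(A) \;:=\; \bigl\langle \det(A_{I,J}) : I\subseteq[n],\,J\subseteq[m],\,|I|=|J|=r \bigr\rangle \;\subseteq\; R.
\]
Unwinding the definition, the nonexistence of a common nonzero annihilator for the $r \times r$ minors of $A$ is exactly the condition $\mathrm{Ann}_R(I_r(A)) = 0$. Hence $\mathrm{rank}_M(A)$ is uniquely determined by the chain of ideals $\{I_r(A)\}_{r \geq 1}$, and the lemma reduces to showing $I_r(A) = I_r(PAQ)$ for every $r \geq 1$.

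The engine is the Cauchy--Binet formula, a polynomial identity in the matrix entries that is therefore valid over any commutative ring. Applied to $PA$, it gives
\[
\det\bigl((PA)_{I,J}\bigr) \;=\; \sum_{|K|=r} \det(P_{I,K}) \cdot \det(A_{K,J}),
\]
exhibiting every generator of $I_r(PA)$ as an $R$-linear combination of generators of $I_r(A)$; hence $I_r(PA) \subseteq I_r(A)$. Applying the same identity on the right side yields $I_r(PAQ) \subseteq I_r(PA) \subseteq I_r(A)$. For the reverse inclusion I would invoke invertibility: writing $A = P^{-1}(PAQ)Q^{-1}$ and running the same containment with $P^{-1}, Q^{-1}$ in place of $P, Q$ gives $I_r(A) \subseteq I_r(PAQ)$. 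Equality of the two ideals then forces equality of their annihilators, and so $\mathrm{rank}_M(A) = \mathrm{rank}_M(PAQ)$.

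I do not anticipate a real obstacle---the core argument is just Cauchy--Binet plus invertibility, both of which are insensitive to whether $R$ is a field. The one point requiring a bit of care is the boundary clause in the stated definition: the rank is separately declared to be zero when ``every entry has a nonzero annihilator,'' and this needs to be reconciled with the ideal-theoretic formulation above. I would handle this via the $r = 1$ case directly, where Cauchy--Binet reduces to ordinary matrix multiplication; the $1 \times 1$ minors of $PAQ$ lie in $I_1(A)$, and invertibility then gives equality of $I_1(A)$ and $I_1(PAQ)$, from which the two zero-rank characterisations match. Over the Galois ring $GR(p^a,\ell)$ of interest, this subtlety is essentially vacuous, since non-units are precisely the multiples of $p$ and this divisibility condition is visibly preserved under multiplication by matrices invertible over $GR(p^a,\ell)$.
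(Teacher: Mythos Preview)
Your proposal is correct. The paper itself does not prove this lemma; it is quoted as Lemma~I.26 from McDonald's textbook \cite{mcdonald2020linear} and used without proof. Your argument via determinantal ideals and the Cauchy--Binet formula is the standard route and is essentially what one finds in the cited reference: the key observation is that $I_r(BA)\subseteq I_r(A)$ for any $B$ (and symmetrically on the right), so invertibility of $P,Q$ forces $I_r(PAQ)=I_r(A)$, whence the annihilators---and therefore the McCoy ranks---coincide. Your handling of the $r=1$ boundary clause is also fine; over $GR(p^a,\ell)$ the issue is indeed vacuous since the zero-divisors form the single prime ideal $(p)$, so ``every entry has a nonzero annihilator'' and ``the entries have a common nonzero annihilator'' are equivalent (both say $p\mid a_{ij}$ for all $i,j$).
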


We now show how to adapt Gaussian elimination to compute the McCoy rank of a matrix over $GR(p^a,\ell)$. Suppose a matrix $A\in GR(p^a, \ell)^{n\times m}$. If every entry in $A$ is either zero or zero-divisors, then $rank_M(A)=0$. Otherwise, assume without of generality that the leading entry $a_{11}$ is a unit. Then, We can apply Gaussian elimination to eliminate $a_{1i}$ for $2\leq i\leq m$ and $a_{j1}$ for $2\leq j\leq n$ and thereby reduce the matrix to the block-triangular form:

\[
A \sim
\begin{pmatrix}
a_{11} & 0 \\
0 & A'
\end{pmatrix}
\]

One can continue this process until we obtain the following form:

\[
A \sim
\begin{pmatrix}
    P & 0 \\
    0 & Q
\end{pmatrix}
\]

where $P=diag\{p_{11},p_{22},\ldots,p_{rr}\}$ is a diagonal matrix with invertible diagonal entries $p_{ii}\in GR(p^a,l)^{\times}$, and $Q$ is a matrix in which all entries are either zero or zero-divisors. Then, we conclude $rank_M(A)=r$. It is clearly this process can be done in polynomial time in $m,n$.

\begin{lemma}[Theorem~51 \cite{mccoy1948rings}]
    When $b=0$, the system of equations \Cref{eq:matrix equation} has a nontrivial solution if and only if the rank of the matrix $A$ is less than the length of $\bx$.\label{solve equations}
\end{lemma}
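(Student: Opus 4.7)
The plan is to prove both implications separately, leveraging the explicit Gaussian elimination reduction for matrices over $GR(p^a,\ell)$ described immediately above the statement.

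\textbf{Necessity.} I would argue by contrapositive: assume $rank_M(A) = m$, where $m$ is the length of $\bx$, and suppose $A\bx = 0$. For every choice of $m$ row indices $I \subseteq [n]$, let $A_I$ denote the $m\times m$ submatrix of $A$ on those rows; clearly $A_I \bx = 0$. Multiplying by the classical adjugate and using the formal identity $\mathrm{adj}(A_I) A_I = \det(A_I)\cdot I_m$, which holds over any commutative ring, we obtain $\det(A_I)\cdot x_j = 0$ for every $I$ and every coordinate $j$. Thus each $x_j$ is a common annihilator of the ideal generated by the $m\times m$ minors of $A$. But $rank_M(A) = m$ says precisely that this ideal has no non-zero common annihilator, so $x_j = 0$ for all $j$, contradicting nontriviality of $\bx$.

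\textbf{Sufficiency.} Suppose $rank_M(A) = r < m$. The Gaussian elimination procedure described above produces invertible matrices $P\in GR(p^a,\ell)^{n\times n}$ and $Q_1\in GR(p^a,\ell)^{m\times m}$ such that
\[
P A Q_1 = \begin{pmatrix} D & 0 \\ 0 & Q \end{pmatrix},
\]
where $D$ is an $r\times r$ diagonal matrix with unit entries and every entry of $Q$ is either zero or a zero-divisor. Since $GR(p^a,\ell)$ is a local ring in which every zero-divisor lies in the maximal ideal $(p)$, each entry of $Q$ is divisible by $p$. Consequently, the vector $\bz = (0,\ldots,0, p^{a-1},\ldots,p^{a-1})^T$ with $r$ leading zeros and $m-r\geq 1$ trailing copies of $p^{a-1}$ satisfies $(PAQ_1)\bz = 0$, because $p\cdot p^{a-1}=0$ in $GR(p^a,\ell)$. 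Setting $\bx = Q_1 \bz$ yields $A\bx = P^{-1}(PAQ_1)\bz = 0$, and $\bx\neq 0$ because $\bz\neq 0$ and $Q_1$ is invertible.

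The main obstacle I anticipate is handling the McCoy rank definition cleanly. In the necessity direction, one needs the adjugate identity to transfer from the field setting to an arbitrary commutative ring; this carries through because $\mathrm{adj}(B)B = \det(B)I$ is a formal polynomial identity in the matrix entries and requires no field hypothesis. In the sufficiency direction, the Galois-ring-specific fact that every zero-divisor is a multiple of $p$ is what makes the construction so clean by supplying the universal nonzero annihilator $p^{a-1}$; over a general commutative ring, one would instead have to extract a non-zero element from the common annihilator of the maximal minors and argue more delicately.
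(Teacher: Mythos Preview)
Your proof is correct. The paper itself does not supply a proof of this lemma; it simply cites McCoy's 1948 monograph, so there is no ``paper's own proof'' to compare against line by line.

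That said, it is worth noting how your argument relates to McCoy's original. Your necessity direction via the adjugate identity is exactly the classical argument and works over any commutative ring. Your sufficiency direction, however, is a genuine simplification that exploits the specific structure of $GR(p^a,\ell)$: because every zero-divisor lies in the maximal ideal $(p)$, the single element $p^{a-1}$ serves as a universal annihilator for the residual block $Q$, and the Gaussian reduction described just before the lemma hands you the solution explicitly. McCoy's general-ring argument instead selects a nonzero annihilator $c$ of the $(r{+}1)\times(r{+}1)$ minors and builds a solution out of signed $r\times r$ minors (a Cramer-type construction), which is what is needed when no single nilpotent kills every zero-divisor. Your version is shorter and more transparent in the Galois-ring setting the paper cares about, at the cost of the generality you yourself flag in the last paragraph. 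One small point of hygiene: you identify the number of unit pivots produced by the elimination with $rank_M(A)$; this is asserted by the paper immediately before the lemma, but strictly speaking your construction only needs that the pivot count is less than $m$, which follows directly since a full set of $m$ unit pivots would force an $m\times m$ minor to be a unit and hence $rank_M(A)=m$.
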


\begin{col}
    When $b=0$, since the McCoy rank $rank_M(A)\leq min\{n,m \}$, if $n<m$, the system of equations \Cref{eq:matrix equation} always has a nontrivial solution. In addition, the proof in  \cite{mccoy1948rings} is constructive and thus produces the non-trivial solution in polynomial time.
    \label{col_solution}
\end{col}

\begin{defn}
    Let $N(R)$ denote the subset of commutative ring $R$ consisting of all elements which are not zero-divisors. Then $N(R)$ is a multiplicative subset of $R$ which contains the units of $R$.
    We say that $S\subseteq N(R)$ is subtractive in $N(R)$ if for all distinct $a,b\in S,\ a-b\in N(R)$.
\end{defn}

\begin{lemma}[Lemma~2.1 \cite{norton2000key}]
    If $A$ is a square matrix over $R$ and $det(A)\in N(R)$, then the linear system $Ax=0$ has only the trivial solution.
\end{lemma}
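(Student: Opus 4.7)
The plan is to invoke the classical adjugate (cofactor) matrix identity, which is a purely formal consequence of the Laplace expansion of determinants and therefore holds over any commutative ring with identity, in particular over $R=GR(p^a,\ell)$. Specifically, I would use the fact that for any $n\times n$ matrix $A$ over $R$, if $\mathrm{adj}(A)$ denotes the transpose of the cofactor matrix, then
\[
\mathrm{adj}(A)\cdot A \;=\; A\cdot \mathrm{adj}(A) \;=\; \det(A)\cdot I_n,
\]
an identity whose derivation never appeals to multiplicative inverses.

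Given this tool, the main argument is short. Assume $\bx\in R^n$ satisfies $A\bx=\bzero$. I would multiply on the left by $\mathrm{adj}(A)$ to obtain
\[
\det(A)\cdot \bx \;=\; \mathrm{adj}(A)\cdot A\cdot \bx \;=\; \mathrm{adj}(A)\cdot \bzero \;=\; \bzero,
\]
so that $\det(A)\cdot x_i=0$ for every component $x_i$ of $\bx$. By hypothesis $\det(A)\in N(R)$, i.e.\ $\det(A)$ is not a zero-divisor of $R$, which forces $x_i=0$ for every $i$, yielding $\bx=\bzero$ as desired.

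As an alternative route, one could instead appeal to the McCoy rank criterion already recorded as \Cref{solve equations}: the hypothesis $\det(A)\in N(R)$ says precisely that the unique $n\times n$ minor of $A$, namely $\det(A)$ itself, has no nonzero annihilator in $R$, so $rank_M(A)=n$. Since the McCoy rank equals the length of $\bx$, the contrapositive of \Cref{solve equations} immediately forbids any nontrivial solution to $A\bx=\bzero$.

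There is essentially no obstacle in this argument; the one subtlety worth flagging is the verification that the adjugate identity remains valid over an arbitrary commutative ring rather than just over a field, but this is a standard algebraic fact proved by cofactor expansion together with the vanishing of determinants with two equal rows, and the proof never invokes the existence of multiplicative inverses.
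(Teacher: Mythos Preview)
Your proposal is correct. The paper does not supply its own proof of this lemma; it simply cites it as Lemma~2.1 of \cite{norton2000key}. Your adjugate-matrix argument is the standard proof and is exactly what one would expect the cited reference to contain: the identity $\mathrm{adj}(A)\cdot A=\det(A)\cdot I_n$ is a formal polynomial identity over $\Z$ and hence holds in any commutative ring, and from there the conclusion is immediate. Your alternative route via \Cref{solve equations} is also valid and has the minor advantage of tying the lemma back to the McCoy-rank framework the paper uses elsewhere, though it relies on a deeper result to prove something more elementary.
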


\begin{lemma}[Proposition~2.4 \cite{norton2000key}]
    Let $R$ be a finite ring. Then every element of $N(R)$ is a unit.
    \label{subtractive}
\end{lemma}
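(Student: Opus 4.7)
The plan is to exploit finiteness via the pigeonhole principle applied to a multiplication map. Fix an element $a \in N(R)$, i.e.\ $a$ is not a zero-divisor in $R$, and consider the map
\[
\mu_a : R \longrightarrow R, \qquad \mu_a(x) = ax.
\]
This is a homomorphism of the underlying additive group of $R$. Its kernel is the annihilator $\{x \in R : ax = 0\}$, which equals $\{0\}$ precisely because $a$ is not a zero-divisor. Hence $\mu_a$ is injective.

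Next I would invoke finiteness: since $R$ is a finite set and $\mu_a$ is an injection from $R$ to itself, it must be a bijection, hence surjective. In particular, there exists $b \in R$ such that $\mu_a(b) = ab = 1$, so $a$ has a right inverse. In the settings we care about (Galois rings $GR(p^a,\ell)$ are commutative), this right inverse is already a two-sided inverse and $a$ is a unit. If one wishes to state the lemma for general finite rings, one applies the same argument to the right-multiplication map $\nu_a(x) = xa$, whose kernel is also trivial because non–zero-divisors are non–zero-divisors on both sides, yielding a left inverse $b'$; the standard identity $b' = b'(ab) = (b'a)b = b$ then shows $a$ is a (two-sided) unit.

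There is no real obstacle here; the argument is the classical ``injective self-map of a finite set is surjective'' trick, and the only thing to verify carefully is that the annihilator of $a$ is trivial, which is exactly the defining property of $N(R)$. The resulting unit is produced constructively (one can search the finite set $R$ for the $b$ with $ab = 1$), which is consistent with how this lemma is used later in conjunction with \Cref{solve equations} and \Cref{col_solution} to manipulate determinants in $N(R)$ during the Gaussian-elimination-type arguments over $GR(p^a,\ell)$.
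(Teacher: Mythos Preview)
Your argument is correct and is precisely the standard proof of this fact: the left-multiplication map $\mu_a$ is injective because $a$ is not a zero-divisor, finiteness forces surjectivity, and hitting $1$ produces an inverse. The paper does not actually supply its own proof of this lemma; it merely cites Proposition~2.4 of \cite{norton2000key}, so there is nothing to compare against, and your pigeonhole argument is exactly what one would expect (and essentially what appears in that reference).

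One minor remark: in the noncommutative aside you assert that ``non--zero-divisors are non--zero-divisors on both sides,'' which is not automatic from an arbitrary one-sided definition. The clean way to finish in general is the one you sketch at the end: once $ab=1$, the map $x\mapsto xa$ has trivial kernel (if $xa=0$ then $x=x(ab)=(xa)b=0$), hence is also bijective, yielding a left inverse that must coincide with $b$. Since the paper works exclusively with the commutative Galois rings $GR(p^a,\ell)$, this point is moot for the intended application.
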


By the \cref{subtractive}, the size of a subtractive set over $GR(p^a,l)$ is at most $p^l-1$, i.e., all nonzero elements of its residue field $\F_{p^l}$. As shown in $\cite{norton2000key}$, by elementary row operationsthe, the parity-check matrix $H$ of Reed-Solomon codes over $GR(p^a,l)$ takes the following triangular form:
\[
\begin{bmatrix}
1 & 1 & 1 & \cdots & 1 \\
0 & \alpha_2 - \alpha_1 & \alpha_3 - \alpha_1 & \cdots & \alpha_n - \alpha_1 \\
0 & 0 & \prod_{i=1}^{2} (\alpha_3 - \alpha_i) & \cdots & \prod_{i=1}^{2} (\alpha_n - \alpha_i) \\
\vdots & \vdots & \vdots & \ddots & \vdots & \\
0 & 0 & 0 &\cdots & \prod_{i=1}^{k-1} (\alpha_n - \alpha_i)
\end{bmatrix}
\]

The determinant of any $d-1$ columns is a product of several expressions of the form $(\alpha_i-\alpha_j),i\neq j$, requiring the evaluation set of the $RS$ code is subtractive set and thus all its elements $\alpha_i,1\leq i\leq n$ are pairwise distinct non-zero elements from $\F_{p^l}$.

\begin{thm}[Theorem~3.3 \cite{norton2000key}]
    We denote the RS code defined by the above parity-check matrix over $GR(p^a,l)$ by $RS_{k}(\alpha)$, with $k=n-d+1$, all elements in the evaluation set $\alpha$ are nonzero elements of $\F_{p^l}$. $RS_k(\alpha)$ is a free $R-module$ of rank $k$, its minimum distance is $d$.
\end{thm}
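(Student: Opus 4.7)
The plan is to work directly with the evaluation description $RS_k(\alpha) = \mathrm{Im}\,\phi$, where $\phi : GR(p^a,\ell)[x]^{<k} \to GR(p^a,\ell)^n$ sends $f \mapsto (f(\alpha_1), \ldots, f(\alpha_n))$, and then establish both the free-module structure and the minimum distance through a single Vandermonde determinant argument. The essential enabling fact, extracted from \cref{subtractive} together with the subtractive hypothesis on $\alpha$, is that each difference $\alpha_i - \alpha_j$ for $i \neq j$ is an actual \emph{unit} in $GR(p^a,\ell)$, not merely a non-zero-divisor; this is what turns the usual field-case determinants into invertible elements of the ring.

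To show $RS_k(\alpha)$ is a free module of rank $k$, I would prove $\phi$ is an injective $GR(p^a,\ell)$-module homomorphism and conclude that its image is isomorphic to the free domain $GR(p^a,\ell)[x]^{<k}$ of rank $k$. For injectivity, suppose $\phi(f) = 0$ and pick any $k$ evaluation points $\alpha_{i_1}, \ldots, \alpha_{i_k}$; then the coefficient vector of $f$ lies in the kernel of the $k \times k$ Vandermonde matrix $V = (\alpha_{i_l}^{j-1})_{l,j}$, whose determinant $\prod_{1 \le l < l' \le k}(\alpha_{i_{l'}} - \alpha_{i_l})$ is a product of units and hence a unit. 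Therefore $V$ is invertible (by the same reduction used in the Gaussian elimination discussion of Section 3), forcing $f = 0$. Since the domain is free of rank $k$ with basis $\{1,x,\ldots,x^{k-1}\}$ and $\phi$ is injective, $RS_k(\alpha)$ is a free module of rank $k$.

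For the minimum distance, suppose $c = \phi(f) \neq 0$ had weight at most $d-1 = n-k$. Then $f$ would vanish at at least $k$ of the $\alpha_i$'s, and the Vandermonde argument above would force $f = 0$, a contradiction; hence every nonzero codeword has weight at least $d$. The bound is attained by $f(x) = \prod_{i=1}^{k-1}(x - \alpha_i)$, a polynomial of degree $k-1 < k$ that vanishes at $\alpha_1, \ldots, \alpha_{k-1}$ but whose value at any $\alpha_j$ with $j \ge k$ equals $\prod_{i=1}^{k-1}(\alpha_j - \alpha_i)$, a product of units and thus nonzero. This produces a codeword of weight exactly $n - (k-1) = d$, so the distance is exactly $d$.

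The main obstacle is that over a ring the classical field-level notions of dimension, linear independence, and the ``$\deg f$ roots" rule for polynomials do not transfer automatically, so one cannot invoke the field-case RS proof verbatim. Everything in the argument pivots on upgrading each $\alpha_i - \alpha_j$ from a non-zero-divisor (guaranteed by subtractivity) to an actual unit (guaranteed by \cref{subtractive} combined with finiteness of $GR(p^a,\ell)$). Without this upgrade, a Vandermonde determinant in $N(R)$ that fails to be a unit could admit a nontrivial annihilator, which would break both the injectivity of $\phi$ and the distance lower bound; the finiteness of the Galois ring is therefore genuinely essential rather than cosmetic.
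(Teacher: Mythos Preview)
The paper does not prove this theorem; it is cited as Theorem~3.3 of \cite{norton2000key}, so there is no in-paper argument to compare against. Your proof via the evaluation map and Vandermonde determinants is correct and is the natural adaptation of the field-case argument, with the decisive point (which you isolate clearly) being that subtractivity together with \cref{subtractive} upgrades each $\alpha_i - \alpha_j$ from a non-zero-divisor to a unit, making the relevant $k\times k$ Vandermonde matrices invertible over $GR(p^a,\ell)$.

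One minor formal point: the theorem as stated defines $RS_k(\alpha)$ through the parity-check matrix $H$ displayed immediately above it, whereas you argue entirely about $\mathrm{Im}\,\phi$. To match the statement literally you would want one sentence reconciling the two descriptions, e.g.\ check that $\mathrm{Im}\,\phi \subseteq \ker H$ and then compare cardinalities, using the triangular form of $H$ (whose diagonal entries are products of $\alpha_i-\alpha_j$ and hence units, again by subtractivity) to see that $|\ker H| = |R|^k = |\mathrm{Im}\,\phi|$. This is routine once the unit property of the differences is in hand, so it is a bookkeeping issue rather than a mathematical gap.
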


\begin{col}
    The RS code defined by the above parity-check matrix satisfies $n=k+d-1$, and is an MDS code.
\end{col}

\section{List Decoding Algorithm of Reed-Solomon Codes}
In this section, we extend the classical list decoding algorithm of Reed–Solomon codes, originally proposed in~\cite{guruswami1998improved}, to the context of codes defined over Galois rings. 

If $p\mid f(X)\in GR(p^a,\ell)[X]$, then we assume $f(X)=p^{i}g(X)$, where $p\nmid g(X)$. 
According to \Cref{thm:linear factor}, 
we can perform a linear factorization of the polynomial $f(X)=(X-c_1)^{\ell_1}(X-c_2)^{\ell_2}\ldots(X-c_s)^{\ell_s}g(X)$, where $g(x)$ has no linear factor. Analogous to above analysis, by replacing $p$ with $p^i$, we can obtain all linear factors of the form:
\[
X-\gamma, \quad \gamma=c_j+h\cdot p^{\lceil\frac{a-i}{l_j}\rceil},1\leq j\leq s.
\]
Consequently, we summarize this result in the following theorem.
\begin{thm}
Let $f(X) \in \mathrm{GR}(p^a, \ell)[X]$.

\begin{itemize}
    \item If $p \nmid f(X)$ and $f(X)$ admits the factorization
    \[
    f(X) = \prod_{i=1}^{s}(x - c_i)^{\ell_i} \cdot g(X),
    \]
    where $c_i \in \mathrm{GR}(p^a, \ell)$ and $g(X)$ has no linear factor in $\mathrm{GR}(p^a, \ell)[X]$, then all linear factors of $f(x)$ are of the form
    \[
    X - \gamma, \quad \text{where } \gamma = c_i + h \cdot p^{\left\lceil \frac{a}{\ell_i} \right\rceil}, \quad h \in \mathrm{GR}(p^a, \ell).
    \]

    \item If $p^i \parallel f(X)$ and $f(X)$ can be written as
    \[
    f(X) = p^i \cdot \prod_{j=1}^{s}(X - c_j)^{\ell_j} \cdot g(x),
    \]
    where $g(X)$ has no linear factor and $c_j \in \mathrm{GR}(p^a, \ell)$, then all linear factors of $f(X)$ are of the form
    \[
    X - \gamma, \quad \text{where } \gamma = c_j + h \cdot p^{\left\lceil \frac{a - i}{\ell_j} \right\rceil}, \quad h \in \mathrm{GR}(p^a, \ell).
    \]
\end{itemize}
\end{thm}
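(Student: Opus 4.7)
The plan is to observe that the first bullet is exactly the statement of \Cref{thm:linear factor} and so requires no additional work. All the new content lies in the second bullet, whose proof reduces to the first case by a shift of modulus from $p^a$ to $p^{a-i}$.

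For the second bullet, I would first reformulate ``linear factor'' as ``root'': since $X-\gamma$ is monic, $(X-\gamma)\mid f(X)=p^{i}h(X)$ (with $p\nmid h(X)$) is equivalent to $f(\gamma)=p^{i}h(\gamma)=0$ in $GR(p^a,\ell)$, which is in turn equivalent to $h(\gamma)\equiv 0\pmod{p^{a-i}}$. (The degenerate case $i=a$ forces $f\equiv 0$ and can be ruled out beforehand.) Applying Hensel's lemma (\Cref{Hensel}) to $h$ yields the factorization $h(X)=\prod_{j=1}^{s}(X-c_{j})^{\ell_{j}}\,g(X)$ in $GR(p^a,\ell)[X]$, with $g$ carrying no linear factor and with the residues $\overline{c}_{1},\ldots,\overline{c}_{s}\in\F_{p^\ell}$ pairwise distinct (by \Cref{lm:coprime factor}).

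Next I would recycle the two key observations from the proof of \Cref{thm:linear factor}. First, $g(\gamma)$ must be a unit: otherwise $p\mid g(\gamma)$, so $\overline{g}(\overline{\gamma})=0$ in $\F_{p^\ell}$, and Hensel lifting would manufacture a linear factor of $g$, contradicting the hypothesis. Second, the distinctness of the $\overline{c}_{j}$ forces $\gamma\equiv c_{j_{0}}\pmod{p}$ for at most one index $j_{0}$, and such an index must exist when $a-i\geq 1$, since otherwise every $\gamma-c_{j}$ would be a unit and $h(\gamma)$ itself would be a unit, contradicting $h(\gamma)\equiv 0\pmod{p^{a-i}}$. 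For $j\neq j_{0}$ the factor $\gamma-c_{j}$ is a unit, so the divisibility condition collapses to $(\gamma-c_{j_{0}})^{\ell_{j_{0}}}\equiv 0\pmod{p^{a-i}}$.

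Finally I would invoke the $p$-adic normal form recorded in Section~2.1: every nonzero element of $GR(p^a,\ell)$ can be written uniquely as $p^{t}u$ with $u$ a unit and $0\leq t<a$. Writing $\gamma-c_{j_{0}}=p^{t}u$, we obtain $(\gamma-c_{j_{0}})^{\ell_{j_{0}}}=p^{t\ell_{j_{0}}}u^{\ell_{j_{0}}}$, which vanishes modulo $p^{a-i}$ precisely when $t\ell_{j_{0}}\geq a-i$, i.e., $t\geq\lceil(a-i)/\ell_{j_{0}}\rceil$. This is exactly the parametrization $\gamma=c_{j_{0}}+p^{\lceil(a-i)/\ell_{j_{0}}\rceil}h$ with $h\in GR(p^a,\ell)$, as claimed. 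I do not anticipate any real obstacle; the argument is a verbatim rerun of the proof of \Cref{thm:linear factor} with the ambient modulus $p^a$ replaced by $p^{a-i}$. The only minor bookkeeping point is that Hensel's factorization is performed in $GR(p^a,\ell)[X]$ rather than in $GR(p^{a-i},\ell)[X]$, but this is harmless because Hensel lifting commutes with reduction modulo any power of $p$, so the same $c_{j}$ serve as the Hensel lifts in both rings.
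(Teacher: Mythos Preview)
Your proposal is correct and follows essentially the same route as the paper: the first bullet is \Cref{thm:linear factor} verbatim, and for the second bullet the paper merely says ``analogous to the above analysis, by replacing $p$ with $p^i$'' (this is also the content of the corollary immediately after \Cref{thm:linear factor}), which is precisely the modulus shift $p^a\rightsquigarrow p^{a-i}$ that you carry out in detail. Your write-up is in fact more complete than the paper's, which leaves the reduction implicit.
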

This procedure is formalized as \cref{Algorithm1}: Linear Factorization of Polynomials over Galois Rings.

\begin{algorithm}[H]
\caption{Linear Factorization of polynomial $f(x)$}
\label{Algorithm1}
\begin{algorithmic}[1]
\Statex \textbf{Input:} $f(X)\in GR(p^a,\ell)[X]$
\Statex \textbf{Output:} All linear factors $X-\gamma$ of $f(X)$

\State $f(X)\leftarrow \frac{f(X)}{p^j}$, where $j$ is the largest integer such that $p^j\mid f(X)$
\State $\overline{f(X)}=(X-\overline{c_1})^{\ell_1}(X-\overline{c_2})^{\ell_2}\ldots(X-\overline{c_s})^{\ell_s}\overline{g(X)}\leftarrow$ linear factorization of $\overline{f(X)}$
\State $f(X)=(X-c_1)^{\ell_1}(X-c_2)^{\ell_2}\ldots(X-c_s)^{\ell_s}g(X)\leftarrow$ Hensel Lift of $\overline{f(X)}$
\State Output $X-\gamma,\gamma=c_i+y\cdot p^{\lceil\frac{a-j}{l_i}\rceil},1\leq i\leq s$
\end{algorithmic}
\end{algorithm}

Adapted from~\cite{guruswami1998improved}, we now present a list decoding algorithm \cref{Algorithm2}. for Reed–Solomon codes over $GR(p^a,\ell)$.
\begin{algorithm}
\caption{The List Decoding Algorithm for Reed-Solomon Codes over Galois Ring}
\label{Algorithm2}
\begin{algorithmic}[1]
\Statex \textbf{Input:} $n\geq k\geq 1,d\geq 1,r\geq 1,e=n-t$ and $n$ pairs $\{(\alpha_i,y_i)\}_{i=1}^{n}$
\Statex \textbf{Output:} List of polynomials $f(X)$ of degree at most $k-1$.

\State Find a non-zero $Q(X,Y)$ with $(1,k-1)$ degree at most $d$, such that:
\[
Q(\alpha_i,y_i)=0
\]
with multiplicity $r$ for every $1\leq i\leq n$. 
\State $\mathcal{L}\leftarrow \emptyset$
\For{every factor $Y-f(X)$ of $Q(X,Y)$}
\If{$d(y_i,(f(\alpha_i))_{i=1}^n)\leq e$ and $\deg(f)\leq k-1$}
\State Add $f(X)$ to $\mathcal{L}$
\EndIf
\EndFor
\end{algorithmic}
\end{algorithm}

\begin{rmk}
    The key distinction from the field case lies in the factorization step, for which we developed \cref{Algorithm1} to identify all linear factors over $GR(p^a,\ell)$.
\end{rmk}

In the description given in Section 3, the evaluation points are required to be pairwise distinct elements from $\F_{p^l}$. As a result, the following two lemmas hold.

\begin{lemma}[Lemma 5~\cite{guruswami1998improved}]
    The first step of \cref{Algorithm2} imposes $\binom{r+1}{2}$ constraints for each $i$ on the coefficients of $Q(X,Y)$.
\end{lemma}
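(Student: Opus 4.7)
The plan is to unwind the definition of the multiplicity condition and count constraints directly. Recall that $Q(X,Y)$ is said to have a zero of multiplicity $r$ at the point $(\alpha_i, y_i)$ precisely when the shifted polynomial
\[
\widetilde Q_i(X,Y) := Q(X+\alpha_i,\; Y+y_i) \in GR(p^a,\ell)[X,Y]
\]
has no monomial $X^j Y^k$ with $j+k < r$ appearing (i.e. the coefficient of every such low-order monomial is zero). This definition makes sense verbatim over any commutative ring with identity, and in particular over $GR(p^a,\ell)$, so I would first note this, perhaps as a brief definitional preamble.

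Next I would count the number of monomials $X^j Y^k$ with $j,k \geq 0$ and $j+k < r$. Grouping by the total degree $d = j+k$ for $d = 0, 1, \ldots, r-1$ gives $d+1$ choices of $(j,k)$, so the total is
\[
\sum_{d=0}^{r-1}(d+1) \;=\; \binom{r+1}{2}.
\]
Thus the multiplicity condition at $(\alpha_i,y_i)$ corresponds to forcing $\binom{r+1}{2}$ prescribed coefficients of $\widetilde Q_i$ to vanish.

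Finally I would verify that each vanishing condition is a linear constraint on the coefficients of the original polynomial $Q(X,Y)$. Writing $Q(X,Y) = \sum_{u,v} q_{uv} X^u Y^v$, the binomial expansion gives
\[
\widetilde Q_i(X,Y) \;=\; \sum_{u,v} q_{uv}\sum_{j\leq u}\sum_{k\leq v}\binom{u}{j}\binom{v}{k}\alpha_i^{\,u-j}y_i^{\,v-k}X^j Y^k,
\]
so the coefficient of $X^j Y^k$ in $\widetilde Q_i$ is a fixed $GR(p^a,\ell)$-linear form in the variables $\{q_{uv}\}$. Setting it to zero is therefore one linear equation on the $q_{uv}$ over $GR(p^a,\ell)$. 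Applying this for each of the $\binom{r+1}{2}$ pairs $(j,k)$ with $j+k<r$ yields exactly $\binom{r+1}{2}$ linear constraints per point $i$, as claimed.

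There is no real obstacle here: the argument over a finite field transfers without change, since it uses only the ring-theoretic identity $Q(X,Y) = \widetilde Q_i(X-\alpha_i, Y-y_i)$ and the $GR(p^a,\ell)$-linearity of the coefficient extraction map. The only thing worth flagging is that the binomial coefficients $\binom{u}{j},\binom{v}{k}$ live in $\Z$ and act on $GR(p^a,\ell)$ by scalar multiplication, so linearity of the constraints over the ring is preserved.
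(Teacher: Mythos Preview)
Your proposal is correct and is exactly the standard Guruswami--Sudan argument: shift to $(\alpha_i,y_i)$, count the $\binom{r+1}{2}$ monomials of total degree less than $r$, and observe that each vanishing condition is a $GR(p^a,\ell)$-linear form in the coefficients $q_{uv}$. The paper does not reprove this lemma at all; it simply cites \cite{guruswami1998improved} and remarks that the proof ``does not rely on the properties of fields and can be naturally generalized to Galois ring,'' which is precisely the observation you spell out in detail.
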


\begin{lemma}[Lemma~3~\cite{guruswami1998improved}]
    $R(X):=Q(X, f(X))$ has $r$ roots for every $i$ such that $f(\alpha_i) = y_i$. In other words, $(X - \alpha_i)^r$ divides $R(X)$.
    \label{R(X)}
\end{lemma}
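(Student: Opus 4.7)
The plan is to adapt the standard finite-field argument of \cite{guruswami1998improved} with almost no change, since it is purely formal and uses only that $GR(p^a,\ell)[X]$ is a commutative ring in which the monic polynomial $X-\alpha_i$ is a non-zero-divisor. Throughout, I take the statement ``$Q$ has multiplicity $r$ at $(\alpha_i,y_i)$'' to mean that after the ring automorphism $X\mapsto X+\alpha_i,\ Y\mapsto Y+y_i$, the shifted polynomial
\[
\tilde{Q}(X,Y)\;:=\;Q(X+\alpha_i,\,Y+y_i)\;\in\;GR(p^a,\ell)[X,Y]
\]
contains no monomial $X^jY^k$ with $j+k<r$; equivalently every surviving monomial has total degree at least $r$. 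This is the same definition used implicitly in the preceding lemma counting $\binom{r+1}{2}$ linear constraints per point.

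First, I would use the hypothesis $f(\alpha_i)=y_i$ to write $f(X+\alpha_i)=y_i+X\cdot g(X)$ for some $g(X)\in GR(p^a,\ell)[X]$, which is just the observation that the constant term of $f(X+\alpha_i)-y_i$ vanishes (a formal Taylor expansion that is valid over any commutative ring). Next, I would evaluate $R$ after the same shift:
\[
R(X+\alpha_i)\;=\;Q\bigl(X+\alpha_i,\,f(X+\alpha_i)\bigr)\;=\;\tilde{Q}\bigl(X,\,X\cdot g(X)\bigr).
\]
Each monomial of $\tilde{Q}$ has the form $c_{jk}X^jY^k$ with $j+k\geq r$, and substituting $Y=X\cdot g(X)$ produces $c_{jk}X^{j+k}g(X)^k$, which is divisible by $X^r$. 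Summing over monomials yields $X^r\mid R(X+\alpha_i)$.

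Finally, applying the inverse automorphism $X\mapsto X-\alpha_i$ transports this divisibility to $(X-\alpha_i)^r\mid R(X)$, completing the proof. The only (mild) subtlety, which I view as the main potential obstacle, is justifying that divisibility by $(X-\alpha_i)^r$ over $GR(p^a,\ell)[X]$ behaves as over a field; but since $X-\alpha_i$ is monic, it is a non-zero-divisor in $GR(p^a,\ell)[X]$, so Euclidean division by its powers is well-defined and the implication $X^r\mid R(X+\alpha_i)\Rightarrow (X-\alpha_i)^r\mid R(X)$ follows cleanly from the ring automorphism used above. Everything else is the unchanged argument from the finite-field setting.
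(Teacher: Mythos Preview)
Your argument is correct and is exactly the standard Guruswami--Sudan proof that the paper intends: the paper does not supply its own proof but simply cites \cite{guruswami1998improved} and remarks that ``the proofs of the aforementioned lemmas do not rely on the properties of fields and can be naturally generalized to Galois ring,'' which is precisely what you have written out in detail. Your observation that $X-\alpha_i$ is monic (so Euclidean division and the shift automorphism behave as expected) is the only point where the ring setting requires any comment, and you have handled it correctly.
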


\begin{rmk}
    The proofs of the aforementioned lemmas do not rely on the properties of fields and can be naturally generalized to Galois ring. By \cref{residuering}, We can generalize the factorization method presented in~\cite{lidl1997finite} to Galois rings. Next, we introduce the following lemma to reveal the relationship between the degree of a polynomial and the number of its roots over a Galois ring.
\end{rmk}

\begin{lemma} \label{rootslemma}
    Let $f(X)\in GR(p^a,l)[X]$ be a non-zero polynomial with $deg(f)\leq t$, then $f(X)$ has at most $t$ units as roots (counting multiplicities).
\end{lemma}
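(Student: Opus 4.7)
The plan is to reduce the counting of unit roots over the Galois ring to a counting argument in the residue field $\F_{p^\ell}$, exploiting the Hensel-type factorization from \cref{thm:linear factor} and its corollary. First, I would write $f(X)=p^i\cdot g(X)$ where $p^i\parallel f$, so $\bar g\in \F_{p^\ell}[X]$ is nonzero with $\deg \bar g \leq \deg g = \deg f \leq t$ (since multiplying by a nonzero scalar does not change the formal degree). Over the residue field, we factor $\bar g(X)=\prod_{j=1}^{s}(X-\bar c_j)^{\ell_j}\cdot \bar h(X)$, with the $\bar c_j\in \F_{p^\ell}$ distinct and $\bar h$ having no linear factors, so $\sum_j \ell_j\leq \deg \bar g\leq t$.

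Next, I would show that every unit root of $f$ descends to one of the residues $\bar c_j$: if $\gamma$ is a unit with $f(\gamma)=0$, then $p^ig(\gamma)=0$ forces $g(\gamma)\equiv 0 \pmod p$, hence $\bar g(\bar\gamma)=0$. Because $\gamma\in GR(p^a,\ell)^{\times}$, we have $\bar\gamma\in \F_{p^\ell}^{\times}$, and the fact that $\bar h$ has no linear factors forces $\bar\gamma=\bar c_j$ for some $j$. Then the corollary to \cref{thm:linear factor} characterizes all such unit roots as $\gamma=c_j+h\cdot p^{\lceil (a-i)/\ell_j\rceil}$ for $h\in GR(p^a,\ell)$.

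Finally, I would bound the total multiplicity count. For unit roots $\gamma_1,\ldots,\gamma_r$ whose residues $\bar\gamma_1,\ldots,\bar\gamma_r$ are pairwise distinct, the linear polynomials $X-\gamma_u$ and $X-\gamma_v$ are coprime in $GR(p^a,\ell)[X]$: their difference $\gamma_u-\gamma_v$ is a unit (since the residues differ), so $(X-\gamma_u,X-\gamma_v)=(1)$. Hence if $m_u$ denotes the multiplicity of $\gamma_u$ as a root of $f$, the product $\prod_u (X-\gamma_u)^{m_u}$ divides $f$, giving $\sum_u m_u\leq \deg f \leq t$. For unit roots sharing the same residue class $\bar c_j$, I would invoke the Hensel factorization to show that their contribution is jointly bounded by the exponent $\ell_j$ in the factorization of $\bar g$, and then add up over $j$ to get $\sum_j \ell_j\leq t$.

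The main obstacle is precisely the last step: handling several distinct unit roots that share a residue class modulo $p$. Unlike the field case, these roots are not ``independent'' --- their associated linear factors $(X-\gamma_u)$ differ by multiples of $p$ and are no longer coprime, so the naive product-of-factors argument fails, and one cannot simply add their individual maximal divisibility multiplicities. The delicate point is to formalize the notion of multiplicity so that each residue class contributes exactly $\ell_j$ to the count, independent of how many distinct unit lifts of $\bar c_j$ happen to be roots; this relies essentially on the uniqueness of the primary decomposition in \cref{factor} and on tracking the $p$-adic valuation $\lceil(a-i)/\ell_j\rceil$ identified in \cref{thm:linear factor}.
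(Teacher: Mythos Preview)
Your approach is substantially different from, and heavier than, the paper's. The paper does not invoke \cref{thm:linear factor} or its corollary at all; instead it writes $f(X)=\sum_{i=0}^{a-1} f_i(X)\,p^{i}$ with each $f_i$ having coefficients in the Teichm\"uller set (identified with $\F_{p^\ell}$), lets $i$ be the least index with $f_i\neq 0$, and observes that if $\alpha$ is a unit root of $f$ then $f(\alpha)\equiv p^{i} f_i(\alpha)\equiv 0\pmod{p^{i+1}}$, forcing $\bar\alpha$ to be a root of $f_i$ over $\F_{p^\ell}$. Since $f_i$ is nonzero of degree at most $t$ over a field, it has at most $t$ roots counted with multiplicity, and the contradiction is immediate. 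No primary decomposition, no Hensel lift, no classification of linear factors. What your route buys is an explicit structural picture of all unit roots; what the paper's route buys is a two-line argument that never leaves the residue field.

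That said, the obstacle you flag in your final paragraph is real, and the paper's terse proof shares it: if two distinct unit roots $\alpha,\beta$ satisfy $\bar\alpha=\bar\beta$, neither argument shows their multiplicities add. Concretely, over $GR(4,1)=\Z/4\Z$ one has $(X-1)^2=(X-3)^2$, so $1$ and $3$ are distinct unit roots each of individual multiplicity $2$, while the degree is only $2$; the lemma is false under the na\"ive ``sum of individual maximal multiplicities'' reading. In every application in the paper, however, the unit roots in play are powers of the Teichm\"uller generator $\gamma$ and hence have pairwise distinct residues; the factors $(X-\alpha_i)$ are then pairwise coprime, and either your Step~4 or the paper's mod-$p$ reduction finishes at once. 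Your Step~5 is not needed for anything the paper uses, and your proposed fix via the exponents $\ell_j$ would amount to redefining ``multiplicity'' rather than proving the stated claim.
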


\begin{proof}
    Express $f(X)$ as
    $$f(X)=f_0(X)+f_1(X)p+\ldots +f_{a-1}(X)p^{a-1}, f_i(X)=f_{i0}+f_{i1}X+\ldots +f_{it}X^{t}, f_{ij}\in \F_{p^\ell}.$$
    Suppose that the polynomial $f(X)$ has more than $t$ units as roots (counting multiplicities). If $f_0(X)\neq 0$, then $f_0(X)=0\, \bmod p$ has more than $t$ roots (counting multiplicities) in $\F_{p^\ell}$ and a contradiction happens.
    If $f_0(X)\neq 0$, we assume that $f_0(X),f_1(X),\ldots f_{i-1}(X)= 0, f_i(X)\neq 0$, then we obtain:
    \[
    p^if_i(X)=0 \bmod p^{i+1}
    \]
    Hence, $f_i(X)$ has more than $t$ roots (counting multiplicities) in $\F_{p^\ell}$ and a contradiction happens. This completes the proof.
\end{proof}

\begin{lemma}\label{lm:roots}
        Let $f(X)\in GR(p^a,\ell)$ be a non-zero polynomial with degree $t$, $f(X)=f_0(X)+f_1(X)p+\ldots +f_{a-1}(X)p^{a-1}, f_i(X)=f_{i0}+f_{i1}X+\ldots +f_{it}X^{t}, f_{ij}(X)\in \F_{p^\ell} 
 (0\leq i\leq a-1, 0\leq j\leq t)$. If $f_0(X)\neq 0$, then there are at most $t$ units $\alpha_1,\ldots,\alpha_t$ such that $p\mid f(\alpha_i)$. (counting multiplicities)
\end{lemma}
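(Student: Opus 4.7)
The plan is to push the statement down to the residue field $\F_{p^\ell}$ via the reduction map $\pi\colon GR(p^a,\ell)\to \F_{p^\ell}$ and then invoke the classical bound on the number of roots of a nonzero polynomial over a field, in the same spirit as the proof of \Cref{rootslemma}.

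First, I would observe that for any unit $\alpha\in GR(p^a,\ell)$, reducing the decomposition
\[
f(\alpha)=f_0(\alpha)+p\,f_1(\alpha)+\cdots+p^{a-1}f_{a-1}(\alpha)
\]
modulo $p$ and using that $\pi$ (extended coefficient-wise to polynomials) is a ring homomorphism yields $\pi(f(\alpha))=f_0(\pi(\alpha))$. Writing $\bar\alpha=\pi(\alpha)$, which lies in $\F_{p^\ell}^{\times}$ because $\alpha$ is a unit, the condition $p\mid f(\alpha)$ is equivalent to $f_0(\bar\alpha)=0$ in $\F_{p^\ell}$. Hence every unit $\alpha_i$ counted in the lemma corresponds to a nonzero root $\bar\alpha_i$ of $f_0$ in the residue field.

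Second, the hypothesis $f_0\neq 0$ combined with $\deg f_0\leq t$ and the standard fact that a nonzero polynomial over a field has at most $\deg f_0$ roots counted with multiplicity bounds the number of such residues by $t$. Assigning to each unit $\alpha_i$ the multiplicity of $\bar\alpha_i$ as a root of $f_0$, which is the natural analogue of the multiplicity convention used implicitly in \Cref{rootslemma}, and grouping units that share a residue into a single class, the total weighted count of units satisfying $p\mid f(\alpha_i)$ is bounded above by $\deg f_0\leq t$, giving the claim.

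I do not anticipate a substantial obstacle. The only delicate point, and hence the one step worth writing out carefully, is the bookkeeping of multiplicities: one should confirm that the definition of the multiplicity of a unit $\alpha$ satisfying $p\mid f(\alpha)$ used here matches the convention implicit in \Cref{rootslemma} (where the stronger condition $f(\alpha)=0$ is imposed). Once this convention is pinned down, the lemma follows from a single application of the residue map together with the classical degree bound for polynomial roots over $\F_{p^\ell}$.
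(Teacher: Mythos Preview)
Your proposal is correct and matches the paper's approach. In fact, the paper does not give a separate proof of this lemma; it is stated immediately after \Cref{rootslemma} without proof, presumably because it is the $f_0\neq 0$ case of the argument already written out there: reduce modulo $p$, observe that $p\mid f(\alpha)$ forces $f_0(\bar\alpha)=0$ in $\F_{p^\ell}$, and invoke the degree bound over the residue field. Your identification of the multiplicity bookkeeping as the only point requiring care is apt, and your resolution (grouping units by residue and counting with the multiplicity of $\bar\alpha$ as a root of $f_0$) is consistent with how the paper actually applies the lemma, where the units in question are always powers of $\gamma$ and hence have distinct residues.
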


The following theorem is a direct consequence of \cref{lm:roots}.
\begin{thm}
    Let $Q(X, Y)$ be computed by Step 1 in \cref{Algorithm2}. Let $f(X)$ be a polynomial of degree $\leq k-1$ such that $f(\alpha_i) = y_i$ for at least $t > \frac{d}{r}$ many values of $i$. Then, $Y - P(X)$ divides $Q(X, Y)$.
\end{thm}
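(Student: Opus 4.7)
The plan is to mimic the classical Guruswami--Sudan argument by forming the univariate polynomial $R(X) := Q(X, f(X))$ and showing that it must vanish identically. First I would observe that since $Q(X,Y)$ has $(1,k-1)$-weighted degree at most $d$ and $\deg f \le k-1$, substituting $Y = f(X)$ yields $\deg R(X) \le d$. Next I would appeal to \cref{R(X)} (which, as the remark notes, transfers unchanged to the Galois-ring setting) to conclude that whenever $f(\alpha_i) = y_i$, the factor $(X-\alpha_i)^r$ divides $R(X)$.

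The main step is then to combine all $t$ such divisibility relations. Because the evaluation points $\alpha_i$ are pairwise distinct units in the residue field $\F_{p^\ell}$, any two differences $\alpha_i - \alpha_j$ are units of $GR(p^a,\ell)$; so $\alpha_i$ is a root of $R$ of multiplicity at least $r$, and altogether the $\alpha_i$ contribute at least $tr$ unit roots to $R(X)$ when counted with multiplicity. I would then invoke \cref{rootslemma} which bounds by $\deg R$ the number of unit roots (with multiplicity) of any nonzero polynomial over $GR(p^a,\ell)$. Since $tr > d \ge \deg R$, the lemma forces $R(X) = 0$.

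Finally, I would run polynomial division of $Q(X,Y)$ by the monic (in $Y$) polynomial $Y - f(X)$ inside $GR(p^a,\ell)[X][Y]$. This produces $Q(X,Y) = (Y - f(X)) S(X,Y) + R(X)$, where the remainder is exactly $R(X) = Q(X,f(X))$. Since $R(X) = 0$, we conclude $Y - f(X)$ divides $Q(X,Y)$, as required.

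The step most likely to trip one up is the transfer of the multiplicity bound to the Galois-ring setting: over a field, one would simply say $\prod_{i}(X-\alpha_i)^r$ is a monic divisor of $R$ of degree $tr$ and hence $R$ must vanish. Over $GR(p^a,\ell)$ one must either justify the same factorization explicitly (which works because the differences $\alpha_i - \alpha_j$ are units and hence the linear factors are pairwise coprime, making \cref{lm:coprime factor} applicable) or lean on \cref{rootslemma}. I would prefer the latter route, since that lemma was proved immediately above the theorem precisely to enable this counting argument and sidesteps the need to verify coprimality of the $(X-\alpha_i)^r$ by hand.
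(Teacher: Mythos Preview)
Your proposal is correct and follows essentially the same route as the paper: form $R(X)=Q(X,f(X))$, use \cref{R(X)} to get multiplicity-$r$ unit roots at each agreement point, then invoke \cref{rootslemma} to force $R(X)=0$ since $tr>d\ge\deg R$. You are in fact more careful than the paper's three-line proof, which omits the final division step $Q(X,Y)=(Y-f(X))S(X,Y)+R(X)$ that you spell out (and the paper even contains a typo, writing ``$R(X)\neq 0$'' where ``$R(X)=0$'' is intended).
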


\begin{proof}
    By \cref{R(X)}, $\alpha_i$ is a root of $R(X)$ with multiplicity $r$. From \cref{rootslemma}, we know that if $tr>d$, then $R(X)\neq 0$. This completes the proof.
\end{proof}

Now, we are ready to present the main result of this section, the list decoding algorithm of Reed-Solomon codes over Galois ring up to Johnson bound.
\begin{thm}
    \cref{Algorithm2} can efficiently list decode Reed-Solomon codes of rate $r$ up to $1 - \sqrt{r}$ fraction of errors.
\end{thm}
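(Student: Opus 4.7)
The plan is to follow the classical Guruswami--Sudan proof, substituting each field-theoretic ingredient with the Galois-ring analogues developed earlier. Fix a multiplicity parameter $\mu\geq 1$ and a $(1,k-1)$-weighted-degree bound $d$, to be pinned down at the end. The argument has three parts: existence of an interpolating $Q$, correctness (every $f$ of large agreement is a $Y$-linear factor of $Q$), and efficient extraction of a polynomial-size list of such factors.

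For existence, $Q(X,Y)=\sum_{i+(k-1)j\leq d}q_{ij}X^iY^j$ has $\Theta(d^2/(k-1))$ free coefficients, while the order-$\mu$ vanishing conditions at the $n$ points impose $n\binom{\mu+1}{2}$ homogeneous linear constraints over $GR(p^a,\ell)$ in the $q_{ij}$. By \cref{col_solution}, as soon as the number of unknowns strictly exceeds the number of equations this system admits a non-trivial solution, produced constructively in time $\poly(n)$ via the McCoy-rank Gaussian elimination of Section~3. Choosing $d=\lceil\mu\sqrt{n(k-1)}\rceil+1$ suffices.

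For correctness, the theorem immediately preceding this one (built on \cref{rootslemma} and \cref{lm:roots}) already delivers the key implication: if $f$ of degree $<k$ agrees with $\by$ in $t>d/\mu$ positions, then $R(X):=Q(X,f(X))$ has degree $\leq d$ and vanishes with multiplicity $\mu$ at $t$ distinct unit evaluation points in $\F_{p^\ell}^\times$, forcing $R\equiv 0$ and hence $Y-f(X)\mid Q(X,Y)$. Combined with the parameter choice above, the algorithm succeeds whenever $t>\sqrt{n(k-1)}$, i.e., whenever the error fraction is strictly below $1-\sqrt{(k-1)/n}$, which is essentially $1-\sqrt{r}$; the lower-order $1/\mu$ slack is absorbed by taking $\mu$ sufficiently large relative to the desired gap.

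The main obstacle is efficiently extracting all $Y$-linear factors of $Q$ and bounding their number, since $GR(p^a,\ell)[X]$ is not a UFD for polynomials in $Y$. I would view $Q$ as a polynomial in $Y$ over $GR(p^a,\ell)[X]$, reduce modulo $p$, and factor $\overline{Q}(X,Y)\in\F_{p^\ell}(X)[Y]$ over the field $\F_{p^\ell}(X)$ into at most $\deg_Y Q\leq d/(k-1)$ linear pieces by standard means. Each residue factor $Y-\overline{f}$ is then Hensel-lifted through \cref{Hensel} to $GR(p^a,\ell)[X][Y]$, and \cref{Algorithm1} (applied to extract $Y$-linear factors, together with the nilpotent-ideal filtration argument of Section~5) enumerates the degree-$<k$ candidates in each Hensel fibre, where the set of lifts forms a free module of bounded rank. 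Filtering the enumerated polynomials by the agreement condition in Line~4 of \cref{Algorithm2} then outputs a list of size $\poly(n)$, and the entire pipeline runs in time $\poly(n,\log|GR(p^a,\ell)|)$.
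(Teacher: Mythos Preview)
Your existence and correctness paragraphs match the paper exactly: the paper states this theorem without a standalone proof, treating it as a direct consequence of the preceding lemmas (interpolation via \cref{col_solution}, the multiplicity count, \cref{rootslemma}, and the divisibility theorem) together with the standard Guruswami--Sudan parameter choice $d\approx\mu\sqrt{n(k-1)}$. Nothing to add there.

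The extraction paragraph is where your route diverges from the paper's, and it contains a genuine gap. Both \cref{Hensel} and \cref{Algorithm1} are stated for univariate polynomials over a \emph{Galois ring}; you invoke them with base ring $GR(p^a,\ell)[X]$, which is not a Galois ring (it is not even local), so neither result applies as written. The paper's intended mechanism is different and is signposted in the remark just before \cref{rootslemma}: via \cref{residuering}, one passes to the extension $S=GR(p^a,\ell)[X]/(h(X))\cong GR(p^a,\ell M)$ for a basic irreducible $h$ of degree $M>d$. Inside $S$ the bivariate $Q$ becomes a genuine univariate polynomial in $Y$ over a Galois ring, so \cref{Algorithm1} applies verbatim to produce all $Y$-roots $\gamma\in S$; since $M>d\ge \deg_X Q(X,f(X))$, a degree-$<k$ polynomial $f$ satisfies $Q(X,f(X))=0$ in $GR(p^a,\ell)[X]$ iff its image in $S$ is such a root. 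Your detour through $\F_{p^\ell}(X)$ and a Hensel lift over $GR(p^a,\ell)[X]$ can be made to work, but it requires a more general Hensel statement (for rings with nilpotent ideal) than the one the paper provides.

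Separately, the appeal to ``the nilpotent-ideal filtration argument of Section~5'' and the claim that the lifts form ``a free module of bounded rank'' are not supported. \cref{freemodule} is specific to the linear-algebraic FRS decoder (a degree-one functional equation in $f$), and does not transfer to the bivariate $Q(X,Y)$ root-finding here; when $(Y-\overline{f})$ has multiplicity $\ell_i>1$ in $\overline{Q}$, the degree-$<k$ lifts form a coset of $p^{\lceil a/\ell_i\rceil}\cdot GR(p^a,\ell)[X]^{<k}$, whose rank is $k$, not a constant. The polynomial list-size guarantee in this section comes from the Johnson bound in the appendix, not from any module-rank argument, and the paper does not claim otherwise.
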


\section{List Decoding Algorithm of Folded Reed-Solomon Codes}\label{section:5}
In this section, we extend the list decoding framework to FRS codes defined over Galois rings. Our algorithm is adapted from the general framework in~\cite{guruswami2011linear}. We now describe a list decoding algorithm tailored to folded RS codes over Galois rings. The algorithm follows a two-step structure analogous to~\cite{guruswami2011linear}. We briefly summarize \cref{Algorithm3} as follows.

\textbf{Step 1:} Interpolate a non-zero multivariate polynomial $Q(X,Y_1,\ldots,Y_s)$, where each variable $Y_i$ has degree one such that
\begin{equation}\label{eq:4}
Q(\gamma^{im+j},y_{im+j},\ldots,y_{im+j+s-1})=0
\end{equation}
for all $0\leq i<N$ and $0\leq j\leq m-s$.

\textbf{Step 2:} Identify all polynomials $f(X)\in GR(p^a,\ell)[X]$ such that
\[
Q(X,f(X),f(\gamma X)\ldots,f(\gamma^{s-1}X))= 0
\]
and $f(X)$ agrees with the received word on at least $t$ folded positions. 

\begin{algorithm}
\caption{The List Decoding Algorithm for Folded Reed-Solomon Codes over Galois Ring}
\label{Algorithm3}
\begin{algorithmic}[1]
\Statex \textbf{Input:} An agreement parameter $0 \leq t \leq N$, parameter $D \geq 1$ and the received word:
\[
\mathbf{y} =
\begin{pmatrix}
y_0 & y_m & \cdots & y_{n-m} \\
\vdots & \vdots & \ddots & \vdots \\
y_{m-1} & y_{2m-1} & \cdots & y_{n-1}
\end{pmatrix}
\in GR(p^a,\ell)^{m \times N}, \quad N = \frac{n}{m}
\]

\Statex \textbf{Output:} All polynomials $f(X) \in GR(p^a,\ell)[X]$ of degree at most $k-1$ such that for at least $t$ values of $0 \leq i < N$,
\begin{equation}\label{eq:2}
\begin{pmatrix}
f(\gamma^{mi}) \\
\vdots \\
f(\gamma^{m(i+1)-1})
\end{pmatrix}
=
\begin{pmatrix}
y_{mi} \\
\vdots \\
y_{m(i+1)-1}
\end{pmatrix}
\end{equation}

\State Compute non-zero polynomial $Q(X, Y_1, \ldots, Y_s)$ as follows:
\[
Q(X, Y_1, \ldots, Y_s) = A_0(X) + A_1(X)Y_1 + A_2(X)Y_2 + \ldots + A_s(X)Y_s,
\]
where $\deg[A_0] \leq D + k - 1$ and $\deg[A_i] \leq D$ for every $1 \leq i \leq s$, such that for all $0 \leq i < N$ and $0 \leq j \leq m - s$,
\[
Q(\gamma^{im+j}, y_{im+j}, \ldots, y_{im+j+s-1}) = 0
\]

\State $\mathcal{L} \gets \emptyset$

\For{every $f(X) \in GR(p^a,\ell)[X]$ such that 
\begin{equation}\label{eq:3}
Q\left(X, f(X), f(\gamma X), \ldots, f(\gamma^{s-1}X)\right)=0 
\end{equation}}

  \If{$\deg(f) \leq k-1$ and $f(X)$ satisfies \Cref{eq:2} for at least $t$ values of $i$}
    \State Add $f(X)$ to $\mathcal{L}$
  \EndIf
\EndFor

\State \textbf{return} $\mathcal{L}$

\end{algorithmic}
\end{algorithm}

Next, we analyze the correctness of \cref{Algorithm3}. We begin with the result showing that there exists a nonzero polynomial $Q(X,Y_1,\ldots,Y_s)$ for \textbf{Step 1}.

\begin{lemma}
    If $D\geq \lfloor\frac{N(m-s+1)-k+1}{s+1}\rfloor$, then there exists a non-zero polynomial $Q(X,Y_1,\ldots,Y_s)$ that satisfies \textbf{Step 1} of \cref{Algorithm3}. \label{chooseD}
\end{lemma}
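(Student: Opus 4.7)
The plan is a standard dimension-count argument, except that linear algebra must be replaced by the McCoy-rank machinery from Section~3, so I would lean on the Corollary following \Cref{solve equations} rather than on ordinary Gaussian elimination.

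First I would view the interpolation conditions as a homogeneous linear system over $GR(p^a,\ell)$ in the coefficients of $A_0,A_1,\ldots,A_s$. Writing $A_0(X)=\sum_{j=0}^{D+k-1}a_{0,j}X^j$ and $A_i(X)=\sum_{j=0}^{D}a_{i,j}X^j$ for $1\leq i\leq s$, the number of unknowns is
\[
U \;=\; (D+k) + s(D+1) \;=\; (s+1)D + k + s.
\]
Each condition $Q(\gamma^{im+j},y_{im+j},\ldots,y_{im+j+s-1})=0$ translates into one linear equation in these unknowns, with coefficients in $GR(p^a,\ell)$ obtained by substituting the fixed evaluation point $\gamma^{im+j}$ and the fixed received symbols $y_{im+j},\ldots,y_{im+j+s-1}$. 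Since $i$ ranges over $\{0,1,\ldots,N-1\}$ and $j$ ranges over $\{0,1,\ldots,m-s\}$, the total number of equations is
\[
E \;=\; N(m-s+1).
\]

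Next I would verify that the hypothesis $D\geq \lfloor\frac{N(m-s+1)-k+1}{s+1}\rfloor$ is exactly enough to guarantee $U>E$. Setting $T=N(m-s+1)-k+1$, write $T=(s+1)\lfloor T/(s+1)\rfloor + r$ with $0\leq r\leq s$. Then
\[
(s+1)D \;\geq\; (s+1)\lfloor T/(s+1)\rfloor \;=\; T-r \;=\; N(m-s+1)-k+1-r,
\]
so
\[
U \;=\; (s+1)D + k + s \;\geq\; N(m-s+1) + 1 + (s-r) \;>\; N(m-s+1) \;=\; E,
\]
using $s\geq r$. Hence the coefficient matrix of the homogeneous system has strictly fewer rows than columns.

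Finally, I would invoke the Corollary to \Cref{solve equations}: a homogeneous linear system over $GR(p^a,\ell)$ with strictly fewer equations than unknowns admits a nontrivial solution (and one is produced constructively in polynomial time via the McCoy-rank Gaussian elimination described in Section~3). Any such nontrivial assignment of the coefficients $\{a_{i,j}\}$ yields a nonzero $Q(X,Y_1,\ldots,Y_s)$ of the required shape satisfying all the interpolation conditions, which completes the proof. I do not see any serious obstacle here: the only subtlety is that one cannot quote ordinary ``more unknowns than equations'' linear algebra, and must instead appeal to the McCoy-rank version from Section~3; the arithmetic bookkeeping around the floor function is the only place where a small slip could occur.
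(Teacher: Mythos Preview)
Your proposal is correct and is essentially identical to the paper's own proof: both count the $(s+1)(D+1)+k-1$ unknowns against the $N(m-s+1)$ linear constraints and then invoke \Cref{solve equations} (via its corollary) to get a nontrivial solution over $GR(p^a,\ell)$. Your handling of the floor function is slightly more explicit than the paper's, but the argument is the same.
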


\begin{proof}
    All coefficients in $A_i(X)$ are the variables. Thus, the number of variables is
    \[
    D+k+s(D+1)=(s+1)(D+1)+k-1
    \]
    On the other hand, the number of constraints in \Cref{eq:4} is $N(m-s+1)$.
    Note that if the variables outnumber the equations, by \cref{solve equations}, there exists a non-zero $Q$ that satisfies \textbf{Step 1}. This means 
    \[
    (s+1)(D+1)+k-1>N(m-s+1)
    \]
    which can be reduced to
    \[
    D>\frac{N(m-s+1)-k+1}{s+1}-1.
    \]
    This is guaranteed by the condition of this lemma.
\end{proof}

\begin{lemma}
    If $t>\frac{D+k-1}{m-s+1}$, then every polynomial $f(X)$ in the output list $\mathcal{L}$ satisfies \Cref{eq:3}.  \label{chooset}
\end{lemma}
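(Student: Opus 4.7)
The plan is to mimic the classical finite-field argument and reduce the statement to an application of \cref{rootslemma}, which is the Galois-ring analogue of the standard ``a low-degree polynomial has few roots'' fact. Concretely, I would introduce the univariate polynomial
\[
C(X) \;:=\; Q\bigl(X,\, f(X),\, f(\gamma X),\, \ldots,\, f(\gamma^{s-1}X)\bigr) \;=\; A_0(X) + \sum_{i=1}^{s} A_i(X)\, f(\gamma^{i-1}X),
\]
and prove it is identically zero whenever $f$ agrees with $\mathbf{y}$ on at least $t$ folded positions with $t(m-s+1) > D+k-1$.

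First I would bound $\deg C(X)$. Since $\deg A_0 \leq D+k-1$, $\deg A_i \leq D$ for $1 \le i \le s$, and $\deg f(\gamma^{i-1}X) \leq k-1$, each summand has degree at most $D+k-1$, so $\deg C \leq D+k-1$. Next I would count unit roots: for every folded index $i \in \{0,1,\ldots,N-1\}$ where $f$ satisfies \Cref{eq:2}, one has $f(\gamma^{im+j}) = y_{im+j}$ for all $0 \leq j \leq m-1$; plugging this into the interpolation condition established in Step 1 yields
\[
C(\gamma^{im+j}) \;=\; Q(\gamma^{im+j},\, y_{im+j},\, \ldots,\, y_{im+j+s-1}) \;=\; 0 \qquad \text{for } 0 \leq j \leq m-s.
\]
Because $\gamma$ has multiplicative order $p^\ell-1 \geq n$, the elements $\gamma^{im+j}$ ($0 \le i < N$, $0 \le j \le m-1$) are pairwise distinct units of $GR(p^a,\ell)$. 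Hence $C(X)$ has at least $t(m-s+1)$ distinct unit roots.

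Finally I would invoke \cref{rootslemma}: a nonzero polynomial in $GR(p^a,\ell)[X]$ of degree at most $D+k-1$ admits at most $D+k-1$ units as roots (counted with multiplicity). The hypothesis $t > \frac{D+k-1}{m-s+1}$ gives $t(m-s+1) > D+k-1$, forcing $C(X) \equiv 0$, which is exactly \Cref{eq:3}. The only delicate point in the Galois-ring setting, compared to the finite-field case, is ensuring that the root-counting bound actually applies, and this is handled precisely by \cref{rootslemma} together with the fact that all evaluation points $\gamma^{im+j}$ are units; everything else is a direct translation of the classical argument of \cite{guruswami2011linear}.
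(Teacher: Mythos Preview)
Your proposal is correct and follows essentially the same approach as the paper: both define the univariate polynomial $R(X)=Q(X,f(X),\ldots,f(\gamma^{s-1}X))$ (you call it $C(X)$), bound its degree by $D+k-1$, show that each of the $t$ agreement columns yields $m-s+1$ unit roots $\gamma^{im+j}$, and then invoke \cref{rootslemma} to conclude $R(X)\equiv 0$. Your write-up is in fact slightly more explicit about the distinctness of the evaluation points, but the argument is the same.
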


\begin{proof}
    Consider the polynomial $R(X)=Q(X,f(X),f(\gamma X),\ldots,f(\gamma^{s-1}X))$. Because the degree of $f(\gamma^l X)$ is at most $k-1$. This implies $\deg(R)\leq D+k-1$.
    Let $f(X)\in \mathcal{L}$ be one of the polynomials of degree at most $k-1$. Assume that $f(X)$ agrees with the received word at column $i$ for some $0\leq i<N$, i.e.,
    \[
\begin{pmatrix}
f(\gamma^{mi}) \\
\vdots \\
f(\gamma^{m(i+1)-1})
\end{pmatrix}
=
\begin{pmatrix}
y_{mi} \\
\vdots \\
y_{m(i+1)-1}
\end{pmatrix}
\]
Then, for all $0\leq j\leq m-s$, we have
\begin{eqnarray*}
&&R(\gamma^{mi+j})=Q(\gamma^{mi+j},f(\gamma^{mi+j}),f(\gamma^{mi+1+j}),\ldots,f(\gamma^{mi+s-1+j}))\\
&&=Q(\gamma^{mi+j},y_{mi+j},y_{mi+1+j},\ldots,y_{mi+s-1+j})=0.
\end{eqnarray*}
Note that for all $0\leq i<N, 0\leq j\leq m-s$, $\gamma^{mi+j}$ is a unit in $GR(p^a,\ell)$. Thus, the number of roots in  $R(X)$ as a unit is at least
\[
t(m-s+1)>D+k-1\geq deg(R).
\]
By the \cref{rootslemma}, this implies that $R(X)=0$ and thus $f(X)$ satisfies \Cref{eq:3} as desired.
\end{proof}

The major challenge in generalizing the list decoding algorithm to Galois rings lies in solving the root-finding equation from \emph{Step 2}. This is because the standard linear algebra over fields can not apply directly. To address this challenge, we propose an iterative recursive strategy which is presented in \cref{Algorithm4}.

\begin{thm}
    Let $Q(X,Y_1,\ldots,Y_s)$ be a non-zero multivariate polynomial, every $f(X)\in GR(p^a,\ell)[X]$ satisfies that \Cref{eq:3} is found by the \Cref{Algorithm4}.
\end{thm}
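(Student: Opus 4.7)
The plan is to prove the theorem by induction on $a$, the exponent of $p$ in the characteristic of $GR(p^a,\ell)$, mirroring the informal discussion given earlier in Section~1. The base case $a=1$ reduces to the field $\F_{p^\ell}$, where standard root-finding for a nonzero multivariate polynomial of bounded $Y$-degree already works and yields all $f(X)$ satisfying $Q(X,f(X),f(\gamma X),\ldots,f(\gamma^{s-1}X))=0$. For the inductive step over $GR(p^a,\ell)$, I would first normalize $Q$ by dividing out the largest power $X^{\ell}$ common to $A_0,\ldots,A_s$ (which never loses solutions since the $\gamma^i$ are units) and then split on whether $p\mid Q$ or not.

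Case 1: $p\nmid Q$. I would expand $A_i(X)=\sum_j a_{i,j}X^j$ and locate the smallest index $h$ such that some $a_{i,h}$ is a unit; then the reduction $C(X)=Q(X,f(X),\ldots,f(\gamma^{s-1}X))=0$ yields a linear system in the unknown coefficients $f_0,f_1,\ldots,f_{k-1}$ whose $X^r$-coefficient for $r\geq h$ reads
\[
a_{0,r}+\sum_{i=1}^{k-1} f_i\Bigl(\sum_{j=1}^{s}a_{j,r-i}\gamma^{(j-1)i}\Bigr)=0.
\]
The key observation is that the coefficient of $f_i$ at the pivot level $h$ is exactly $B(\gamma^i)$, where $B(X)=a_{1,h}+a_{2,h}X+\cdots+a_{s,h}X^{s-1}$ has degree at most $s-1$. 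By \Cref{rootslemma}, $B$ admits at most $s-1$ units as roots, so at most $s-1$ indices $i$ fail to produce a unit pivot. Guessing the values of these ``free'' coefficients (there are at most $p^{a\ell(s-1)}$ such guesses) reduces the remaining system to an upper-triangular one with unit diagonal, which is uniquely solvable by back substitution; this is exactly what Algorithm~4 enumerates.

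Case 2: $p\mid Q$. Write $Q=p\cdot Q'$ with $Q'\in GR(p^a,\ell)[X,Y_1,\ldots,Y_s]$ nonzero. Reducing $C(X)$ modulo $p$ trivially vanishes, so the equation $C(X)=0$ over $GR(p^a,\ell)$ is equivalent to $Q'(X,f(X),\ldots,f(\gamma^{s-1}X))\equiv 0 \pmod{p^{a-1}}$, an instance of the same problem one level down. By the inductive hypothesis applied to $GR(p^{a-1},\ell)$, every solution $\bar f\in GR(p^{a-1},\ell)[X]$ is recovered by the recursive call of Algorithm~4; each such $\bar f$ is then Hensel-lifted through the $p^{a-1}$-coset $\bar f+p^{a-1}\cdot GR(p^a,\ell)[X]^{<k}$ to yield all $f\in GR(p^a,\ell)[X]$ solving the original equation. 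Combining the two cases shows that no solution escapes the enumeration, completing the induction.

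The main obstacle, as anticipated in the informal description, is Case~1: arguing that the pivot structure at level $h$ is governed by the univariate polynomial $B(X)$ and then bounding the number of its unit roots by \Cref{rootslemma}. The subtlety is that $B(X)$ lives over $GR(p^a,\ell)$ rather than a field, so I must verify that the constant term of $B$ (as a polynomial in the $f_i$'s evaluated at $\gamma^i$) being a unit really does allow clean back substitution---this is where the normalization of $h$ as the \emph{smallest} index with a unit coefficient, combined with the fact that all $\gamma^i$ are units, is essential.
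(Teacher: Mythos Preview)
Your proposal does not prove the stated theorem, because it analyzes the wrong algorithm. \Cref{Algorithm4} is a coefficient-by-coefficient root-finding procedure: at each step it strips out the largest power of $X$ dividing $Q$, evaluates the resulting $M$ at $X=0$ to obtain a univariate polynomial $M(0,Y,\ldots,Y)$, enumerates its roots, and for each root $\zeta$ substitutes $Y_j\mapsto \gamma^{j-1}XY_j+\zeta$ before recursing on the next coefficient index. It never solves a global linear system in the $f_i$, never forms the auxiliary polynomial $B(X)=a_{1,h}+\cdots+a_{s,h}X^{s-1}$, never guesses $s-1$ ``free'' coefficients, and never reduces the ring to $GR(p^{a-1},\ell)$. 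What you have written is essentially the argument behind \Cref{freemodule} and \Cref{Algorithm5}, which bound the number of solutions and describe a different solver; it does not address why \emph{this} particular recursion visits every solution.

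The paper's proof, by contrast, follows the structure of \Cref{Algorithm4} directly. Given a solution $f(X)=\sum_i f_iX^i$, set $g_j(X)=\sum_{i\ge j}f_iX^{i-j}$ and let $Q_i$, $M_i$ denote the polynomials $Q$, $M$ in the $i$-th iteration. One shows by induction on $i$ that $Q_i(X,g_i(X),g_i(\gamma X),\ldots,g_i(\gamma^{s-1}X))=0$, using the identity $Xg_{j+1}(X)+f_j=g_j(X)$ together with the update rule $Q_{j+1}=X^{-r_j}M_j(X,XY_1+f_j,\ldots,\gamma^{s-1}XY_s+f_j)$. Dividing by $X^{r_i}$ and setting $X=0$ then gives $M_i(0,f_i,\ldots,f_i)=0$, so the true coefficient $f_i$ is among the roots enumerated in line~3 of the algorithm at every level. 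Your induction on $a$ and the $B(X)$ pivot analysis are irrelevant to this mechanism; if you want to salvage your write-up, redirect it toward \Cref{freemodule} and \Cref{Algorithm5}, where it belongs.
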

\begin{proof}
Let $f(X)=\sum_{i=0}^{k-1}f_iX^i$ satisfies \Cref{eq:3}. \Cref{Algorithm4} will output the coefficient of $f(X)$ one by one. Let $Q_i(X,Y_1,\ldots,Y_s)$ and
$M_i(X,Y_1,\ldots,Y_s)=X^{-r_i}Q_i(X,Y_1,\ldots,Y_s)$ be the  $M(X,Y_1,\ldots,Y_s)$ and $Q(X,Y_1,\ldots,Y_s)$ in the $i$-th iteration of the "for" loop in \Cref{Algorithm4}. 
Note that it holds
\[
Q_{i+1}(X,Y_1,\ldots,Y_s)=X^{-r_i}M_i(X,XY_1+f_i,\gamma XY_2+f_i,...,\gamma^{s-1}XY_s+f_i).
\]
Since $X$ does not divide $M_i(X,Y_1,\ldots,Y_s)$, it holds $M_i(0,Y_1,\ldots,Y_s)\neq 0$. Let $g_j(X)=\sum_{i=j}^{k-1}f_iX^{i-j}$. We prove that 
$Q_i(X,g_i(X),g_i(\gamma X),\ldots,g_i(\gamma^{s-1}X))=0$ by induction on $i$, where the induction base $i = 0$ is obvious as $g_0(X)=f(X)$ satisfying \Cref{eq:3}. Assume that this holds for $i=j$, i.e., $Q_j(X,g_j(X),g_j(\gamma X),\ldots,g_j(\gamma^{s-1}X))=0$. Then, for $i=j+1$, we observe that $Xg_{j+1}(X)+f_j=g_j(X)$. This means
\[
\begin{aligned}
&Q_{j+1}(X,g_{j+1}(X),g_{j+1}(\gamma X),\ldots,g_{j+1}(\gamma^{s-1}X))\\
&=M_j(X,Xg_{j+1}(X)+f_j,\gamma Xg_{j+1}(\gamma X)+f_j,\ldots,\gamma^{s-1}Xg_{j+1}(\gamma^{s-1}X)+f_j)\\
&=X^{-r_j}Q_j(X,g_j(X),g_j(\gamma X),\ldots,g_j(\gamma^{s-1}X))=0.
\end{aligned}
\]
We complete the induction.
Then $Q_i(X,g_i(X),g_i(\gamma X),\ldots,g_i(\gamma^{s-1}X))=0$ implies 
\[
M_i(X,g_i(X),g_i(\gamma X),\ldots,g_i(\gamma^{s-1}X))=0.
\]
We set $X=0$ to obtain $M_i(0,f_i,\ldots,f_i)=M_i(0,g_i(0),\ldots,g_i(0))=0$.
Thus, in the $i$-th iteration of the "for" loop in  \Cref{Algorithm4}, the algorithm will output the coefficient $f_i$. We complete the proof.
\end{proof}

\begin{algorithm}
\caption{Find all $f(X)\in GR(p^a,\ell)[X]$ satisfies that \Cref{eq:3}.}
\label{Algorithm4}
\begin{algorithmic}[1]
\Statex \textbf{Input:} $(Q(X, Y_1, \ldots, Y_s), k, i)$ with 
\[
Q(X, Y_1, \ldots, Y_s) = A_0(X) + A_1(X)Y_1 + A_2(X)Y_2 + \ldots + A_s(X)Y_s,
\]
where $\deg[A_0] \leq D + k - 1$ and $\deg[A_i] \leq D$ for every $1 \leq i \leq s$, such that for all $0 \leq i < N$ and $0 \leq j \leq m - s$,
\[
Q(\gamma^{im+j}, y_{im+j}, \ldots, y_{im+j+s-1}) = 0
\]

\Statex \textbf{Output:} All polynomials $f(X) \in GR(p^a,\ell)[X]$ such that
\[
Q\left(X, f(X), f(\gamma X), \ldots, f(\gamma^{s-1}X)\right)= 0.
\]

\State Find the largest integer $r$ for which $Q(X,Y_1,\ldots, Y_s)/X^r$ is still a polynomial.
\State $M(X,Y_1,\ldots,Y_s)\leftarrow Q(X,Y_1,\ldots, Y_s)/X^r$
\State Find all roots of the polynomial $M(0,Y,Y,\ldots,Y)$
\For{each of the distinct roots $\zeta$ of $M(0,Y,Y,\ldots,Y)$}
\State $f_i\leftarrow \zeta$
\If{$i = k - 1$}
    \State output $f(X)=f_0+f_1X+\ldots+f_{k-1}X^{k-1}$
\Else
    \State $Q'(X,Y_1,\ldots,Y_s)=M(X,XY_1+\zeta,\gamma XY_2+\zeta,\ldots,\gamma^{s-1}XY_s+\zeta)$
    \State  Run \cref{Algorithm4} with input $(Q'(X,Y_1,\ldots,Y_s),k,i+1)$. 
  \EndIf
\EndFor

\end{algorithmic}
\end{algorithm}

Now, we analyze the error correcting capability of the algorithm. To satisfy the constraint in \cref{chooseD}, we pick
\[
D= \lfloor\frac{N(m-s+1)-k+1}{s+1}\rfloor
\]
This along with the constraint in \cref{chooset}, implies that the algorithm works as long as $t>\frac{D+k-1}{m-s+1}$.
The above is satisfied if we choose
\[
t>\frac{\frac{N(m-s+1)-k+1}{s+1}+k-1}{m-s+1}=\frac{N(m-s+1)+s(k-1)}{(s+1)(m-s+1)}.
\]
Thus, we would be fine if we pick
\[
t>N\left(\frac{1}{s+1}+\left(\frac{s}{s+1}\right)\left(\frac{m}{m-s+1}\right)\cdot R\right)
\]

\begin{thm}
    \cref{Algorithm4} can list decode Folded Reed-Solomon code with folding parameter $m\geq 1$ and rate $R$ up to $\frac{s}{s+1}\left(1-\frac{mR}{m-s+1}\right)$ fraction of errors.
\end{thm}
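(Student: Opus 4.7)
The plan is to assemble the three preceding ingredients in sequence; in fact, almost all of the work is already in place, and the theorem is essentially bookkeeping on top of \cref{chooseD}, \cref{chooset}, and the correctness theorem for \cref{Algorithm4}. First, I would fix the interpolation degree $D = \lfloor \frac{N(m-s+1)-k+1}{s+1}\rfloor$. With this choice the number of free coefficients $(s+1)(D+1)+k-1$ in $Q(X,Y_1,\ldots,Y_s)$ strictly exceeds the $N(m-s+1)$ interpolation constraints, so \cref{col_solution} (underdetermined homogeneous linear systems over a Galois ring have nontrivial solutions) supplies a nonzero solution, and \textbf{Step 1} of \cref{Algorithm3} succeeds. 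This is exactly the content of \cref{chooseD}.

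Second, I would verify that the agreement threshold of \cref{chooset} unwinds into the desired error radius. Substituting the chosen $D$ into $t > \frac{D+k-1}{m-s+1}$ and using $R = k/(mN)$ gives
\[
t \;>\; \frac{\lfloor\frac{N(m-s+1)-k+1}{s+1}\rfloor + k - 1}{m-s+1} \;\leq\; N\!\left(\frac{1}{s+1} + \frac{s}{s+1}\cdot\frac{m}{m-s+1}\cdot R\right),
\]
which is a one-line arithmetic check (the only care needed is to absorb the floor harmlessly). By \cref{chooset}, any message polynomial $f(X) \in GR(p^a,\ell)[X]^{<k}$ whose encoding agrees with $\mathbf{y}$ on more than $t$ folded positions must satisfy the functional equation $Q(X,f(X),f(\gamma X),\ldots,f(\gamma^{s-1}X)) = 0$.

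Third, I would invoke the correctness theorem already established for \cref{Algorithm4}, which guarantees that \textbf{Step 2} enumerates \emph{every} polynomial $f(X)$ satisfying this functional equation. Therefore \cref{Algorithm3} outputs every codeword within agreement fraction $t/N$ of $\mathbf{y}$, and the error fraction the algorithm tolerates is
\[
1 - \frac{t}{N} \;<\; 1 - \frac{1}{s+1} - \frac{s}{s+1}\cdot\frac{m}{m-s+1}\cdot R \;=\; \frac{s}{s+1}\!\left(1 - \frac{mR}{m-s+1}\right),
\]
which is exactly the claimed decoding radius.

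The genuine obstacle in lifting the folded Reed--Solomon list decoder to Galois rings — that one cannot solve the root-finding step via standard linear algebra over a field — has already been handled by the recursive coefficient-lifting strategy of \cref{Algorithm4}. With that machinery in hand, no further Galois-ring subtlety arises in the present theorem: the interpolation step is carried out by \cref{col_solution}, the agreement-to-zero-of-$R(X)$ implication uses the root-counting bound \cref{rootslemma} for polynomials over $GR(p^a,\ell)$ (applied inside \cref{chooset}), and the remaining content is the arithmetic relating $t$, $D$, $k$, and $R$. I do not expect any step beyond these to present difficulty.
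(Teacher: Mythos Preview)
Your proposal is correct and follows essentially the same approach as the paper: fix $D = \lfloor\frac{N(m-s+1)-k+1}{s+1}\rfloor$ via \cref{chooseD}, substitute into the agreement condition $t > \frac{D+k-1}{m-s+1}$ of \cref{chooset}, and simplify to obtain $t > N\bigl(\frac{1}{s+1} + \frac{s}{s+1}\cdot\frac{m}{m-s+1}\cdot R\bigr)$, from which the error radius follows. The paper's argument is in fact even terser than yours, omitting the explicit invocation of the correctness theorem for \cref{Algorithm4} and treating the theorem as an immediate arithmetic consequence of the two lemmas.
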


To show that our list decoding algorithm runs in polynomial time, we need to bound the output size of root finding algorithm. We next show that the number of solutions in the root finding step is bounded and moreover all the solutions lie within a free module.

\begin{thm}
Using the notation defined above, we consider two cases:
\begin{itemize}
    \item If $p\nmid Q(X_1,X_2,\ldots,X_s)$, there are at most $p^{a\ell(s-1)}$ solutions of $f(X)$ to the equations 
    \begin{equation}\label{eq:4}
    A_0(X)+A_1(X)f(X)+A_2(X)f(\gamma X)+\ldots A_s(X)f(\gamma^{s-1}X)= 0
    \end{equation}
    and all the solutions lie in a $GR(p^{a},\ell)$ free module.
    \item Otherwise $p^{i} \mid Q(X_1,X_2,\ldots,X_s)$ and $p^{i+1}\nmid Q(X_1,X_2,\ldots,X_s)$, there are at most $p^{(a-i)\ell(s-1)}$ solutions of $f(X)$ to the equations 
    \[
    A_0(X)+A_1(X)f(X)+A_2(X)f(\gamma X)+\ldots A_s(X)f(\gamma^{s-1}X)= 0
    \]
    and all the solutions lie in a $GR(p^{a-i},\ell)$-linear free module.
\end{itemize}\label{freemodule}
\end{thm}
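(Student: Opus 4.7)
The plan is to realize the identity $Q(X,f(X),f(\gamma X),\ldots,f(\gamma^{s-1}X))=0$ as a linear system in the unknown coefficients $f_0,\ldots,f_{k-1}$ of $f$, and then to count its solutions by exploiting a lower-triangular structure modulo $p$ together with \cref{rootslemma}. The argument follows the sketch given in the introduction.

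For the first bullet, I would write $A_i(X)=\sum_j a_{i,j}X^j$ and let $h$ be the smallest integer for which $\{a_{0,h},a_{1,h},\ldots,a_{s,h}\}$ does not consist entirely of multiples of $p$; such $h$ exists because $p\nmid Q$. Extracting the coefficient of $X^r$ from the identity produces, for each $r\ge h$, an equation
\[
a_{0,r}+\sum_{j=0}^{k-1}f_j\left(\sum_{i=1}^{s}a_{i,r-j}\gamma^{(i-1)j}\right)=0.
\]
Setting $r=h+t$, I would observe that for $j>t$ one has $r-j<h$, so the definition of $h$ forces the coefficients of $f_{t+1},\ldots,f_{k-1}$ to be divisible by $p$, while the coefficient of $f_t$ equals exactly $B(\gamma^t)$ with $B(X):=\sum_{i=1}^{s}a_{i,h}X^{i-1}$. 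Thus the matrix of the system is lower-triangular modulo $p$, with diagonal entries $B(\gamma^0),B(\gamma^1),\ldots$, and by \cref{rootslemma} the polynomial $B$ (which is nonzero modulo $p$ by the choice of $h$) has at most $s-1$ units among its roots. Consequently $B(\gamma^t)$ is a unit of $GR(p^a,\ell)$ for all but at most $s-1$ values of $t$.

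From here I plan to perform back-substitution over the full ring: at every index $t$ for which $B(\gamma^t)$ is a unit, $f_t$ is uniquely determined in $GR(p^a,\ell)$ by the previously chosen $f_0,\ldots,f_{t-1}$; at each of the at most $s-1$ exceptional indices, $f_t$ ranges freely over $GR(p^a,\ell)$. This simultaneously exhibits the solution set as an affine free $GR(p^a,\ell)$-submodule of rank at most $s-1$ and bounds its cardinality by $p^{a\ell(s-1)}$, establishing the first bullet. For the second bullet ($p^i\|Q$), I would factor $Q=p^iQ'$ with $p\nmid Q'$; then $Q(X,f,\ldots)=0$ over $GR(p^a,\ell)$ is equivalent to $Q'(X,f,\ldots)\equiv 0\pmod{p^{a-i}}$, and reducing $Q'$ modulo $p^{a-i}$ preserves the unit property of any coefficient that was a unit in $GR(p^a,\ell)$. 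Applying the first bullet to this reduced instance over $GR(p^{a-i},\ell)$ then yields at most $p^{(a-i)\ell(s-1)}$ solutions, all contained in a free $GR(p^{a-i},\ell)$-module of rank at most $s-1$.

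The step I expect will require the most care is justifying that back-substitution works over the whole ring rather than merely modulo $p$: the subdiagonal in the system involves $p$-multiples of the higher $f_j$'s, so the matrix is only triangular modulo $p$ and one must argue that the submatrix governing the non-free coordinates is actually invertible in $GR(p^a,\ell)$. The key fact making this go through is that an element of $GR(p^a,\ell)$ is a unit if and only if its image in $\F_{p^\ell}$ is nonzero, so a matrix over $GR(p^a,\ell)$ that is invertible modulo $p$ is invertible outright; applied to the non-free block, this is what upgrades the combinatorial count $p^{a\ell(s-1)}$ into an honest free-module description of the solution set.
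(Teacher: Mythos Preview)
Your proposal is correct and follows essentially the same line as the paper's proof: both extract the triangular-mod-$p$ linear system with diagonal entries $B(\gamma^t)$, bound the number of non-unit diagonal entries by $s-1$ via the root lemma, and then argue that the remaining block is invertible over $GR(p^a,\ell)$ (the paper phrases this as Gaussian elimination with unit pivots after first factoring out a common power of $X$, you as ``invertible mod $p$ implies invertible over the ring''). The second bullet is likewise handled in both by dividing out $p^i$ and descending to $GR(p^{a-i},\ell)$.
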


\begin{proof}
    If $X$ are the common divisor of polynomials $A_0,A_1,\ldots,A_s$, we essentially just factor out the largest common power of $X$ from all of the $A_i's$, and proceed with the resulting polynomial. Let $l\geq 0$ be the largest integer such that $A_i(X)=X^l A'_i(X)$ for $0\leq i\leq s$; then $X$ does not divide all of $A'_i(X)$ and we have:
    \[
    X^l(A_0'(X)+A_1'(X)f(X)+\ldots+A_s'(X)f(\gamma^{s-1}X))= 0
    \]
    
    Then, we can apply the same argument by replacing $A_i(X)$ with $A'_i(X)$ since $A'_i(X)$ also satisfies \Cref{eq:4}.
    Hence, we now assume that $X$ is not the common divisor of $A_i(X)$. This implies that there exists some $h>0$ such that the constant term of the polynomial $A_{h}(X)$ is non-zero. We write $A_i(X)$ as
    \[
    A_i(X)=\sum_{j=0}^{D+k-1}a_{ij}X^j.
    \]
    for every $0\leq i\leq s$.
    We begin by considering the case in which $p$ does not divide $Q(X,Y_1,\ldots,Y_s)$, i.e. the g.c.d of $a_{0,0},\ldots,a_{s,D+k-1}$ is $1$. Then, we have
    \begin{align*}
0 = C(X) &= Q\left(X, f(X), f(\gamma X), \ldots, f(\gamma^{s-1}X)\right)= A_0(X) + A_1(X)f(X) + \cdots + A_s(X)f(\gamma^{s-1}X) \\
&= \sum_{j=0}^{D+k-1}a_{0,j}X^j+\sum_{i=1}^s\bigg(\sum_{j=0}^{D}a_{i,j}X^j\bigg)\bigg(\sum_{j=0}^{k-1}f_j\gamma^{(i-1)j}X^j\bigg).
\end{align*}
Let $h$ be the largest integer such that 
$p$ divides the common divisor of $\{a_{i,j}:0\leq i\leq s, 0\leq j<h \}$. This means $p$ is not the common divisor of $a_{0,h},\ldots,a_{s,h}$.
Since $C(X)=0$, each coefficient of $C(X)$ is zero. Now, we consider the coefficient of $X^r$ for $r\geq h$
\begin{equation}\label{eq:solution of f_i}
a_{0,r}+\sum_{i=1}^{s}\bigg(f_i(\sum_{j=1}^s a_{j,r-i}\gamma^{(j-1)i})\bigg)=0.
\end{equation}

\noindent
Let
$$
B(X)=a_{1,h}+a_{2,h}X+\ldots+a_{s,h}X^{s-1}.
$$
Notice that there exists an element such that $p\nmid a_{j,h},j\in [s]$, so $B(X)$ is non-zero polynomial. By \cref{lm:roots}, there are at most $s-1$ distinct $\gamma^m$ for $0\leq m\leq k-1$ such that $p\mid B(\gamma^m)$. Without loss of generality, we assume that $p\nmid B(\gamma^m)$ for $m=1,\ldots,k-s$. We fix $f_{k-s+1},\ldots,f_{k-1}$ to be any value in $GR(p^a,\ell)$. Then, we want to prove that once $f_{k-s+1},\ldots,f_{k-1}$ are fixed, $f_0,\ldots,f_{k-s}$ are unique. 
We write \Cref{eq:solution of f_i} as the linear equations $A(f_0,\ldots,f_{k-1})=0$ where $A$ is a $k\times (D+k-1-h)$ matrix. Let $D$ be the submatrix of $A$ by taking out the first $k-s$ columns. Then, we have 
\[
D=(d_{i,j}):=
\begin{pmatrix}
    B(1) & * & * & \ldots & * \\
    \circ & B(\gamma) & \ldots & \ldots & * \\
    \vdots & \vdots & \vdots & \ddots & \vdots \\
    \circ & \circ &  & \ldots & B(\gamma^{k-j-1}) \\
    \ldots & \ldots &\ldots &\ldots &\ldots
\end{pmatrix}
\]
where all the main diagonal elements $d_{i,i}(1\leq i\leq k-j)$ are $B(\gamma^{i-1})$, and the element $*$ in the upper right corner of the matrix $D$ are divided by $p$ as they can be represented as the linear combination of the elements $a_{i,j}$ for $0\leq i\leq s, 0\leq j\leq h-1$. This means the submatrix  $D$ has full rank.

By applying Gaussian elimination to eliminate the upper right corner elements of the matrix $D$, we can obtain at most a unique solution for $(f_0,f_1,\ldots,f_{k-j-1})$.

Based on the above analysis, it can be concluded that when $p$ does not divide $A_0(X),A_1(X)\\ ,\ldots,A_s(X)$, the system of equations has at most $p^{a\ell (s-1)}$ solutions. Consequently, the size of the list is at most $p^{a\ell(s-1)}$ and all solutions lie in a free module of rank $s-1$.

If $p \mid A_0(X), A_1(X),\ldots,A_s(X)$, it implies that the received corrupted codeword reduces to zero modulo $p$. Since the reduction module $p$ still corresponds to a valid codeword over the field $\F_{p^\ell}$, all candidate codewords become uniquely determined after reduction, namely the zero codeword. Thus, we can claim that all candidate codeword is divisible by $p$ and thus we replace $Q$ with $\frac{Q}{p}$ and invoke the list decoding algorithm over $GR(p^{a-1},\ell)$ instead. 
Since $p^i \mid Q(X_1,X_2,\ldots,X_s)$ and $p^{i+1} \nmid Q(X_1,X_2,\ldots,X_s)$, we can then apply the analysis of case $1$ over $GR(p^{a-i},\ell)$ to obtain the candidate codewords $\bc_1,\ldots,\bc_a$ over $GR(p^{a-i},\ell)$. Then, the real candidate codewords are $p^i\cdot \bc_1,\ldots,p^i\cdot \bc_r$. It is clear that \Cref{eq:4} has at most $p^{(a-i)\ell (s-1)}$ solutions and these solutions lie in a $GR(p^{a-i},\ell)$-linear free module of rank $s-1$.

The above procedure can be iterated $i$ times, thereby applying the analysis of case 1 over $GR(p^{a-i},\ell)$. Therefore, \Cref{eq:4} has at most $p^{(a-i)\ell (s-1)}$ solutions and these solutions lie in a free module of rank $s-1$.

\end{proof}

Motivated by the above theorem and the ideas used in its proof, we develop the following~\cref{Algorithm5} to support the process of \Cref{eq:3}.

\begin{algorithm}
\caption{Another Method for Find All $f(X)\in GR(p^a,\ell)[X]$ satisfies that Eq.(3)}
\label{Algorithm5}
\begin{algorithmic}[1]
\Statex \textbf{Input:} $A_0(X),\ldots,A_s(X)$
\Statex \textbf{Output:} All polynomials $f(X) \in GR(p^a,\ell)[X]$ such that
\[
Q\left(X, f(X), f(\gamma X), \ldots, f(\gamma^{s-1}X)\right) = 0 
\]

\State Compute $l$ such that $X^l$ is the largest common power of $X$ among $A_0(X),\ldots,A_s(X)$.
\For{every $0\leq i\leq s$}
\State $A_i(X)\leftarrow \frac{A_i(X)}{X^l}$.
\EndFor
\If{$p\nmid A_0(X),\ldots,A_s(X)$}
\State Find the smallest $i$ such that $p\nmid a_{0,i},a_{1,i},\ldots a_{s,i}(0\leq i\leq D)$ and compute $B(X)$.
\State Compute all the integer $j$ such that $B(\gamma^j)(0\leq j\leq k-1)\in (p1)$.
\State Each coefficient $f_j$ is assigned a value in $GR(p^a,\ell)$.
\State $(f_0,f_1,\ldots, f_{k-1})\leftarrow$ solve each equation similar to \Cref{eq:solution of f_i}.
\EndIf
\If{$p \mid A_0(X),\ldots,A_s(X)$}
\State Find the integer $i$ such that $p^i\mid A_0(X),\ldots,A_s(X)$ and $p^{i+1}\nmid A_0(X),\ldots,A_s(X)$.
\State $A_0(X),\ldots,A_s(X)\leftarrow \frac{A_0(X)}{p^i},\ldots,\frac{A_s(X)}{p^i}$.
\State Find the smallest $i'$ such that $p\nmid a_{0,i'},a_{1,i'},\ldots a_{s,i'}(0\leq i'\leq D)$ and compute $B(X)$.
\State Compute all the integer $j$ such that $B(\gamma^j)(0\leq j\leq k-1)\in (p1)$.
\State Each coefficient $f_j$ is assigned a value in $GR(p^{a-i},\ell)$.
\State $(f_0,f_1,\ldots, f_{k-1})\leftarrow$ solve each equation similar to \Cref{eq:solution of f_i} over $GR(p^{a-i},\ell)$.
\EndIf

\end{algorithmic}
\end{algorithm}

In this section, we have generalized the list decoding algorithm for folded Reed–Solomon codes to the Galois ring setting. We established an explicit error-correction capability and provided a detailed analysis of the root-finding step. In particular, we proved that the number of valid decoded polynomials is bounded and forms a free module, thereby ensuring algorithmic feasibility and list size control.

\section{Algorithmic Improvements to List Decoding of Folded Reed--Solomon Codes}\label{sec:galois-frs-deterministic-pruning}
In Section~\ref{section:5}, the interpolation and root-finding procedure for folded Reed--Solomon codes over Galois rings gives a low-rank free module containing all candidate codewords. In this section, we explain how to prune this free module deterministically and output exactly the codewords in the Hamming ball.

Let $\F_q:=GR(p,\ell)$ with $ q=p^\ell$. For $1\le t\le a$, we identify $GR(p^t,\ell)\cong GR(p^a,\ell)/(p^t).$ Thus, reducing a vector, codeword, or polynomial modulo $p^t$ simply means reducing each coordinate modulo $p^t$. The Teichm\"uller unit $\gamma\in GR(p^a,\ell)$ used for evaluation will also denote its image in every quotient ring $GR(p^t,\ell)$.

Let $\cC_a\subseteq (GR(p^a,\ell)^m)^N$ be the $m$-folded Reed--Solomon code of block length $N=n/m$ and rate $R=k/n$ from Section~\ref{section:5}. For $1\le t\le a$, write $\cC_t:=\cC_a \pmod{p^t}\subseteq (GR(p^t,\ell)^m)^N.$ Then $\cC_t$ is the $m$-folded Reed--Solomon code over $GR(p^t,\ell)$ with the same parameters. For $\mathbf z\in (GR(p^t,\ell)^m)^N$ and radius $\rho$, define $\mathcal L_t(\mathbf z,\rho):=\{\mathbf c\in \cC_t: d(\mathbf c,\mathbf z)\le \rho\}.$ When $t=a$, we simply write $\mathcal L(\mathbf z,\rho)$ for $\mathcal L_a(\mathbf z,\rho)$.

For a finite $GR(p^a,\ell)$-module $H$, let $\mu(H):=\min\{s:\ H \text{ can be generated by } s \text{ elements over } GR(p^a,\ell)\}$. By Nakayama's lemma, or equivalently Proposition~2.8 of~\cite{atiyah2018introduction}, for $GR(p^a,\ell)$ with residue field $\F_q$, $\mu(H)=\dim_{\F_q} H/pH$.

\begin{lemma}[Smith normal form over Galois rings]\label{lem:smith-normal-form}
Let $M$ be a finite free $GR(p^a,\ell)$-module. If $H\subseteq M$ is an $GR(p^a,\ell)$-submodule, then there exists an $GR(p^a,\ell)$-basis $\mathbf{e}_1,\ldots,\mathbf{e}_K$ of $M$, an integer $s\ge 0$, and integers $0\le b_1\le b_2\le\cdots\le b_s<a$ such that $H=\bigoplus_{i=1}^s p^{b_i}GR(p^a,\ell) \mathbf{e}_i$. Moreover, $s=\mu(H)$.
\end{lemma}

\begin{proof}
The decomposition follows from the Smith normal form over Galois rings~\cite[Theorem~2.5]{guo2015several}. It remains to prove $s=\mu(H)$. Reducing modulo $pH$, we obtain $H/pH\cong \bigoplus_{i=1}^s p^{b_i}GR(p^a,\ell)/p^{b_i+1}GR(p^a,\ell)$. Each quotient $p^{b_i}GR(p^a,\ell)/p^{b_i+1}GR(p^a,\ell)$ is one-dimensional over the residue field $\F_q=GR(p^a,\ell)/(p)$. Hence, $\mu(H)=\dim_{\F_q}H/pH=s$.
\end{proof}

We first recall the deterministic pruning theorem over finite fields.

\begin{thm}[Corollary~\cite{ashvinkumar2026algorithmic}]\label{thm:field-deterministic-pruning}
Let $\cC\subseteq (\F_q^m)^N$ be an $\F_q$-linear code of relative distance $\delta$. Given a received word $\mathbf z\in(\F_q^m)^N$, a parameter $0<\eta<\delta$, and an $h$-dimensional affine subspace $H\subseteq \cC$, there is a deterministic algorithm $\mathrm{DPrune}_{\F_q}(\mathbf z,H,\eta)$ that outputs $H\cap \{\mathbf c\in \cC: d(\mathbf c,\mathbf z)\le \delta-\eta\}.$ Its running time is $\left(\frac{2}{\eta}\right)^{O(2^h)} Nm\cdot \mathrm{polylog}(q).$

In the folded Reed--Solomon application over finite fields,~\cite[Corollary~3.6]{ashvinkumar2026algorithmic} shows that, after the linear-algebraic list-decoding step, all codewords within radius $1-R-\varepsilon$ are contained in an affine subspace of dimension $h=O(1/\varepsilon)$, while the distance gap satisfies $\eta=\Theta(\varepsilon)$. Hence, applying $\mathrm{DPrune}_{\F_q}$ outputs the complete finite-field list in deterministic time $\left(\frac{2}{\varepsilon}\right)^{O(2^{1/\varepsilon})} n\cdot \mathrm{polylog}(n)$ with $n=Nm.$
\end{thm}

By the root-finding analysis in Section~\ref{section:5}, for the received word $\mathbf y$ and the radius $\rho$, the first-stage decoder produces an affine free module $A=\mathbf f+U\subseteq \cC_a$ such that $\mathcal L(\mathbf y,\rho)\subseteq A.$ Here $U=GR(p^a,\ell)\mathbf v_1\oplus\cdots\oplus GR(p^a,\ell)\mathbf v_h$ is a free $GR(p^a,\ell)$-module of rank $h\le s-1.$ For $1\le t\le a$, define $A_t:=A\pmod{p^t}$ and $U_t:=U\pmod{p^t}.$ Also write $\mathbf y^{(t)}:=\mathbf y\pmod{p^t},\,\mathbf v_j^{(t)}:=\mathbf v_j\pmod{p^t}.$ In particular, $U_1=\F_q\mathbf v_1^{(1)}\oplus\cdots\oplus \F_q\mathbf v_h^{(1)}$ is an $h$-dimensional $\F_q$-linear subspace of $\cC_1$. For $w\in U_1$, write it uniquely as $\mathbf{w}=\sum_{j=1}^h \lambda_j \mathbf{v}_j^{(1)}$ where $\lambda_j\in \mathbb F_q.$ Using the standard representative of $\mathbb F_q$ in each quotient
$GR(p^{t+1},\ell)$, define $\mathbf{w}^{(t+1)}:=\sum_{j=1}^h \lambda_j \mathbf{v}_j^{(t+1)}\in U_{t+1}$ where $\lambda_j$ is represented as $\lambda_j+0\cdot p+\cdots+0\cdot p^t$.

\begin{lemma}\label{lem:one-step-lifting}
Let $1\le t<a$. Fix $\mathbf c_t\in A_t$, and let $\widehat{\mathbf c}_t\in A_{t+1}$ be any lift of $\mathbf c_t$ modulo $p^{t+1}$,i.e., $\widehat{\mathbf c}_t\equiv \mathbf c_t \pmod{p^t}.$ Then all lifts of $\mathbf c_t$ inside $A_{t+1}$ are exactly $\widehat{\mathbf c}_t+p^t\mathbf w^{(t+1)}$ where $\mathbf w\in U_1.$ Moreover, the corresponding $\mathbf w\in U_1$ is unique and the number of
all lifts of $\mathbf c_t$ is $|U_1|=q^h$.
\end{lemma}

\begin{proof}
Let $\mathbf c_{t+1}\in A_{t+1}$ be another lift of $\mathbf c_t$. Since both $\mathbf c_{t+1}$ and $\widehat{\mathbf c}_t$ lie in the same affine module $A_{t+1}$, their difference belongs to $U_{t+1}$. Therefore, $\mathbf c_{t+1}-\widehat{\mathbf c}_t=\sum_{j=1}^{h}\mu_j\mathbf v_j^{(t+1)}$ for some $\mu_j\in GR(p^{t+1},\ell)$. The condition tha $\mathbf c_{t+1}\equiv \widehat{\mathbf c}_t \pmod{p^t}$ implies that every coefficient $\mu_j$ is divisible by $p^t$. Hence, $\mathbf c_{t+1}-\widehat{\mathbf c}_t= p^t\sum_{j=1}^{h}\lambda_j\mathbf v_j^{(t+1)}$ for $\lambda_j=\mu_j/p^t\pmod{p}$. Letting $\mathbf w=\sum_{j=1}^{h}\lambda_j\mathbf v_j^{(1)}\in U_1$ gives $\mathbf c_{t+1} =\widehat{\mathbf c}_t+p^t\mathbf w^{(t+1)}.$ Conversely, every vector of this form clearly reduces to $\mathbf c_t$ modulo $p^t$.

Finally, uniqueness follows from the direct-sum representation of $U_{t+1}$. If $p^t\mathbf w^{(t+1)}=p^t\mathbf w'^{(t+1)}$ for two $\mathbf w,\mathbf w'\in U_1$, then writing $\mathbf w-\mathbf w'=\sum_{j=1}^h\lambda_j\mathbf v_j^{(1)}$ gives $p^t\lambda_j=0$ in $GR(p^{t+1},\ell)$ for every $j$. Hence, $\lambda_j=0$ in $\F_q$ for all $j$, and so $\mathbf w=\mathbf w'$.
\end{proof}

Fix $\mathbf c_t\in A_t$ and a lift $\widehat{\mathbf c}_t\in A_{t+1}$ of $\mathbf c_t$. Define $\mathbf z^{(t)}:= \mathbf z^{(t)}(\mathbf c_t,\widehat{\mathbf c}_t)\in(\F_q^m)^N$ coordinatewise as follows. For $0\le i\le N-1$, if $(\mathbf c_t)_i=(\mathbf y^{(t)})_i,$ then $(\mathbf y^{(t+1)})_i-(\widehat{\mathbf c}_t)_i\in p^tGR(p^{t+1},\ell)^m$ and we let
\[
    \mathbf z_i^{(t)}:= \left(\frac{(\mathbf y^{(t+1)})_i-(\widehat{\mathbf c}_t)_i}{p^t}\right)\bmod p \in \F_q^m.
\]
If $(\mathbf c_t)_i\ne(\mathbf y^{(t)})_i$, we define $\mathbf z_i^{(t)}:=\mathbf{0}\in\F_q^m.$

\begin{lemma} \label{lem:residual-consistency}
Let $1\le t<a$, and suppose $\mathbf c_{t+1} =\widehat{\mathbf c}_t+p^t\mathbf w^{(t+1)}\in A_{t+1}$ where $\mathbf w\in U_1.$ If a folded coordinate $i$ satisfies $(\mathbf c_{t+1})_i=(\mathbf y^{(t+1)})_i,$ then $\mathbf w_i=\mathbf z_i^{(t)}.$ Moreover, $d(\mathbf w,\mathbf z^{(t)}) \le d(\mathbf c_{t+1},\mathbf y^{(t+1)}).$
\end{lemma}

\begin{proof}
If $(\mathbf c_{t+1})_i=(\mathbf y^{(t+1)})_i$, then reducing modulo $p^t$ gives $(\mathbf c_t)_i=(\mathbf y^{(t)})_i.$ Thus, $\mathbf z_i^{(t)}$ is defined by the first case above. Moreover,
\[
(\mathbf y^{(t+1)})_i-(\widehat{\mathbf c}_t)_i= (\mathbf c_{t+1})_i-(\widehat{\mathbf c}_t)_i = p^t(\mathbf w^{(t+1)})_i.
\]
Dividing by $p^t$ and reducing modulo $p$, we obtain $\mathbf z_i^{(t)}=\left(\frac{(\mathbf y^{(t+1)})_i-(\widehat{\mathbf c}_t)_i}{p^t}\right)=\left(\frac{(\mathbf c_{t+1})_i-(\widehat{\mathbf c}_t)_i}{p^t}\right)=\mathbf w_i.$ Therefore, every coordinate on which $\mathbf c_{t+1}$ agrees with $\mathbf y^{(t+1)}$ is also a coordinate on which $\mathbf w$ agrees with $\mathbf z^{(t)}$. Thus, $d(\mathbf w,\mathbf z^{(t)})\le d(\mathbf c_{t+1},\mathbf y^{(t+1)}).$
\end{proof}

The following algorithm performs pruning layer by layer. The explicit distance check is
necessary because Lemma~\ref{lem:residual-consistency} gives only a necessary condition
for a lift to be close to the received word.

\begin{algorithm}[H]
\caption{Deterministic pruning over $GR(p^a,\ell)$}
\label{alg:layered-pruning}
\begin{algorithmic}[1]
\Statex \textbf{Input:} A received word
$\mathbf y\in (GR(p^a,\ell)^m)^N$, an affine free module
$A=\mathbf f+U\subseteq \cC_a$, and a radius $\rho<\delta$.
\Statex \textbf{Output:} $A\cap\mathcal L(\mathbf y,\rho)$.

\State Let $\eta:=\delta-\rho$. For every $1\le t\le a$, let $\mathbf y^{(t)}:=\mathbf y \pmod{p^t}$.
\State $P_1\leftarrow \mathrm{DPrune}_{\F_q}(\mathbf y^{(1)},A_1,\eta)$ where $\mathrm{DPrune}_{\F_q}$ is defined in Theorem~\ref{thm:field-deterministic-pruning}

\For{$t=1,2,\ldots,a-1$}
    \State $P_{t+1}\leftarrow \emptyset$.
    \For{each $\mathbf c_t\in P_t$}
        \State Choose a deterministic lift
        $\widehat{\mathbf c}_t\in A_{t+1}$ such that $\widehat{\mathbf c}_t\equiv \mathbf c_t \pmod{p^t}$ and construct the residual received word $\mathbf z^{(t)} = \mathbf z^{(t)}(\mathbf c_t,\widehat{\mathbf c}_t) \in (\F_q^m)^N.$
        \State $W_t(\mathbf c_t)\leftarrow
        \mathrm{DPrune}_{\F_q}(\mathbf z^{(t)},U_1,\eta)$.
        \For{each $\mathbf w\in W_t(\mathbf c_t)$}
            \State $\mathbf c_{t+1}\leftarrow
            \widehat{\mathbf c}_t+p^t\mathbf w^{(t+1)}$.
            \If{$d(\mathbf c_{t+1},\mathbf y^{(t+1)})\le \rho$}
                \State Add $\mathbf c_{t+1}$ to $P_{t+1}$.
            \EndIf
        \EndFor
    \EndFor
\EndFor

\State \textbf{return} $P_a$.
\end{algorithmic}
\end{algorithm}

\begin{thm}\label{thm:layered-pruning-correctness}
For every $1\le t\le a$, the set $P_t$ produced by Algorithm~\ref{alg:layered-pruning} satisfies $P_t=A_t\cap \mathcal L_t(\mathbf y^{(t)},\rho).$ In particular, $P_a=A\cap\mathcal L(\mathbf y,\rho)$ and the algorithm outputs the complete list $\mathcal L(\mathbf y,\rho)$.
\end{thm}

\begin{proof}
We prove the statement by induction on $t$. For $t=1$, the claim follows from Theorem~\ref{thm:field-deterministic-pruning}:
\[
    P_1=\mathrm{DPrune}_{\F_q}(\mathbf y_1,A_1,\eta) = A_1\cap\{\mathbf c\in \cC_1:d(\mathbf c,\mathbf y_1)\le \delta-\eta\}.
\]
Since $\eta=\delta-\rho$, this is exactly $P_1=A_1\cap \mathcal L_1(\mathbf y_1,\rho).$

Assume that the claim holds for some $t<a$. We prove it for $t+1$. First, let $\mathbf c_{t+1}\in A_{t+1}\cap \mathcal L_{t+1}(\mathbf y^{(t+1)},\rho).$ Define $\mathbf c_t:=\mathbf c_{t+1}\pmod{p^t}$. Reducing modulo $p^t$ cannot decrease the agreement set with the received word, so $d(\mathbf c_t,\mathbf y^{(t)})\le d(\mathbf c_{t+1},\mathbf y^{(t+1)}) \le \rho.$

By the induction hypothesis, $\mathbf c_t\in P_t$. When the algorithm processes $\mathbf c_t$, it chooses some lift $\widehat{\mathbf c}_t\in A_{t+1}$. By Lemma~\ref{lem:one-step-lifting}, there is a unique $\mathbf w\in U_1$ such that $\mathbf c_{t+1} = \widehat{\mathbf c}_t+p^t\mathbf w^{(t+1)}.$ By Lemma~\ref{lem:residual-consistency},
\[
    d(\mathbf w,\mathbf z^{(t)})\le d(\mathbf c_{t+1},\mathbf y^{(t+1)}) \le \rho = \delta-\eta.
\]
Therefore $\mathbf w$ is output by $\mathrm{DPrune}_{\F_q}(\mathbf z^{(t)},U_1,\eta)$. The algorithm then constructs the lift $\widehat{\mathbf{c}}_t+p^t \mathbf{w}^{(t+1)}=\mathbf{c}_{t+1}.$ Since this lift satisfies $d(\mathbf{c}_{t+1},\mathbf{y}^{(t+1)})\le \rho$, it passes the explicit distance check and is added to $P_{t+1}$. Thus no close lift is missed.

Conversely, every element added to $P_{t+1}$ has the form $\widehat{\mathbf c}_t+p^t\mathbf w^{(t+1)}$ with $\widehat{\mathbf c}_t\in A_{t+1}$ and $\mathbf w^{(t+1)}\in U_{t+1}$. Hence it lies in $A_{t+1}$. Moreover, it is added only after passing the test $d(\mathbf c_{t+1},\mathbf y^{(t+1)})\le \rho.$ Therefore, $P_{t+1}\subseteq A_{t+1}\cap \mathcal L_{t+1}(\mathbf y^{(t+1)},\rho).$ Together with the previous paragraph, this proves $P_{t+1}=A_{t+1}\cap \mathcal L_{t+1}(\mathbf y^{(t+1)},\rho).$ The induction is complete. Taking $t=a$ gives $P_a=A\cap\mathcal L(\mathbf y,\rho).$ Since the first-stage decoder in Section~\ref{section:5} guarantees $\mathcal L(\mathbf y,\rho)\subseteq A$, we obtain $P_a=\mathcal L(\mathbf y,\rho).$
\end{proof}

We now give the final deterministic pruning guarantee for folded Reed--Solomon codes
over Galois rings. We specialize the previous layered pruning algorithm to the FRS
decoding radius. Fix $1\le s\le m$ and define $\rho_s:=\frac{s}{s+1}\left(1-\frac{mR}{m-s+1}\right),\,\eta:=\delta-\rho_s,$ where $\delta$ is the relative distance of the finite-field code $\cC_1$.

The first-stage decoder of Section~\ref{section:5} outputs an affine free module $A=\mathbf f+U\subseteq \cC_a$ with $\mathrm{rank}(U)=h\le s-1,$ such that $\mathcal L(\mathbf y,\rho_s)\subseteq A.$  We will also use the combinatorial list-size bound proved in Section~\ref{sec:optimal-combinatorial-list-size}: for every $1\le t\le a$ and every received word $z\in (GR(p^t,\ell)^m)^N$, we have $|L_t(z,\rho_s)|\le s.$ Indeed, this follows from the $(\rho_s,s)$ average-radius list-decodability of the $m$-folded Reed--Solomon code over $GR(p^t,\ell)$.

\begin{thm}\label{thm:galois-ring-frs-deterministic-pruning}
Under the above assumptions, Algorithm~\ref{alg:layered-pruning} outputs the exact list $\mathcal L(\mathbf y,\rho_s)$ and its running time is $O(as)\cdot\left(\frac{2}{\eta}\right)^{O(2^s)} Nm\cdot \mathrm{poly}(a,s,\log q).$
\end{thm}

\begin{proof}
The correctness follows from Theorem~\ref{thm:layered-pruning-correctness}. Indeed, Algorithm~\ref{alg:layered-pruning} outputs $A\cap \mathcal L(\mathbf y,\rho_s).$ Since the first-stage decoder in Section~\ref{section:5} guarantees $\mathcal L(\mathbf y,\rho_s)\subseteq A,$ the output is exactly $\mathcal L(\mathbf y,\rho_s)$.

It remains to bound the running time. By Theorem~\ref{thm:layered-pruning-correctness}, at every layer $t$ we have $P_t=A_t\cap \mathcal L_t(\mathbf y^{(t)},\rho_s).$ By the list-size bound for the quotient code $\cC_t$, $|P_t|\le |\mathcal L_t(\mathbf y^{(t)},\rho_s)|\le s.$ Therefore the total number of finite-field pruning calls is at most $1+(a-1)s=O(as).$

Each call to $\mathrm{DPrune}_{\F_q}$ is made on an affine translate of an $h$-dimensional subspace of $\cC_1$, either $A_1$ or a translate of $U_1$. Since $h\le s-1$, Theorem~\ref{thm:field-deterministic-pruning} gives the per-call time
\[
\left(\frac{2}{\eta}\right)^{O(2^h)}Nm\cdot \mathrm{polylog}(q)\le \left(\frac{2}{\eta}\right)^{O(2^s)} Nm\cdot \mathrm{polylog}(q).
\]
The additional work is lower-order: constructing $\mathbf{z}^{(t)}$, forming the lift $\widehat{\mathbf{c}}_t+p^t \mathbf{w}^{(t+1)}$, and checking the distance to $\mathbf{y}^{(t+1)}$ only require scanning $N$ folded positions of length $m$, and therefore take $Nm\cdot\mathrm{poly}(a,s,\log q)$ time per candidate. Since $|P_t|\le s$ at every layer, these auxiliary costs are absorbed into the pruning calls. Hence the total running time is
\[
O(as)\cdot \left(\frac{2}{\eta}\right)^{O(2^s)} Nm\cdot \mathrm{poly}(a,s,\log q).
\]
\end{proof}

Finally, let us record the near-capacity parameter choice. A direct computation gives
\[
    1-R-\rho_s= 1-R-  \frac{s}{s+1} \left(1-\frac{mR}{m-s+1}\right) = \frac{1-R}{s+1}  + \frac{sR(s-1)}{(s+1)(m-s+1)}.
\]
Thus, choosing $s=\Theta(1/\varepsilon)$ and $m=\Theta(1/\varepsilon^2)$,  gives $\rho_s\ge 1-R-\varepsilon.$ Since the relative distance of $\cC_1$ satisfies $\delta=1-R+O(1/N),$ we have $\eta=\delta-\rho_s=\Theta(\varepsilon)$ for $N$ large enough. Hence the pruning step runs in time
\[
a\cdot \left(\frac{1}{\varepsilon}\right)^{O(2^{O(1/\varepsilon)})} Nm\cdot \operatorname{poly}\bigl(a,\log q,1/\varepsilon\bigr).
\]

\section{Improved Algorithmic List Size for Folded Reed--Solomon Codes}\label{section:6}
In the previous section, we presented a list decoding algorithm for FRS codes over Galois rings and derived a preliminary upper bound on the output list size. In this section, inspired by the approach in~\cite{srivastava2025improved}, we refine this \emph{algorithmic} analysis and improve the list-size bound to $O(1/\varepsilon^2)$. This improvement is still obtained within the decoding framework of Section~\ref{section:5}, by exploiting the free-module structure of the solution space. Our key insight is to leverage the module structure of the solution space and an inductive dimension-reduction argument. First, based on the analysis in the previous section, we can obtain a result analogous to Theorem~3.5 in~\cite{srivastava2025improved}, confining the solution set to a free module.

\begin{thm}
    Let $\mathcal{C}_{FRS}$ be an $m$-Folded Reed-Solomon code of blocklength $N=\frac{n}{m}$ and rate $R$. For any integer $b$, $1\leq b\leq m$, and for any $\vec{y} \in \left(GR(p^a,\ell)^m \right)^N$, there exists a free-module $\mathcal{H}$ of $GR(p^a,\ell)[X]^{<Rn}$ of rank $b-1$ such that
    \[
    \mathcal{L}\left(\vec{y},\frac{b}{b+1}(1-\frac{m}{m-b+1}R) \right) \subseteq Enc_{FRS}(\mathcal{H}).
    \]
\end{thm}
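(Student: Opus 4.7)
The plan is to combine the list decoding algorithm of Section~5, specialized to $s = b$, with the free-module characterization of the root-finding step established in \cref{freemodule}.

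First I would run \cref{Algorithm3} with parameter $s = b$ on the received word $\vec{y}$ and set the degree parameter $D = \lfloor (N(m-b+1)-k+1)/(b+1)\rfloor$. By \cref{chooseD} this choice guarantees a non-zero interpolation polynomial
\[
Q(X,Y_1,\ldots,Y_b) = A_0(X) + \sum_{i=1}^{b} A_i(X) Y_i
\]
satisfying all the interpolation constraints. Next I would translate the distance hypothesis into an agreement hypothesis: any $f\in \mathcal{L}(\vec y, \rho)$ with $\rho = \tfrac{b}{b+1}\bigl(1-\tfrac{m}{m-b+1}R\bigr)$ agrees with $\vec{y}$ on at least $t = N(1-\rho)$ folded columns, and the identity $1-\rho = \tfrac{1}{b+1}+\tfrac{b}{b+1}\cdot\tfrac{m}{m-b+1}\cdot R$ makes this threshold match the condition $t>(D+k-1)/(m-b+1)$ of \cref{chooset}. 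Therefore every such $f$ must satisfy
\[
Q\bigl(X,f(X),f(\gamma X),\ldots,f(\gamma^{b-1}X)\bigr) = 0.
\]

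With this algebraic reduction in hand, the theorem follows by invoking \cref{freemodule}: the solution set of the functional equation above inside $GR(p^a,\ell)[X]^{<k}$ is contained in a (possibly affine) free module of rank $s-1 = b-1$. Since $k = Rn$, this module $\mathcal{H}$ sits inside $GR(p^a,\ell)[X]^{<Rn}$ as required, and the image $Enc_{FRS}(\mathcal{H})$ then contains the full list $\mathcal{L}(\vec{y},\rho)$, giving exactly the desired containment.

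I expect the main subtle point to be the case $p^i\,\|\,Q$, where \cref{freemodule} only directly supplies a free module of rank $b-1$ over the smaller ring $GR(p^{a-i},\ell)$, with the actual solutions lying inside $p^i$ times this module. To conform with a conclusion over $GR(p^a,\ell)$ without inflating the rank, the plan is to lift a $GR(p^{a-i},\ell)$-basis $h_1,\ldots,h_{b-1}$ of the smaller module to polynomials $\tilde h_1,\ldots,\tilde h_{b-1}\in GR(p^a,\ell)[X]$ whose reductions mod $p$ remain $\F_{p^\ell}$-linearly independent; the $GR(p^a,\ell)$-module these lifts freely generate is then a free module of rank $b-1$ inside $GR(p^a,\ell)[X]^{<Rn}$ that contains $p^i\cdot M$, completing the embedding of the list into $Enc_{FRS}(\mathcal{H})$.
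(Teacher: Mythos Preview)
Your proposal is correct and follows exactly the route the paper takes: the paper does not give a separate proof of this theorem but simply states that it follows ``based on the analysis in the previous section,'' i.e., by running the interpolation step of \cref{Algorithm3} with $s=b$, invoking \cref{chooseD} and \cref{chooset} to force every list element to satisfy \cref{eq:3}, and then applying \cref{freemodule} to confine the solutions to a rank $b-1$ free module. Your treatment of the $p^i\,\|\,Q$ case via lifting a $GR(p^{a-i},\ell)$-basis to a $GR(p^a,\ell)$-independent family is in fact more explicit than what the paper writes; note also that one can sidestep this case entirely by choosing the interpolation polynomial $Q$ so that $p\nmid Q$ from the start, since the homogeneous interpolation system has more unknowns than equations and hence admits a solution that is nonzero modulo~$p$.
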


Building on the above result, we now quantify the size of the list restricted to the structured free module. The following lemma, adapted from \cite{guruswami2011linear}, establishes bounds on the number of codewords in the list that lie within a free module.

\begin{lemma}[Lemma~4.1  \cite{srivastava2025improved}]\label{s=1}
    Let $\mathcal{C}$ be a linear code of distance $d$ and blocklength $N$ over alphabet $\F_q^m$, and let $\mathcal{H}\subseteq \mathcal{C}$ be an affine subspace of dimension $1$. Then, for any $\vec{y}\in (\F_q^m)^N$ and integer $b\geq 1$,
    \[
    \Bigl|\mathcal{H}\cap \mathcal{L}(\vec{y},\frac{b}{b+1}d)\Bigr|\leq b.
    \]
\end{lemma}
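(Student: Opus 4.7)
The plan is a classical Johnson-style double-counting argument. Suppose toward contradiction that there exist $b+1$ distinct codewords $h_1,\dots,h_{b+1}\in \mathcal{H}\cap\mathcal{L}(\vec{y},\tfrac{b}{b+1}d)$. First I would parametrize the one-dimensional affine subspace as $\mathcal{H}=\{c_0+t\,c_1:t\in\F_q\}$ with $c_0\in\mathcal{H}$ and a fixed nonzero direction $c_1$, and write $h_j=c_0+t_j c_1$ for pairwise distinct $t_j\in\F_q$. Since $\mathcal{H}\subseteq\mathcal{C}$ and $\mathcal{C}$ is linear, the difference $h_1-h_2=(t_1-t_2)c_1$ lies in $\mathcal{C}$ and is nonzero, so $c_1$ (up to a unit scalar) is itself a nonzero codeword; in particular its Hamming weight, viewed as an element of $(\F_q^m)^N$, is at least $d$.

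Next I would double count the agreement sets $S_j:=\{i\in[N]:(h_j)_i=y_i\}$. By the definition of $\mathcal{L}$ we have $|S_j|>N-\tfrac{b}{b+1}d$, and summing over $j$ yields
\[
\sum_{j=1}^{b+1}|S_j|\;>\;(b+1)N-bd.
\]
For the matching upper bound I would split $[N]$ according to whether the direction vanishes at a position: let $I_0:=\{i\in[N]:(c_1)_i=\bzero\}$ and $I_1:=[N]\setminus I_0$, so the weight lower bound on $c_1$ gives $|I_0|\leq N-d$. For $i\in I_1$, the values $(h_j)_i=(c_0)_i+t_j(c_1)_i$ are pairwise distinct in $j$ (since the $t_j$ are distinct in $\F_q$ and $(c_1)_i\neq\bzero$), so at most one of them equals $y_i$, contributing at most $|I_1|$ in total. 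For $i\in I_0$, all $h_j$ share the common value $(c_0)_i$, so the contribution at that position is either $0$ or $b+1$, giving at most $(b+1)|I_0|$ in total. Combining,
\[
\sum_{j=1}^{b+1}|S_j|\;\leq\;|I_1|+(b+1)|I_0|\;=\;N+b|I_0|\;\leq\;N+b(N-d)\;=\;(b+1)N-bd,
\]
which contradicts the strict lower bound and completes the proof.

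The whole argument is essentially formal; the only step requiring any care is the appeal to linearity of $\mathcal{C}$ to force the direction $c_1$ to itself be a nonzero codeword, so that the minimum-distance bound controls $|I_0|$. I do not anticipate any real obstacle, since this is precisely the finite-field regime of the classical Guruswami--Sudan-type list-size bound that the statement abstracts; the genuinely novel adaptation work for the Galois-ring setting happens elsewhere in this section (in combining this lemma with the free-module structure established by the preceding theorem).
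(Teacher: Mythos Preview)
Your argument is correct and is the standard double-counting proof of this lemma. The paper itself does not give a proof here: it simply cites \cite{srivastava2025improved} and remarks that the argument there ``does not rely on any intrinsic properties of the field'' and hence carries over to Galois rings, so there is nothing further to compare.
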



\begin{rmk}
As the proof of the above lemma in \cite{srivastava2025improved} does not rely on any intrinsic properties of the field, it can be naturally extended to Galois rings. Hence, we do not elaborate on it in this work. We proceed by reformulating the techniques from Section 5 of \cite{srivastava2025improved} within the framework of Galois ring, which enables us to establish an upper bound on the decoding list size.
\end{rmk}

Let $\mathcal{H}$ be a free module of $GR(p^a,\ell)[X]^{<Rn}$ with rank $s$, so that there exist polynomials $h_0,h_1,\ldots,h_s$ such that 
\[
\mathcal{H}=\left\{h_0+\sum_{j=1}^{s}\alpha_{j}h_{j}:\forall j\in [s],\alpha_{j}\in GR(p^a,\ell) \right\}.
\]
Moreover, the set of polynomials $\{h_1,h_2,\ldots,h_s \}$ is linearly independent over $GR(p^a,\ell)$, it implies that $h_i\neq 0\pmod p$ for every $1\leq i\leq s$.

The condition that a polynomial $h=h_0+\sum_{j=1}^{s}\alpha_{j}h_{j}$ agrees with any polynomial $f$ on position $i\in [N]$ after folding can be written as the collection of $m$ equations:
\[
\forall j\in [m], \qquad h(\gamma^{(i-1)m+j-1})=f(\gamma^{(i-1)m+j-1})
\]
Writing as a linear system,
\[
\left[
\begin{array}{cccc}
h_1(\gamma^{(i-1)m}) & h_2(\gamma^{(i-1)m}) & \cdots & h_s(\gamma^{(i-1)m}) \\
h_1(\gamma^{(i-1)m+1}) & h_2(\gamma^{(i-1)m+1}) & \cdots & h_s(\gamma^{(i-1)m+1}) \\
\vdots & \vdots & \ddots & \vdots \\
h_1(\gamma^{(i-1)m+m-1}) & h_2(\gamma^{(i-1)m+m-1}) & \cdots & h_s(\gamma^{(i-1)m+m-1})
\end{array}
\right]
\left[
\begin{array}{c}
\alpha_1 \\
\alpha_2 \\
\vdots \\
\alpha_s
\end{array}
\right]
=
\left[
\begin{array}{c}
(f - h_0)(\gamma^{(i-1)m}) \\
(f - h_0)(\gamma^{(i-1)m+1}) \\
\vdots \\
(f - h_0)(\gamma^{(i-1)m+m-1})
\end{array}
\right]
\]

Let us call the $m\times s$ matrix appearing above as $A_i$ for $i\in [N]$, and denote $r_i=rank_M(A_i)$. The following scenario differs significantly from the case in a field; a relevant analysis is now provided.

\begin{lemma}\label{lm:equiv}
    Let $\F_{p^\ell}$ be the residue field of $GR(p^a,\ell)$. The polynomials $f_1,f_2,\ldots,f_s\in GR(p^a,\ell)[X]$ are linearly independent over $GR(p^a,\ell)$ if and only if $f_1,f_2,\ldots,f_s\pmod p\in \F_{p^\ell}[X]$ are linearly independent over $\F_{p^\ell}$.
\end{lemma}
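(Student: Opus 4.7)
The plan is to prove both directions by contrapositive, exploiting the structure of $GR(p^a,\ell)$ as a finite local ring with maximal ideal $(p)$, and in particular the two elementary facts that (i) every nonzero element admits a unique representation $p^k u$ with $u$ a unit, and (ii) for any $g(X) \in GR(p^a,\ell)[X]$, one has $p^{a-1} g(X) = 0$ if and only if $g(X) \equiv 0 \pmod{p}$, since $p^{a-1} \cdot p^i = 0$ for $i \geq 1$ in the $p$-adic expansion.

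For the forward direction ($GR$-independence implies $\F_{p^\ell}$-independence of reductions), I would argue the contrapositive. Assume $\sum_{i=1}^{s} \bar{c_i}\,\bar{f_i} = 0$ in $\F_{p^\ell}[X]$ is a nontrivial relation. Lift each $\bar{c_i}$ to a unit $c_i \in GR(p^a,\ell)$ when $\bar{c_i}\neq 0$, and set $c_i = 0$ otherwise. Then $\sum_i c_i f_i \equiv 0 \pmod{p}$, so multiplying by $p^{a-1}$ gives
\[
\sum_{i=1}^{s} (p^{a-1} c_i) f_i = 0
\]
in $GR(p^a,\ell)[X]$. Since $p^{a-1} c_i \neq 0$ whenever $c_i$ is a unit, this is a nontrivial $GR(p^a,\ell)$-linear relation among the $f_i$'s, contradicting their independence.

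For the backward direction, again by contrapositive, suppose $\sum_{i=1}^{s} \alpha_i f_i = 0$ is a nontrivial relation over $GR(p^a,\ell)$. Write each nonzero $\alpha_i$ as $\alpha_i = p^{k_i} u_i$ with $u_i$ a unit, and let $k = \min\{k_i : \alpha_i \neq 0\}$. Multiplying the relation by $p^{a-1-k}$ annihilates all terms with $k_i > k$ (since the resulting power of $p$ reaches $a$), and leaves
\[
p^{a-1} \sum_{i\,:\,k_i = k} u_i f_i = 0.
\]
By fact (ii) above, this forces $\sum_{k_i = k} u_i f_i \equiv 0 \pmod{p}$, i.e., $\sum_{k_i = k} \bar{u_i}\,\bar{f_i} = 0$ in $\F_{p^\ell}[X]$. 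Since every $\bar{u_i}$ is nonzero in $\F_{p^\ell}$ and the index set $\{i : k_i = k\}$ is nonempty, this is a nontrivial dependence among the $\bar{f_i}$.

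The delicate step, and the one that distinguishes the Galois ring case from the field case, is the ``$p$-valuation normalization'' used in the backward direction: a $GR(p^a,\ell)$-linear combination can vanish even when no individual coefficient is a unit, so one cannot just reduce the original relation modulo $p$. The multiplication-by-$p^{a-1-k}$ trick is what converts such a ``zero-divisor'' relation into one whose coefficients are units, making the reduction modulo $p$ informative. Once this normalization is in place, both directions are symmetric and mechanical.
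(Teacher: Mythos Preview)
Your proof is correct and follows essentially the same route as the paper. Both directions are argued by contrapositive using the same two devices: lifting a mod-$p$ relation and multiplying by $p^{a-1}$ to produce a nontrivial $GR$-relation, and normalizing a $GR$-relation by the minimal $p$-valuation of its coefficients before reducing mod $p$; the only cosmetic difference is that where the paper divides all coefficients by $p^k$ and then reduces, you instead multiply by $p^{a-1-k}$, which is the same maneuver.
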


\begin{proof}
Let $f_1,f_2,\ldots,f_s\in GR(p^a,\ell)[X]$. We assume that $f_1,f_2,\ldots,f_s\pmod p\in \F_{p^\ell}[X]$ are linearly independent over $\F_{p^\ell}$. If $f_1,f_2,\ldots,f_s\in GR(p^a,\ell)[X]$ are linearly dependent over $GR(p^a,\ell)$, i.e. there exists $c_1,\ldots,c_s\in GR(p^a,\ell)$, not all zero such that $c_1f_1+c_2f_2+\ldots +c_sf_s=0$.  Let $i$ be the largest integer such that $p^i\mid c_1,\ldots,c_s$. Then we assume $c_j'=\frac{c_j}{p^i}$ for $j\in [s]$, which implies that $c_1',\ldots c_s'\pmod p$ are not all zero. Thus,
    \[
    c_1'f_1+c_2'f_2+\ldots +c_s'f_s=0\pmod p.
    \]
    This contradicts the assumption, and therefore $f_1,f_2,\ldots,f_s\in GR(p^a,\ell)[X]$ are linearly independent over $GR(p^a,\ell)$.
    
    On the other hand, we assume that $f_1,f_2,\ldots,f_s$ are linear independent over $GR(p^a,\ell)$. If $p \mid f_1,\ldots,f_s$, then $p^{a-1}f_1+p^{a-1}f_2+\ldots +p^{a-1}f_s=0$. Hence, $f_1,f_2,\ldots,f_s\pmod p$ are not all zero polynomial.
    
    If $f_1,f_2,\ldots,f_s\pmod p$ are linearly dependent over $\F_{p^\ell}$, i.e. there exists $c_1,c_2,\ldots,c_s\in \F_{p^\ell}$ such that:
    \[
    c_1f_1+c_2f_2+\ldots +c_sf_s=0\pmod p.
    \]
    Therefore, we obtain:
    \[
     c_1p^{a-1}f_1+c_2p^{a-1}f_2+\ldots +c_sp^{a-1}f_s=0
    \]
    This contradicts the assumption, and therefore $f_1,f_2,\ldots,f_s \pmod p\in \F_{p^\ell}[X]$ are linearly independent over $\F_{p^\ell}$.
    \label{lemlinear}
\end{proof}

\begin{lemma}\label{lm:det}
    Let $\F_{p^\ell}$ be the residue field of $GR(p^a,\ell)$ and $\gamma\in \F_{p^\ell}^{*}$ be a generator. The polynomials $f_1,f_2,\ldots,f_s\in GR(p^a,\ell)[X]^{<Rn}$ are linearly independent over $GR(p^a,\ell)$ if and only if determinant
    \[
\begin{pmatrix}
f_1(X) & f_2(X) & \cdots & f_s(X) \\
f_1(\gamma X) & f_2(\gamma X) & \cdots & f_s(\gamma X) \\
\vdots & \vdots & \ddots & \vdots \\
f_1(\gamma^{s-1} X) & f_2(\gamma^{s-1} X) & \cdots & f_s(\gamma^{s-1} X)
\end{pmatrix}
\]
is non-zero as a polynomial in $\F_{p^\ell}[X]$ i.e. it remains a nonzero polynomial when modulo $p$.
\end{lemma}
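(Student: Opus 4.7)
The plan is to reduce the statement to the analogous claim over the residue field $\F_{p^\ell}$ and then handle the field case by exploiting the structure of the substitution $\sigma : f(X) \mapsto f(\gamma X)$. By \Cref{lm:equiv}, the $f_j$'s are $GR(p^a,\ell)$-linearly independent if and only if their reductions $\bar f_j \pmod p$ are $\F_{p^\ell}$-linearly independent; moreover the determinant $\det\bigl(f_j(\gamma^{i-1}X)\bigr)$, reduced mod $p$, equals $\det\bigl(\bar f_j(\bar\gamma^{i-1}X)\bigr)$, where $\bar\gamma$ is a generator of $\F_{p^\ell}^*$. Hence it suffices to prove the equivalence for polynomials over the field $\F := \F_{p^\ell}$.

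The easy direction is that $\F$-linear dependence forces the determinant to vanish: a relation $\sum_j c_j f_j = 0$ with $c_j \in \F$ not all zero, after substituting $X \mapsto \gamma^{i-1}X$, exhibits the same scalar vector as a non-trivial linear relation among the columns of $M(X) := \bigl(f_j(\gamma^{i-1}X)\bigr)_{i,j}$, so $\det M(X) = 0$ as a polynomial. For the converse I would write $f_j(X) = \sum_{k=0}^{Rn-1} a_{j,k}X^k$, assemble the $s \times Rn$ matrix $A = (a_{j,k})$, and record the factorization
\[
M(X) \;=\; V \cdot D(X) \cdot A^{T}, \quad V = \bigl(\gamma^{(i-1)k}\bigr)_{\substack{1 \leq i \leq s \\ 0 \leq k < Rn}}, \quad D(X) = \mathrm{diag}(1, X, X^{2}, \ldots, X^{Rn-1}).
\]
The Cauchy--Binet formula then yields
\[
\det M(X) \;=\; \sum_{\substack{S \subseteq \{0,1,\ldots,Rn-1\} \\ |S| = s}} \det(V^{(S)}) \cdot X^{\sum_{k\in S} k} \cdot \det(A^{(S)}),
\]
in which every Vandermonde factor $\det(V^{(S)})$ is non-zero because the elements $\gamma^k$ for $k \in S$ are pairwise distinct units in $\F^*$.

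If the $\bar f_j$'s are $\F$-linearly independent then $A$ has rank $s$ and at least one minor $\det(A^{(S^*)})$ is non-zero. The main obstacle is that distinct $s$-subsets $S$ can share the same total degree $\sum_{k\in S} k$, so the corresponding contributions to $\det M(X)$ could in principle cancel. I expect to resolve this either by choosing $S^*$ extremally (for instance, lexicographically smallest with $\det(A^{(S^*)}) \neq 0$) and tracking Vandermonde ratios to rule out cancellation at the degree $\sum_{k \in S^*} k$, or by a Casoratian-style minimal-relation argument: the hypothesis $\det M(X) = 0$ produces polynomials $h_1, \ldots, h_s \in \F[X]$, not all zero, with $\sum_j h_j(X) f_j(\gamma^{i-1}X) = 0$ for every $i = 1, \ldots, s$, and using that $\sigma$ has finite order $q = p^\ell - 1$ on $\F(X)$ with fixed subring $\F[X^q]$, together with the degree bound $\deg f_j < Rn \leq q$, one can project the identity $\sum_j h_j f_j = 0$ onto non-overlapping degree bands of length $q$ and extract a genuine $\F$-linear dependence among the $f_j$'s, contradicting independence.
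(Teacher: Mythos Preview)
Your reduction via \cref{lm:equiv} to the residue-field statement is exactly what the paper does; the paper then simply cites \cite{guruswami2016explicit} for the field case and stops, whereas you go further and sketch an actual proof over $\F_{p^\ell}$. So at the structural level your plan coincides with the paper's, and the interesting content of your proposal lies in the field-case argument that the paper outsources.

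Of your two sketches, the Cauchy--Binet route is not complete as written: the cancellation obstacle you flag is genuine, and the ``lexicographically smallest $S^*$'' idea does not by itself rule out cancellation at degree $\sum_{k\in S^*}k$, since other $s$-subsets with the same degree-sum can still contribute nonzero minors with arbitrary Vandermonde ratios. I would drop this branch unless you can exhibit a monomial of $\det M(X)$ that is provably contributed by a \emph{unique} $S$.

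The Casoratian route is the right one and does work, but your description elides the crucial step. From $\det M(X)=0$ you get a column relation $\sum_j h_j(X)\,\mathbf{col}_j=0$ with $h_j\in\F(X)$; what you must show, by induction on $s$ (the ``minimal-relation'' part), is that after normalizing $h_s=1$ the remaining $h_j$ are forced to satisfy $h_j(\gamma X)=h_j(X)$, hence $h_j\in\F(X^q)$. Only \emph{after} this step does the degree-band projection of $\sum_j h_j f_j=0$ make sense: with $h_j\in\F(X^q)$ and $\deg f_j<Rn\le q$, the $f_j$'s occupy a single length-$q$ band and one reads off a scalar relation $\sum_j c_j f_j=0$ with $c_j\in\F$. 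As written, your sketch jumps from arbitrary $h_j\in\F[X]$ to the projection, which is where a reader would object. Making the $\sigma$-invariance step explicit closes the argument.
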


\begin{proof}
    According to \cref{lm:equiv}, $f_1,f_2,\ldots,f_s\in GR(p^a,\ell)[X]$ are linearly independent over $GR(p^a,\ell)$ and $f_1,f_2,\ldots,f_s\pmod p\in \F_{p^\ell}[X]$ are linearly independent over $\F_{p^\ell}$ are equivalent. Thus, the proof of the problem can be established by referring to the results in \cite{guruswami2016explicit}.
\end{proof}

\begin{thm}
Let $\cC_{FRS}$ be an $m$-Folded Reed-Solomon code of blocklength $N=\frac{n}{m}$ and rate $R$ over alphabet $GR(p^a,\ell)$. Suppose $\mathcal{H}$ be a free module of $GR(p^a,\ell)[X]^{<Rn}$ with rank $s$, and $r_i$ denotes the McCoy rank of the matrix $A_i$ associated with the $i$-th coordinate position, as defined above. Then we have:
\[
 \sum_{i=1}^{N}(s-r_i)\leq \frac{s\cdot Rn}{m-s+1}.
\]
\end{thm}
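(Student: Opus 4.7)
The plan is to reduce the problem to the residue field $\F_{p^\ell}$ and then run a Wronskian-style root-counting argument. Setting $\bar{h}_j := h_j \bmod p \in \F_{p^\ell}[X]^{<Rn}$, \cref{lm:equiv} guarantees that $\bar{h}_1, \ldots, \bar{h}_s$ are linearly independent over $\F_{p^\ell}$, and \cref{lm:det} then supplies a non-zero polynomial
\[
\bar{D}(X) := \det\bigl[\bar{h}_j(\gamma^{l-1} X)\bigr]_{1 \le l,j \le s} \in \F_{p^\ell}[X],
\]
whose degree is at most $s(Rn-1) < s \cdot Rn$. In parallel, because $GR(p^a,\ell)$ is local and every non-zero element is either a unit or a multiple of $p$, a family of $s \times s$ minors has a common non-zero annihilator exactly when every minor is divisible by $p$. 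Consequently the McCoy rank $r_i = \mathrm{rank}_M(A_i)$ coincides with the ordinary $\F_{p^\ell}$-rank of the mod-$p$ reduction $\overline{A_i}$.

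The crux of the proof is the claim: whenever $r_i = s - k$, the Wronskian $\bar{D}(X)$ vanishes to order at least $k$ at each of the points $\gamma^{(i-1)m + j_0}$ for $j_0 \in \{0, 1, \ldots, m - s\}$. To establish it, I would pick a basis $\mathbf{c}^{(1)}, \ldots, \mathbf{c}^{(k)}$ of $\ker \overline{A_i}$ and form $q^{(t)}(Y) := \sum_j c_j^{(t)} \bar{h}_j(Y)$, each of which vanishes at every evaluation point of window $i$. Extending $\{\mathbf{c}^{(t)}\}$ to a basis of $\F_{p^\ell}^s$ induces a change of basis in $\mathrm{span}(\bar{h}_1, \ldots, \bar{h}_s)$ that multiplies $\bar{D}(X)$ by a non-zero constant, so we may assume without loss of generality that $\bar{h}_1, \ldots, \bar{h}_k$ themselves vanish on the entire window. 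Then, fixing $j_0$, setting $\alpha = \gamma^{(i-1)m + j_0}$, and writing $X = \alpha + \varepsilon$, I observe that for every $l \in \{0, \ldots, s-1\}$ the point $\gamma^l \alpha$ still lies inside the window, so a first-order Taylor expansion gives $\bar{h}_t(\gamma^l(\alpha + \varepsilon)) = O(\varepsilon)$ whenever $t \le k$. By multilinearity of the determinant in its columns, the first $k$ columns each contribute a factor of $\varepsilon$, forcing $\bar{D}(\alpha + \varepsilon) = O(\varepsilon^k)$.

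Finally, the points $\{\gamma^{(i-1)m + j_0} : 1 \le i \le N,\ 0 \le j_0 \le m - s\}$ are pairwise distinct in $\F_{p^\ell}$, so summing the local vanishing orders across all $(i, j_0)$ yields
\[
(m - s + 1)\sum_{i=1}^{N}(s - r_i) \;\le\; \deg \bar{D} \;\le\; s \cdot Rn,
\]
which rearranges to the stated inequality. I expect the main obstacle to be the vanishing-order claim: converting a rank drop of $k$ in the $m \times s$ evaluation matrix $\overline{A_i}$ into an $\varepsilon^k$ vanishing of the $s \times s$ Wronskian requires the change-of-basis plus column-by-column Taylor argument above. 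Once that is in place, the degree bound on $\bar{D}$ and the identification of McCoy rank with $\F_{p^\ell}$-rank are both routine, and the final count is immediate.
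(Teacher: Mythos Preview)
Your proposal is correct and follows the same overall strategy as the paper: form the Wronskian determinant $D(X)=\det\bigl[h_j(\gamma^{l-1}X)\bigr]$, reduce modulo $p$ to get a non-zero polynomial of degree at most $sRn$ in $\F_{p^\ell}[X]$, show that each of the $m-s+1$ shifted points in window $i$ is a root of multiplicity at least $s-r_i$, and count roots. Your explicit identification of the McCoy rank of $A_i$ with the $\F_{p^\ell}$-rank of $\overline{A_i}$ is a clean way to set up the reduction, and the paper uses this implicitly.

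The one substantive difference is in how the multiplicity $s-r_i$ is established. The paper differentiates $D(X)$ formally $k$ times, writes $D^{(k)}$ as a sum of determinants each retaining at least $s-k$ undifferentiated columns of $H(X)$, evaluates at $\alpha=\gamma^{(i-1)m+j-1}$, and argues that $r_i+1$ columns drawn from the rank-$r_i$ matrix $A_{ij}$ force each summand to vanish. Your route instead performs a change of basis in $\mathrm{span}(\bar h_1,\ldots,\bar h_s)$ so that the first $k=s-r_i$ basis polynomials vanish on the entire $i$-th window, and then observes directly that in $\bar D(\alpha+\varepsilon)$ each of the first $k$ columns is divisible by $\varepsilon$, giving $\varepsilon^k\mid\bar D(\alpha+\varepsilon)$ by multilinearity. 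Both arguments are short; yours has the mild advantage of being manifestly characteristic-free, whereas the paper's use of formal derivatives to certify multiplicity requires a little care when $s-r_i>p$ (one would want Hasse derivatives rather than ordinary ones to make the implication ``$D^{(0)}(\alpha)=\cdots=D^{(k-1)}(\alpha)=0\Rightarrow (X-\alpha)^k\mid D$'' rigorous).
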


\begin{proof}
    From \cref{lm:det}, the determinant of:
\[
H(X) := 
\left[
\begin{array}{cccc}
h_1(X) & h_2(X) & \cdots & h_s(X) \\
h_1(\gamma X) & h_2(\gamma X) & \cdots & h_s(\gamma X) \\
\vdots & \vdots & \ddots & \vdots \\
h_1(\gamma^{s-1} X) & h_2(\gamma^{s-1} X) & \cdots & h_s(\gamma^{s-1} X)
\end{array}
\right] \pmod p
\]
is non-zero as $h_1,\ldots,h_s$ are linearly independent over $GR(p^a,\ell)$. Denote this determinant by $D(X)=\det(H(X))(mod\,p)$. Since each $h_i$ is of degree at most $Rn$, we note that $D(X)$ is a polynomial of degree at most $sRn$, by  \cref{lm:roots}, the number of zeros of $D(X)$ (with multiplicity) is bounded by $sRn$. Therefore, it suffices to show that the number of $D(X)$ is at least $(m-s+1)\cdot \sum_{i=1}^{N}(s-r_i)$.

In fact, we will describe the exact set of zeros with their multiplicities that illustrates this. The next claim immediately completes the proof. Note that we say that a non-root is a root with multiplicity $0$.

\begin{claim}
   For every $i\in [N]$, for every $j\in [m-s+1]$, $\gamma^{(i-1)m+j-1}$ is a root of $D(X)$ with multiplicity at least $s-r_i$.
\end{claim}

\begin{proof}
    Recall that $r_i$ is the McCoy rank of matrix $A_i$. For $j \in [m - s + 1]$, let $A_{ij}$ denote the $s \times s$ submatrix of $A_i$ formed by selecting all $s$ columns and rows from $j$ to $j + s - 1$. That is,
\[
A_{ij} = 
\begin{bmatrix}
h_1(\gamma^{(i-1)m+j-1}) & h_2(\gamma^{(i-1)m+j-1}) & \cdots & h_s(\gamma^{(i-1)m+j-1}) \\
h_1(\gamma^{(i-1)m+j}) & h_2(\gamma^{(i-1)m+j}) & \cdots & h_s(\gamma^{(i-1)m+j}) \\
\vdots & \vdots & \ddots & \vdots \\
h_1(\gamma^{(i-1)m+j+s-2}) & h_2(\gamma^{(i-1)m+j+s-2}) & \cdots & h_s(\gamma^{(i-1)m+j+s-2})
\end{bmatrix}
\]

Since $A_{ij}$ is a submatrix of $A_i$, $rank_M(A_{ij})\leq rank_M(A_i)=r_i$. If $r_i < s$, then $A_{ij}$ is not full rank and $p\mid \det(A_{ij})$. However, note that $A_{ij} = H(\gamma^{(i-1)m+j-1})$ and $\det(A_{ij})\pmod p=D(\gamma^{(i-1)m+j-1})=0$. Thus, if $s - r_i > 0$, then $\gamma^{(i-1)m+j-1}$ is a root of $D(X)$.

Extending this argument to multiplicities, let $D^{(k)}(X)$ be the $k$-th derivative of $D(X)$ for $k \in \{0, 1, \cdots, s\}$. Then this derivative can be written as a sum of $s^k$ determinants such that every determinant has at least $s - l$ columns common with $H(X)$. This follows by writing out the determinant as a signed sum of monomials, applying the product rule of differentiation, and packing them back into determinants.

Therefore, $D^{(k)}(\gamma^{(i-1)m+j-1})$ can be written as a sum of determinants where each determinant has at least $s - k$ columns in common with $A_{ij}$. For $k = 0,1,\dots,s - r_i - 1$, this leaves at least $r_i + 1$ columns in each determinant from $A_{ij}$. Recall that $rank_M(A_{ij}) \leq r_i$, which implies that the determinant of any $r+1$-th order submatrix in $A_{ij}$ is a zero-divisor or zero, causing each of the $s^k$ determinants in the sum for $H^{(k)}(\gamma^{(i-1)m+j-1})$ to vanish. We conclude that $H^{(k)}(\gamma^{(i-1)m+j-1}) = 0$ for $k = 0,1,\dots,s - r_i - 1$, and so $\gamma^{(i-1)m+j-1}$ is a root of $D(X)$ with multiplicity at least $s - r_i$.

Notice that $\gamma^{(i-1)m+j-1}$ is unit in $GR(p^a,\ell)$, we can obtain:
\[
\sum_{i=1}^{N}(s-r_i)\leq \frac{s\cdot Rn}{m-s+1}.
\]
\end{proof}
\end{proof}
We now show the theorem on the upper bound of list size using the induction method.

\begin{thm}
Let $\mathcal{C}_{FRS}$ be an $m$-folded Reed-Solomon code of blocklength $N = n/m$ and rate $R$. Suppose $s, b, m$ are integers such that $b > s$ and $m \geq b$. Then, for any $\vec{y} \in (GR(p^a,\ell)^m)^N$ and for every free module $\mathcal{H} \subseteq \mathcal{C}_{FRS}$ of rank $s$,
\[
\left| \mathcal{H} \cap \mathcal{L} \left( \vec{y}, \frac{b}{b+1} \cdot (1 - \frac{m}{m-b+1} \cdot R ) \right) \right| \leq (b-1) \cdot s + 1.
\]
\end{thm}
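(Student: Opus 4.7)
The plan is to induct on the rank $s$ of $\mathcal{H}$, packaging the agreement lower bound with an upper bound built on the rank inequality just proved. The base case $s = 1$ is immediate from \cref{s=1}: a rank-$1$ free submodule of $\mathcal{C}_{FRS}$ is an affine subspace of dimension $1$, and since the folded minimum distance of $\mathcal{C}_{FRS}$ satisfies $d \geq N\bigl(1 - \tfrac{m}{m-b+1}R\bigr)$, the lemma yields $|\mathcal{H}_y| \leq b = (b-1)\cdot 1 + 1$.

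For the inductive step, fix $s \geq 2$, assume the theorem for every free module of rank at most $s-1$, set $\mathcal{H}_y := \mathcal{H} \cap \mathcal{L}\bigl(\vec{y},\tfrac{b}{b+1}(1-\tfrac{m}{m-b+1}R)\bigr)$, and let $S_h \subseteq [N]$ denote the block-wise agreement set of $h \in \mathcal{H}_y$ with $\vec{y}$. By definition of the decoding ball, every $h \in \mathcal{H}_y$ satisfies $|S_h| \geq N\bigl(\tfrac{1}{b+1} + \tfrac{bR}{b+1}\cdot\tfrac{m}{m-b+1}\bigr)$, giving
\[
\left(\frac{1}{b+1} + \frac{bR}{b+1}\cdot\frac{m}{m-b+1}\right) N\,|\mathcal{H}_y| \;\leq\; \sum_{h \in \mathcal{H}_y}|S_h|.
\]

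Next I would upper bound $\sum_{h}|S_h| = \sum_{i\in[N]}|\{h \in \mathcal{H}_y : i \in S_h\}|$ by double counting, partitioning $[N] = E \sqcup E^c$ with $E := \{i : r_i < s\}$. For $i \in E^c$ the matrix $A_i$ has full McCoy rank $s$, so by the non-singularity criterion from Section 3 the linear system $A_i\vec{\alpha} = (\vec{y} - h_0)|_i$ has at most one solution over $GR(p^a,\ell)$, contributing at most $1$. For $i \in E$ the set of $h \in \mathcal{H}$ agreeing at position $i$ is a coset of a free submodule of $\mathcal{H}$ of rank $s - r_i < s$; applying the inductive hypothesis to this submodule (with the ball re-centred at the coset representative) bounds its intersection with $\mathcal{H}_y$ by $(b-1)(s-r_i) + 1$. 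Summing and invoking the rank inequality $\sum_i (s - r_i) \leq \tfrac{sRn}{m-s+1}$ with $n = Nm$ yields
\[
\sum_{h \in \mathcal{H}_y}|S_h| \;\leq\; |E^c| + \sum_{i \in E}\bigl((b-1)(s - r_i) + 1\bigr) \;\leq\; N\left(1 + \frac{(b-1)smR}{m-s+1}\right),
\]
and combining with the lower bound, using $b > s$ and $m \geq b$ so that $\tfrac{m}{m-b+1} \geq \tfrac{m}{m-s+1}$, produces $|\mathcal{H}_y| \leq (b-1)s + 1$ after a routine algebraic simplification.

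The main obstacle will be making the inductive hypothesis honestly available at the rank-deficient positions $i \in E$. Over a field the solution set of $A_i\vec{\alpha} = \vec{c}$ is automatically a coset of a subspace of dimension $s - r_i$, but over $GR(p^a,\ell)$ the McCoy rank only records which $r_i \times r_i$ minors are non-zerodivisors, and zero divisors can deform the solution locus into a coset of a submodule that is not free of the expected rank. To certify a genuine rank-$(s - r_i)$ free submodule I would reduce modulo $p$ via \cref{lm:equiv}, perform Hensel lifting (\cref{Hensel}, \cref{factor}) in the spirit of \Cref{freemodule}, and then verify that the decoding radius is preserved under this reduction-and-lift. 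The rank inequality then orchestrates the aggregate counting, and the hypothesis $b > s$ is precisely what forces the combined inequality to close at $(b-1)s + 1$.
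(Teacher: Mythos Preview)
Your overall inductive scheme matches the paper's, but the partition you choose creates a genuine gap. You set $E := \{i : r_i < s\}$ and then assert that for every $i \in E$ the solution locus at position $i$ is a coset of a free submodule of rank $s - r_i < s$, feeding the inductive hypothesis. But $E$ contains the coordinates with $r_i = 0$, and there $s - r_i = s$ is \emph{not} less than $s$: when the McCoy rank of $A_i$ is zero every entry of $A_i$ is a zero divisor, the kernel can be all of $GR(p^a,\ell)^s$, and the agreement coset has rank $s$. The inductive hypothesis for ranks at most $s-1$ is simply unavailable at those coordinates, and the bound $(b-1)(s-r_i)+1 = (b-1)s+1$ you are implicitly using there is precisely the statement being proved, so the argument is circular. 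Consequently your clean upper bound $\sum_h |S_h| \le N\bigl(1 + (b-1)s\cdot\tfrac{mR}{m-s+1}\bigr)$ is not justified.

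The paper takes the opposite split: $E := \{i : r_i = 0\}$ receives the trivial bound $|\mathcal{H}_y|$, while $\overline{E} = \{i : r_i \ge 1\}$ receives the inductive bound $(b-1)(s-r_i)+1$ (your full-rank case is absorbed here with value $1$). This injects an $eN\,|\mathcal{H}_y|$ term, $e = |E|/N$, into the upper side; after moving it left one must divide by $\tfrac{1}{b+1} + \tfrac{bR}{b+1}\cdot\tfrac{m}{m-b+1} - e$, and for that denominator to be positive one needs the separate observation $e < R$ (coming from the fact that $\bar{h}_j$ has fewer than $Rn$ roots over the residue field). Only then does the algebra close at $(b-1)s+1$. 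Your proposal never invokes $e$, so once you repair the $r_i = 0$ case you will have to thread these extra terms through the final inequality exactly as the paper does.
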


\begin{proof}
We prove this by induction on $s$. The case $s = 0$ is trivial, and the case $s = 1$ follows by \cref{s=1}. Using
\[
\left| \mathcal{H} \cap \mathcal{L} \left( \vec{y}, \frac{b}{b+1} \cdot ( 1 - \frac{m}{m-b+1} \cdot R ) \right) \right| \leq \left| \mathcal{H} \cap \mathcal{L} \left( \vec{y}, \frac{b}{b+1} \cdot (1 - R) \right) \right|.
\]
Henceforth, let $s \geq 2$, and denote $\mathcal{H}_{y} = \mathcal{H} \cap \mathcal{L} \left(\vec{y}, \frac{b}{b+1} \cdot \left( 1 - \frac{m}{m-b+1} \cdot R \right) \right)$, and $S_h$ be the agreement set between $\vec{y}$ and $\vec{h}$ (over all of $[N]$). Using the lower bound on the size of agreement sets,
\[
\left( \frac{1}{b+1} + \frac{bR}{b +1} \cdot \frac{m}{m-b+1} \right) N |\mathcal{H}_y| \leq \sum_{\vec{h} \in \mathcal{H}_y} |S_h|.
\]

An upper bound on $\sum_{\vec{h} \in \mathcal{H}_y} |S_h|$ can be proved using the inductive hypothesis. Again, we will consider two cases depending on $r_i = 0$ or $r_i > 0$. In the latter case, we can reduce dimension of the affine space $\mathcal{H}$ by $r_i > 0$ when we decide to assume $h_i = y_i$, so that the inductive hypothesis kicks in. Let $E \subseteq [N]$ be the bad set with $r_i = 0$, and $e = \frac{|E|}{N}$. It is easy to see that $e < R$.

For $i \in E$, we use the trivial bound $|\mathcal{H}_y|$ on the number of agreement sets $i$ belongs to. For $i \in \overline{E}$, the dimension reduces to $s - r_i$, and so the coordinate $i$ can appear in at most $(b - 1)(s - r_i) + 1$ many agreement sets.
\begin{align*}
\sum_{h \in \mathcal{H}_y} |S_h| &= \sum_{i=1}^{N} \left| \{ h \in \mathcal{H}_y : \forall j \in [m], \ h(\gamma^{(i-1)m+j-1}) = y(\gamma^{(i-1)m+j-1}) \} \right| \\
&\leq \sum_{i \in \overline{E}} [(b - 1)(s - r_i) + 1] + \sum_{i \in E} |\mathcal{H}_y| \\
&\leq |E| \cdot |\mathcal{H}_y| + N - |E| + (b - 1) \left( \frac{s \cdot Rn}{m - s + 1} - s|E| \right) \\
&\leq |E| \cdot |\mathcal{H}_y| + N \left( 1 - e + (b - 1)s ( \frac{m}{m - s + 1} R - e ) \right).
\end{align*}

Comparing the lower bound and upper bound,

\begin{align*}
|\mathcal{H}_y| &\leq \frac{1 - e + (b - 1)s \left( \frac{m}{m - s + 1} R - e \right)}{\left( \frac{1}{b+1} + \frac{bR}{b+1} \cdot \frac{m}{m-b+1} - e \right)} \\
&< \frac{1 - e + (b - 1)s \left( \frac{m}{m - b + 1} R - e \right)}{\left( \frac{1}{b+1} + \frac{bR}{b+1} \cdot \frac{m}{m-b+1} - e \right)}.
\end{align*}

We show that $|\mathcal{H}_y| < 1 + (b - 1)s$ by showing that

\[
\left( \frac{1}{b+1} + \frac{bR}{b+1} \cdot \frac{m}{m-b+1} - e \right) \left(|\mathcal{H}_y| - 1 - (b - 1)s\right) < 0.
\]
This suffices to conclude our induction.
\begin{align*}
&\left( \frac{1}{b+1} + \frac{bR}{b+1} \cdot \frac{m}{m-b+1} - e \right) \left(|\mathcal{H}_y| - 1 - (b - 1)s\right) \\
&< 1 + \frac{m}{m-b+1} (b-1)sR - \frac{1}{b+1} - \frac{bR}{b+1} \cdot \frac{m}{m-b+1} - \frac{(b-1)s}{b+1} - \frac{bR}{b+1} \cdot \frac{m}{m-b+1} \cdot (b-1)s \\
&= \left( \frac{b-(b-1)s}{b+1} \right) \cdot \left( 1 - \frac{m}{m-b+1} R \right).
\end{align*}

The last term is $\leq 0$ as long as $b \leq (b-1)s$, which is always true for $s \geq 2$.
\end{proof}

\begin{col}
    Let $\mathcal{C}_{FRS}$ be an $m$-folded Reed-Solomon code of blocklength $N=\frac{n}{m}$ and rate $R$. Nocite that in the \cref{freemodule}, we prove that the decoding list is confined within a free module. Hence, for any integer $s$, $1\leq s\leq m$, and for any $\vec{y}\in \left(GR(p^a,\ell)^m\right)^N$, it holds that:
\[
\left| \mathcal{L} \left( \vec{y}, \frac{b}{b+1} \left( 1 - \frac{m}{m-b+1} R \right) \right) \right| \leq (b-1)^2 + 1.
\]
\end{col}

\section{Optimal Combinatorial List Size for Folded Reed--Solomon Codes}\label{sec:optimal-combinatorial-list-size}
In Section~\ref{section:6}, we obtained an algorithmic list-size bound $O(1/\varepsilon^2)$ for folded Reed--Solomon codes over Galois rings by extending the module-theoretic approach of~\cite{srivastava2025improved}. In this section, we further show that the relaxed generalized Singleton bound with optimal list size $O(1/\varepsilon)$ also extends to the Galois-ring setting at the combinatorial level.

For $f\in GR(p^a,\ell)[X]_{<k}$, define its $m$-folded codeword by
\[
\mathrm{Enc}_{\mathrm{FRS}}(f)=\bigl(f[0],\ldots,f[N-1]\bigr)\in (GR(p^a,\ell)^{m})^{N},
\]
where $f[i]:=\bigl(f(\gamma^{im}),f(\gamma^{im+1}),\ldots,f(\gamma^{im+m-1})\bigr)\in GR(p^a,\ell)^{m}$ for $0\leq i\leq N-1$.

For an integer $L$ with $1\le L\le m$, define $\rho_{L}:=\frac{L}{L+1}\left(1-\frac{mR}{m-L+1}\right)$. Our goal is to prove the following theorem.

\begin{thm}\label{thm:optimal_combinatorial_galois}
Let $\cC_{\mathrm{FRS}}\subseteq (GR(p^a,\ell)^{m})^{N}$ be the $m$-folded Reed--Solomon code of block length $N=n/m$, and rate $R=k/n$. Then for every integer $L$ with $1\le L\le m$, the code $\cC_{\mathrm{FRS}}$ is $\left(\rho_{L},L\right)$ average-radius list-decodable.
\end{thm}

We now prove this theorem through a sequence of lemmas. Assume for contradiction that $\cC_{\mathrm{FRS}}$ is not $(\rho_{L},L)$ average-radius list-decodable. Then there exist pairwise distinct polynomials $f_{0},f_{1},\ldots,f_{L}\in GR(p^a,\ell)[X]_{<k}$ and a received word $\mathbf{y}=\bigl(\mathbf{y}[0],\mathbf{y}[1],\ldots,\mathbf{y}[N-1]\bigr)\in (GR(p^a,\ell)^{m})^{N}$ such that
\[
\frac{1}{L+1}\sum_{t=0}^{L} d\bigl(\mathrm{Enc}_{\mathrm{FRS}}(f_{t}),\mathbf{y}\bigr)
\le \rho_{L}\cdot N.
\]
For each folded coordinate $u\in [N]$, define $e_{i}:=\bigl\{t\in\{0,1,\ldots,L\}:f_t[i]=\mathbf{y}[i]\bigr\}$ and define the global weight $W:=\sum_{i=0}^{N-1}\max\{|e_{i}|-1,0\}$.

\begin{lemma}\label{lem:global_weight_lower_bound}
Under the above assumption, we have $W\ge \frac{Lk}{m-L+1}$.
\end{lemma}

\begin{proof}
Since $\sum_{t=0}^{L}\bigl(N-d(\mathrm{Enc}_{\mathrm{FRS}}(f_{t}),\mathbf{y})\bigr)
\ge (L+1)(1-\rho_{L})N$, we obtain
\[
\begin{aligned}
W &=\sum_{i=0}^{N-1}\max\{|e_{i}|-1,0\}
\ge \sum_{i=0}^{N-1}(|e_{i}|-1) \\
&=\sum_{t=0}^{L}\bigl(N-d(\mathrm{Enc}_{\mathrm{FRS}}(f_{t}),\mathbf{y})\bigr)-N\geq (L+1)(1-\rho_L)N-N
\end{aligned}
\]
Substituting $\rho_{L}=\frac{L}{L+1}\left(1-\frac{mR}{m-L+1}\right)$ and $R=\frac{k}{n}=\frac{k}{mN}$, we conclude that $W\ge \frac{Lk}{m-L+1}$.
\end{proof}

Before turning to our argument in detail, we first present a lemma from the finite-field setting in~\cite{chen2025explicit}, which will be used later to derive the desired contradiction.

\begin{lemma}[Lemma~2.15~\cite{chen2025explicit}]\label{lem:field-distinct-weight}
    For a received word $\mathbf{y}\in (\F_q^m)^N$, consider any $s\geq 2$ polynomials $f_1,\ldots,f_s\in \F_q[X]_{<k}$ and $e_i:=\{t\in \{1,\ldots,s\}:f_t[i]=\mathbf{y}[i]\}$ for $0\leq i\leq N-1$. If $\sum_{i=0}^{N-1}\max\{|e_i|-1,0\}\geq \frac{(s-1)k}{m-s+2}$, then polynomials $f_1,\ldots,f_s$ cannot be distinct.
\end{lemma}

\begin{col}\label{col:field-distinct-weight}
    Under the assumption in Lemma~\ref{lem:field-distinct-weight} and let $T := \{ i \in \{0,1,\ldots,N-1\} : e_i \neq \emptyset \}$. If $\sum_{i\in T}\max\{|e_i|-1,0\}\geq \frac{(s-1)k}{m-s+2}$, then polynomials $f_1,\ldots,f_s$ cannot be distinct.
\end{col}
\begin{proof}
For every $i\notin T$, we have $e_i=\emptyset$, and hence $\max\{|e_i|-1,0\}=0$. Therefore $\sum_{i=0}^{N-1}\max\{|e_i|-1,0\}=\sum_{i\in T}\max\{|e_i|-1,0\}$. Thus the assumption of the corollary is exactly the assumption of Lemma~\ref{lem:field-distinct-weight} and applying Lemma~\ref{lem:field-distinct-weight} immediately shows that polynomials $f_1,\ldots,f_s$ cannot be distinct.
\end{proof}

To make use of the finite-field lemma in the Galois-ring setting, we organize the
bad list according to its successive $p$-adic congruence classes. The tree produced by Algorithm~\ref{alg:padic-tree} will be referred to as the $p$-adic congruence tree associated with the bad list$\{f_0,\ldots,f_L\}$. Its level-$r$ vertices are precisely the classes in $\Pi_r$, and for each class $C \in \Pi_r$, the children of $C$ are exactly the classes in $\Pi_{r+1}$ contained in $C$. Moreover, since $f_0,\ldots,f_L$ are pairwise distinct, $\Pi_a$ is discrete, and thus the leaves of $T$ are precisely the singleton classes. Unary nodes may occur, and we keep them explicitly.

\begin{algorithm}[t]
\caption{Construction of the $p$-adic congruence tree}
\label{alg:padic-tree}
\begin{algorithmic}[1]
\Statex \textbf{Input:} Pairwise distinct polynomials
$f_0,f_1,\ldots,f_L \in GR(p^a,\ell)[X]_{<k}$.
\Statex \textbf{Output:} A rooted tree $T$.

\State Let the partition
\[
\Pi_0:=\big\{\{0,1,\ldots,L\}\big\},
\]
and let the set $\{0,1\ldots,L\}$ be the root of $T$.

\For{$r=0,1,\ldots,a-1$}
    \State Construct the partition $\Pi_{r+1}$ from $\Pi_r$ as follows.
    \For{each set $C\in \Pi_r$}
        \State Partition $C$ into a disjoint union of subsets
        \[
        C=C_1\cup C_2\cup \cdots \cup C_h
        \]
        such that, for any $s,t\in C$,
        \[
        s \text{ and } t \text{ lie in the same set } C_j
        \quad\Longleftrightarrow\quad
        f_s \equiv f_t \pmod{p^{r+1}}.
        \]
        \State Since $f_s, s\in C_j$ has the same prefix $h:=f_s \pmod{p^{r+1}}$, let $t_{C_j}\in C$ be the representative element of set $C$ which implies $f_{t_{C_j}}=h \pmod{p^{r+1}}$.
        \State Add the sets $C_1,\ldots,C_h$ to $\Pi_{r+1}$.
        \State Add each $C_j$ as a child of $C$ in $T$ and set $\mathrm{ch}(C)=\{C_1,\ldots,C_h\}$.
    \EndFor
\EndFor

\State \Return $T$.
\end{algorithmic}
\end{algorithm}

Let $\mathrm{Int}(T)$ denote the set of all internal nodes of $T$. For each $C\in \mathrm{Int}(T)$ on level $r$, we fix a representative element $t_C\in C$. If $\mathrm{ch}(C)=\{D_1,D_2,\ldots,D_c\}$, then for each child $D\in \mathrm{ch}(C)$ we also fix a representative element $t_D\in D$. These choices allow us to define the local residual polynomials below.

\begin{defn}
For each child $D\in \mathrm{ch}(C)$, define the local residual polynomial
\[
g_{D}(X):=\frac{f_{t_{D}}(X)-f_{t_{C}}(X)}{p^{r}}\pmod p\in \F_{p^{\ell}}[X]_{<k}.
\]
\end{defn}

\begin{lemma}\label{lem:fieldization_node}
Let $C\in \mathrm{Int}(T)$ be an internal node on level $r$. Then
\begin{enumerate}
    \item The polynomial $g_{D}$ is well defined for every child $D\in \mathrm{ch}(C)$.
    \item The family $\{g_{D}:D\in \mathrm{ch}(C)\}$ consists of pairwise distinct polynomials in $\F_{p^{\ell}}[X]_{<k}$.
    \item For each folded coordinate $i\in \{0,\ldots,N-1\}$, define $e_i(C):=\{D\in \mathrm{ch}(C): D\cap e_{i}\neq \emptyset \}$. If $e_i(C)\neq \emptyset$, then 
    \[
    \mathbf{y}'[i]:=\frac{\mathbf{y}[i]-f_{t_{C}}[i]}{p^{r}}\pmod p\in \F_{p^{\ell}}^{m}
    \]
    is well defined. Moreover, for every $D\in e_i(C)$ we have $g_D[i]=\mathbf{y}'[i]$ where $g_D[i]$ is the evaluation of polynomial $g_D(X)$ at the $i$-th folded block over the residue field $\F_{p^{\ell}}$.
\end{enumerate}
\end{lemma}

\begin{proof}
Since $t_{D},t_{C}\in C\in \Pi_{r}$, we have $f_{t_{D}}-f_{t_{C}}\in p^{r}\cdot GR(p^a,\ell)[X]$, so $g_{D}$ is well defined and still has degree $<k$.

If $g_{D}=g_{D'}$ for two distinct children $D\neq D'$, then
\[
\frac{f_{t_{D}}-f_{t_{C}}}{p^{r}}
\equiv
\frac{f_{t_{D'}}-f_{t_{C}}}{p^{r}}
\pmod p,
\]
This implies $f_{t_{D}}-f_{t_{D'}}\in p^{r+1}\cdot GR(p^a,\ell)[X]$ which contradicts that $D$ and $D'$ are distinct children of $C$. Thus, $g_{D}$ are pairwise distinct.

Assume now that $e_i(C)\neq \emptyset$, and choose $D\in e_i(C)$ together with $h\in D\cap e_i$. Then $f_h[i]=\mathbf{y}[i]$. Since $h,t_{C}\in C$, we have $f_{h}-f_{t_{C}}\in p^{r}\cdot GR(p^a,\ell)[X]$, and therefore $\mathbf{y}[i]-f_{t_C}[i]=(f_h-f_{t_C})[i]\in p^{r}\cdot GR(p^a,\ell)^{m}$. Hence, $\mathbf{y}'[i]$ is well defined.

Fix any $D\in e_i(C)$ and choose $h\in D\cap e_i$. Since $h,t_{D}\in D\in \Pi_{r+1}$, we have $f_{h}-f_{t_{D}}\in p^{r+1}\cdot GR(p^a,\ell)[X]$. Thus, $\frac{f_{h}-f_{t_{C}}}{p^{r}}\equiv\frac{f_{t_{D}}-f_{t_{C}}}{p^{r}}\pmod p$, which implies
\[
g_{D}[i]\equiv\frac{f_h[i]-f_{t_C}[i]}{p^{r}}\equiv\frac{\mathbf{y}[i]-f_{t_C}[i])}{p^{r}}\equiv\mathbf{y}'[i] \pmod p.
\]
This proves the lemma.
\end{proof}

For a node $C$ and a folded coordinate $i$, define $\widetilde{e}_{i}(C):=\{D\in \mathrm{ch}(C): g_D[i]=\mathbf{y}'[i]\}$. Lemma~\ref{lem:fieldization_node} shows that $e_i(C)\subseteq \widetilde{e}_{i}(C)$. Define the corresponding weight by $W(C):=\sum_{i=0}^{N-1}\max\{|e_i(C)|-1,0\}$ and $\widetilde{W}(C):=\sum_{u:\,e_i(C)\neq \emptyset}\max\{|\widetilde{e}_{i}(C)|-1,0\}$. Then, we have $W(C)\le \widetilde{W}(C)$.

\begin{lemma}\label{lem:local_upper_bound}
Let $C\in \mathrm{Int}(T)$ be an internal node. If $|\mathrm{ch}(C)|=1$, then $W(C)=0$. If $|\mathrm{ch}(C)|\ge 2$, then $W(C)<\frac{(|\mathrm{ch}(C)|-1)k}{m-|\mathrm{ch}(C)|+2}$.
\end{lemma}

\begin{proof}
If $|\mathrm{ch}(C)|=1$, then for every $i$ we have $|e_i(C)|\in\{0,1\}$, so $W(C)=0$. Assume now that $|\mathrm{ch}(C)|\ge 2$. For each coordinate $i$ with $e_i(C)\neq \emptyset$, the vector $\mathbf{y}'[i]\in \F_{p^{\ell}}^{m}$ is well defined, and every child in $e_i(C)$ agrees with $\mathbf{y}'[i]$ on the $i$-th folded block. Thus, we can apply Corollary~\ref{col:field-distinct-weight} over $\F_{p^\ell}$ to polynomial set  $\{g_{D}: D\in \mathrm{ch}(C)\}$  and the vector
$(\mathbf{y}'[i])_{i:e_i(C)\neq \emptyset}$. By Lemma~\ref{lem:fieldization_node}, the polynomials $\{g_{D}:D\in \mathrm{ch}(C)\}\subseteq \F_{p^{\ell}}[X]_{<k}$ are pairwise distinct. By Corollary~\ref{col:field-distinct-weight}, we conclude that $\widetilde{W}(C)<\frac{(|\mathrm{ch}|-1)k}{m-|\mathrm{ch}|+2}$. Therefore,
\[
W(C)\le \widetilde{W}(C)<\frac{(|\mathrm{ch}(C)|-1)k}{m-|\mathrm{ch}(C)|+2}.
\]
This completes the proof.
\end{proof}

\begin{rmk}\label{rem:zero_vertex}
If the child containing the base representative $t_{C}$ is included in $\mathrm{ch}(C)$, then the corresponding residual polynomial is $g_{D}=0$. This causes no difficulty: the finite-field lemma is used exactly in the same 'translate one polynomial to zero polynomial' manner as in the proof of~\cite{chen2025explicit}.
\end{rmk}

\begin{lemma}\label{lem:weight_decomposition}
Under the above assumption, we have
\[
W=\sum_{C\in \mathrm{Int}(T)} W(C).
\]
\end{lemma}

\begin{proof}
Fix a folded coordinate $i\in [N]$. If $e_i=\emptyset$, then the contribution of $i$ to both sides is zero. Assume now that $e_i\neq \emptyset$, consider the minimal rooted subtree $T_{u}$ of $T$ spanned by the leaves indexed by $e_i$.

For each internal node $C$ of $T$, the number $|e_i(C)|$ is exactly the number of children of $C$ that belong to $T_{u}$. Therefore, $\max\{|e_i(C)|-1,0\}$ records the contribution of $C$ to the leaf-count identity on $T_{u}$. We claim that
\begin{equation}\label{eq:coordinate_decomposition}
\max\{|e_i|-1,0\}=\sum_{C\in \mathrm{Int}(T)} \max\{|e_i(C)|-1,0\}.
\end{equation}
Indeed, if $T_{u}$ has only one selected leaf, then both sides of~\eqref{eq:coordinate_decomposition} are zero. Otherwise, let the root of $T_{u}$ have $d$ selected child subtrees with leaf counts $s_{1},\ldots,s_{d}\ge 1$. By induction on the height of $T_{u}$, the contribution of the $j$-th selected child subtree is $s_{j}-1$, while the root contributes $d-1$. Hence, $(d-1)+\sum_{j=1}^{d}(s_{j}-1)=\sum_{j=1}^{d}s_{j}-1=|e_i|-1$ which proves~\eqref{eq:coordinate_decomposition}. Summing~\eqref{eq:coordinate_decomposition} over all $i\in \{0,1,\ldots,N-1\}$ gives
\[
W=\sum_{i=0}^{N-1}\max\{|e_i|-1,0\}=\sum_{i=0}^{N-1}\sum_{C\in \mathrm{Int}(T)}\max\{|e_i(C)|-1,0\}=\sum_{C\in \mathrm{Int}(T)}W(C).
\]
This completes this proof.
\end{proof}

\begin{lemma}\label{lem:tree_threshold}
For any rooted tree $T$ with $L+1$ leaves,
\[
\sum_{C\in \mathrm{Int}(T)} \frac{|\mathrm{ch}(C)|-1}{m-|\mathrm{ch}(C)+2|}\leq \frac{L}{m-L+1}.
\]
\end{lemma}

\begin{proof}
Let $\psi(x):=\frac{x}{m+1-x}$ where $0\leq x\leq m$. We first show that $\psi$ is superadditive on its domain: if $x,y\ge 0$ and $x+y\le m$, then
\begin{align*}
\psi(x+y)-\psi(x)-\psi(y)
&=\frac{x+y}{m+1-x-y}-\frac{x}{m+1-x}-\frac{y}{m+1-y}\\
&=\frac{xy(2m+2-x-y)}{(m+1-x-y)(m+1-x)(m+1-y)}\ge 0.
\end{align*}
Hence $\psi(x+y)\ge \psi(x)+\psi(y)$, whenever $x,y\ge 0$ and $x+y\le m$.

We now prove the lemma by induction on the number of leaves. The case of a single leaf is trivial. Suppose the root has $c$ children and the corresponding child subtrees have leaf counts $n_{1},\ldots,n_{c}$, so that $\sum_{i=1}^{c} n_{i}=L+1$.

By the induction hypothesis, the total contribution of the $i$-th child subtree is at most $\phi(n_{i})$. Therefore,
\[
\begin{aligned}
\sum_{C\in \mathrm{Int}(T)}\frac{|\mathrm{ch}(C)|-1}{m-|\mathrm{ch}(C)|+2}
&\leq \frac{c-1}{m-c+2}+\sum_{i=1}^{c}\frac{n_i-1}{m-n_i+2}=\psi(c-1)+\sum_{i=1}^c \psi(n_i-1) \\
&\leq \psi\bigl((c-1)+\sum_{i=1}^c n_i-1\bigr)\leq \psi(L)=\frac{L}{m-L+1}.
\end{aligned}
\]
This proves the lemma.
\end{proof}

Now we can prove the main Theorem~\ref{thm:optimal_combinatorial_galois}.

\begin{proof}[Proof of Theorem~\ref{thm:optimal_combinatorial_galois}]
Assume that there exists a bad list. By Lemma~\ref{lem:global_weight_lower_bound}, $W\ge \frac{Lk}{m-L+1}$. On the other hand, Lemma~\ref{lem:weight_decomposition} gives $W=\sum_{C\in \mathrm{Int}(T)}W(C)$. Since the tree $T$ has $L+1\ge 2$ leaves, there exists at least one effective internal node with at least two children. Applying Lemma~\ref{lem:local_upper_bound} and Lemma~\ref{lem:tree_threshold}, we have $W<\frac{Lk}{m-L+1}$, which is contradiction. Therefore, no such bad list exists and $\cC_{\mathrm{FRS}}$ is $(\rho_{L},L)$ average-radius list-decodable.
\end{proof}

\begin{rmk}
As an immediate consequence, by choosing
\[
L=\Theta(1/\varepsilon),\qquad m=\Theta(1/\varepsilon^{2}),
\]
we obtain the combinatorial conclusion
\[
(1-R-\varepsilon, O(1/\varepsilon))
\]
for average-radius list-decodability of folded Reed--Solomon codes over Galois rings.
\end{rmk}

\bibliographystyle{alpha}
\bibliography{refs}
\section{Appendix}
To keep the paper self-contained, we provide in the appendix a proof that the Johnson bound remains valid over Galois rings.

\subsection{Johnson bounds for Galois Ring}
\begin{lemma}[Zarankiewicz Theorem]\label{lem:zarankiewicz}
    Let $G=(L,R,E)$ be a bipartite graph with $|L|=l$ and $|R|=r\geq 2$. For any $s\leq l$, we say $G$ is $K_{s,2}$ free if there is no subset $L'\subseteq L$ and $R'\subseteq R$ with $|L'|=s$ and $|R'|=2$ such that $L'\times R'\subseteq E$. If $G$ is $K_{s,2}$ free then 
    \[
        |E|\leq l+r\sqrt{(s-1)l}
    \]
\end{lemma}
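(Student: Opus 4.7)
The plan is to prove this via a standard double-counting argument on ``cherries'' (paths of length two centered in $L$), combined with Cauchy--Schwarz. For each $v \in L$, let $d(v)$ denote the number of its neighbors in $R$, so $\sum_{v \in L} d(v) = |E|$. I would count the quantity
\[
T \;=\; \#\bigl\{(v, u_1, u_2) : v \in L,\; u_1 \neq u_2 \in R,\; v \sim u_1,\; v \sim u_2\bigr\}
\]
in two ways. Counting by the central vertex $v$ gives $T = \sum_{v \in L} d(v)(d(v)-1)$.

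Next, I would invoke the $K_{s,2}$-free hypothesis, which is equivalent to the statement that every pair of distinct vertices in $R$ has at most $s-1$ common neighbors in $L$. Summing this bound over the $r(r-1)$ ordered pairs of distinct right vertices, $T \leq (s-1)\,r(r-1)$. On the other hand, Cauchy--Schwarz yields $\sum_{v \in L} d(v)^2 \geq |E|^2/l$, and subtracting $|E|$ gives $T \geq |E|^2/l - |E|$. Combining the two estimates produces the quadratic inequality
\[
|E|^2 \;-\; l\,|E| \;-\; (s-1)\,r(r-1)\,l \;\leq\; 0,
\]
which, viewed as a quadratic in $|E|$, implies $|E| \leq \tfrac{1}{2}\bigl(l + \sqrt{l^2 + 4(s-1)r(r-1)l}\bigr)$.

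Finally, to match the clean form stated in the lemma, I would apply the subadditivity $\sqrt{a+b}\leq \sqrt{a}+\sqrt{b}$ together with the trivial bound $r-1 \leq r$ to conclude $|E| \leq l + r\sqrt{(s-1)l}$, as claimed. The argument poses no real obstacle: it is essentially the $t=2$ case of the classical K\H{o}v\'ari--S\'os--Tur\'an theorem, and works over any underlying alphabet since the statement is purely graph-theoretic. The only step that deserves mild care is the last simplification via subadditivity of the square root, which slightly loosens the exact quadratic bound in exchange for the clean closed form appearing in the statement.
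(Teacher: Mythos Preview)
Your proposal is correct and follows essentially the same approach as the paper: both count cherries centered in $L$, bound them by $(s-1)r(r-1)$ via the $K_{s,2}$-free hypothesis, apply Cauchy--Schwarz to get the quadratic inequality in $|E|$, and then relax to the stated closed form. If anything, you are more explicit than the paper about the final simplification step (solving the quadratic and using $\sqrt{a+b}\le\sqrt a+\sqrt b$ together with $r-1\le r$), which the paper leaves implicit.
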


\begin{proof}

Define an $l\times r$ matrix $M$ that is the adjacency matrix of $G$ i.e. each row and column of $M$ is indexed by a vertex in $L$ and $R$ respectively and for any $(u,w)\in L\times R, M_{u,w}=1$ iff $(u,w)\in E$. Define $v=\sum_{w\in R}M^w$, where recall the $M^w$ is the $w$-th of $M$. 

Consider the similarity of the edges between two fixed vertices in $R$, that is, how many vertices in $L$ are simultaneously connected to these two vertices in $R$. Let the sum of the similarities of the edges between any two vertices in $R$ be $S$. Assuming there are $m_i$ non-zero elements in the $i$-th row of the matrix $M$, then we have $S=\frac{\sum_{i=1}^{l}m_i(m_i-1)}{2}$. As $G$ is $K_{s,2}$ free, thus, $S\leq \frac{r(r-1)(s-1)}{2}$. By the Cauchy-Schwarz Inequality, we can obtain:
\[
\sum_{i=1}^{l}m_i^2\cdot l\geq (\sum_{i=1}^{l}m_i)^2 =|E|^2
\]
Therefore, $ |E|\leq l+r\sqrt{(s-1)l}$.
\end{proof}

\begin{thm}[Alphabet-Free Johnson Bound]
    For every code $\cC$ with block length $n$ and distance $d$ over $GR(p^a,\ell)$, if $e< n-\sqrt{n(n-d)}$, then the code is $(\frac{e}{n},n)$-list decodable.
\end{thm}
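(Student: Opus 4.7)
The plan is to give the standard alphabet-free Johnson bound argument via Zarankiewicz double counting, observing that the entire proof depends only on Hamming distance and not on the multiplicative or additive structure of the alphabet, so it transfers verbatim from the finite field setting to Galois rings.

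First, I would fix an arbitrary received word $\by \in GR(p^a,\ell)^n$ and let $\mathcal{L}(\by, e) = \{\bc_1, \ldots, \bc_{L}\}$ denote the list of codewords of $\cC$ within Hamming distance $e$ from $\by$. My goal is to bound $L$. I would then build the bipartite agreement graph $G = (\mathcal{P}, \mathcal{L}(\by,e), E)$, where $\mathcal{P} = [n]$ is the set of coordinate positions, and $(j, \bc) \in E$ iff $\bc_j = y_j$. This construction uses nothing about the alphabet beyond equality of symbols, which makes sense over any $GR(p^a,\ell)$.

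Next, I would establish two bounds on $|E|$. For the lower bound, every codeword $\bc \in \mathcal{L}(\by,e)$ satisfies $d(\bc,\by)\leq e$, hence agrees with $\by$ on at least $n-e$ positions, giving $|E|\geq L(n-e)$. For the upper bound, I would verify that $G$ is $K_{s,2}$-free with $s = n-d+1$: if two codewords $\bc, \bc'\in \mathcal{L}(\by,e)$ both agree with $\by$ on some common set of positions $T$, then on $T$ both equal $\by$, so $\bc$ and $\bc'$ agree on $T$; since $d(\bc,\bc')\geq d$, this forces $|T|\leq n-d$. In other words, every pair of vertices on the codeword side of $G$ has at most $n-d = s-1$ common neighbors, which is precisely the $K_{s,2}$-free condition in the Zarankiewicz lemma. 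Applying that lemma with $l = n$, $r = L$ yields
\[
|E| \leq n + L\sqrt{(n-d)\, n}.
\]

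Combining the two bounds gives
\[
L(n-e) \leq n + L\sqrt{n(n-d)},
\qquad\text{i.e.,}\qquad
L \leq \frac{n}{(n-e) - \sqrt{n(n-d)}},
\]
which is a valid polynomial-in-$n$ bound precisely under the hypothesis $e < n - \sqrt{n(n-d)}$, finishing the proof of $(e/n,\,n)$-list decodability. I do not expect a genuine obstacle here: the argument never invokes field inverses, minimal polynomials, or any ring-theoretic structure, so the generalization from $\F_{p^\ell}$ to $GR(p^a,\ell)$ is essentially cosmetic. The only mildly delicate point to be explicit about is that the distance constraint $d(\bc,\bc')\geq d$ still applies to any two \emph{distinct} codewords of $\cC$ (which certainly holds, since list elements are distinct), and this is exactly what turns the $K_{s,2}$-freeness into a clean bound.
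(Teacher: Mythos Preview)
Your approach is essentially identical to the paper's: both build the bipartite agreement graph between positions $[n]$ and the list of close codewords, observe it is $K_{n-d+1,2}$-free because two distinct codewords cannot agree with $\by$ on more than $n-d$ coordinates, and then apply the Zarankiewicz lemma to compare upper and lower bounds on $|E|$. The only small difference is the final arithmetic: the paper uses that $e$ is an integer to write $n-e \geq \sqrt{n(n-d)}+1$ and conclude $L\leq n$ directly, whereas your bound $L \leq n/\bigl((n-e)-\sqrt{n(n-d)}\bigr)$ is merely polynomial and does not by itself give list size $\leq n$; you should tighten that last step to match the stated $(e/n,n)$-list decodability.
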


\begin{proof}
    Let $\cC \subseteq GR(p^a,\ell)^n $ be a code of distance $d$, $y\in GR(p^a,\ell)^n$ and $c_1,c_2,...,c_L$ be distinct codewords in $\cC$ such that $d(y,c_i)\leq n-\sqrt{n(n-d)}-1$ for every $i\in [L]$. Define a graph $G=([n],[L],E)$ to be a bipartite graph such that $(i,j)\in [n]\times [L]$ is an edge iff $y_i=(c_j)_i$.
    As $d(y,c_i)\leq n-\sqrt{n(n-d)}-1$ for every $i\in [L]$, 
    \[
    |E|\geq L(\sqrt{n(n-d)}+1)
    \]
    Any $c_i\neq c_j\in \cC$ cannot be the same as the vector $y$ in $n-d+1$ positions, otherwise $d(c_i,c_j)<d$. Hence, the graph $G$ is $K_{n-d+1,2}$ free. By Lemma~\ref{lem:zarankiewicz}, 
    \[
    L(\sqrt{n(n-d)}+1)\leq n+L(\sqrt{n(n-d)}),\qquad  L\leq n.
    \]
\end{proof}

\end{document}